\documentclass[sigplan,10pt]{acmart}
\AtBeginDocument{%
  }

\acmYear{2026}\copyrightyear{2026}
\setcopyright{cc}
\setcctype[4.0]{by}
\acmConference[EUROSYS '26]{European Conference on Computer Systems}{April 27--30, 2026}{Edinburgh, Scotland Uk}
\acmBooktitle{European Conference on Computer Systems (EUROSYS '26), April 27--30, 2026, Edinburgh, Scotland Uk}
\acmDOI{10.1145/3767295.3769342}
\acmISBN{979-8-4007-2212-7/26/04}
\acmSubmissionID{337}

\usepackage{amsthm}
\usepackage{amsmath}
\usepackage{amsfonts}
\usepackage{graphicx}
\usepackage[capitalize,nameinlink]{cleveref}
\usepackage{acronym}
\usepackage{xspace}
\usepackage{subcaption}
\usepackage{pgfplots}
\usepackage{pgfplotstable}
\pgfplotsset{compat=1.18}
\usepgfplotslibrary{groupplots}
\usetikzlibrary{arrows,trees,patterns,positioning,shapes,fit}
\usepackage{booktabs}
\usepackage{algorithm}
\usepackage[noend]{algpseudocode}
\usepackage{listings}
\usepackage{url}
\usepackage{pifont}
\usepackage{tabularx}
\usepackage{enumitem}
\usepackage[group-separator={,}]{siunitx}
\microtypecontext{spacing=nonfrench}

\hypersetup{breaklinks=true,
  pdftitle={OptiLog},
  colorlinks=true,
  citecolor=black,
  urlcolor=black,
  linkcolor=black,
  pdfborder={0 0 0},
}

\newboolean{showcomments}
\setboolean{showcomments}{true}
\ifthenelse{\boolean{showcomments}}
{ \newcommand{\mynote}[3]{
    \textcolor{#3}{{\bfseries\sffamily\scriptsize#1: }\small#2}
}}
{ \newcommand{\mynote}[3]{}}

\newcommand{\itz}[1]{\textit{#1}}

\definecolor{revisioncolor}{rgb}{0.3,0.7,0.5}

\crefformat{section}{#2§#1#3}
\crefname{lstlisting}{Listing}{Listings}

\makeatletter
\AtBeginDocument{
  \def\ltx@label#1{\cref@label{#1}}%
  \def\label@in@display@noarg#1{\cref@old@label@in@display{#1}}%
  \def\label@in@mmeasure@noarg#1{%
    \begingroup%
      \measuring@false%
      \cref@old@label@in@display{#1}%
    \endgroup}%
}
\makeatother

\acrodef{BFT}{Byzantine Fault-Tolerant}
\acrodef{RSM}{Replicated State Machine}
\acrodef{GST}{\textit{global stabilization time}}

\newcommand{\sysname}{Opti\-Log\xspace}
\newcommand{\Sysname}{Opti\-Log\xspace}

\newcommand{\consmod}{consensus engine\xspace}

\newcommand{\sensapp}{sensor app\xspace}

\newcommand{\Monapps}{Monitors\xspace}

\newcommand{\Sensors}{Sensors\xspace}
\newcommand{\sensor}{sensor\xspace}
\newcommand{\sensors}{sensors\xspace}
\newcommand{\missensor}{Misbehavior\-Sensor\xspace}
\newcommand{\latsensor}{Latency\-Sensor\xspace}
\newcommand{\sussensor}{Suspicion\-Sensor\xspace}
\newcommand{\cfgsensor}{Config\-Sensor\xspace}

\newcommand{\monitor}{monitor\xspace}
\newcommand{\monitors}{monitors\xspace}
\newcommand{\mismonitor}{Mis\-be\-havior\-Monitor\xspace}
\newcommand{\latmonitor}{Latency\-Monitor\xspace}
\newcommand{\susmonitor}{Suspicion\-Monitor\xspace}
\newcommand{\cfgmonitor}{Config\-Monitor\xspace}
\newcommand{\validt}{correct\xspace}

\newcommand{\configsensor}{configuration \sensor\xspace}

\newcommand{\optitree}{Opti\-Tree\xspace}
\newcommand{\oware}{Opti\-Aware\xspace}
\newcommand{\tbc}{$t$-Bounded Conformity\xspace}

\newcommand{\node}{\ensuremath{\mathit{node}}}

\newcommand{\signed}[2]{\ensuremath{\langle m \rangle_i}}
\newcommand{\score}{\texttt{score}\xspace}
\newcommand{\scorefn}{\texttt{score($\cdot$)}\xspace}
\newcommand{\susp}[2]{\ensuremath{#1\!\dashrightarrow\!#2}\xspace}
\newcommand{\susptwo}[2]{\ensuremath{#1\!\dashleftrightarrow\!#2}\xspace}
\newcommand{\suspm}[3]{\ensuremath{\langle \textsc{#1},\ \susp{#2}{#3} \rangle}\xspace}
\newcommand{\latm}[3]{\ensuremath{\langle \textsc{#1},\ \lat{#2}{#3} \rangle}\xspace}
\newcommand{\compm}{\ensuremath{\langle \textsc{Complaint},\ B \rangle}\xspace}
\newcommand{\Cfg}{\ensuremath{\mathit{\Gamma}}}
\newcommand{\cfgm}{\ensuremath{\langle \textsc{Config},\ \Cfg_A \rangle}\xspace}

\theoremstyle{definition}
\newtheorem{definition}{Definition}
\newtheorem{thm}{Theorem}
\newtheorem{lemma}[thm]{Lemma}
\AddToHook{env/lemma/begin}{\crefalias{thm}{lemma}}

\algrenewcommand{\algorithmiccomment}[1]{\hfill$\{$#1$\}$}

\algblockdefx[On]{On}{EndOn}
[1]{{\bf On} #1}
{EndOn}
\algnotext{EndOn}

\algsetblockdefx[State]{STate}{ESTate}
{}{0pt}
[1]{{\bf State} #1}
{EndState}
\algnotext{ESTate}

\algblockdefx[Init]{Init}{EInit}
[1]{{\bf init}$(#1)$}
{EndInit}
\algnotext{EInit}

\newcommand{\n}{\ensuremath{\mathit{n}}\xspace}      %
\newcommand{\q}{\ensuremath{\mathit{q}}\xspace}      %
\newcommand{\f}{\ensuremath{\mathit{f}}\xspace}      %
\newcommand{\tn}{\ensuremath{\mathit{t}}\xspace}     %
\newcommand{\bn}{\ensuremath{\mathit{b}}\xspace}     %
\newcommand{\kn}{\ensuremath{\mathit{k}}\xspace}     %
\newcommand{\un}{\ensuremath{\mathit{u}}\xspace}     %
\newcommand{\inn}{\ensuremath{\mathit{i}}\xspace}    %
\newcommand{\winlen}{\ensuremath{\mathit{w}}\xspace} %

\newcommand{\na}{\ensuremath{\mathit{A}}\xspace}     %
\newcommand{\nb}{\ensuremath{\mathit{B}}\xspace}     %
\newcommand{\nc}{\ensuremath{\mathit{C}}\xspace}     %
\newcommand{\nd}{\ensuremath{\mathit{D}}\xspace}     %
\newcommand{\nv}{\ensuremath{\mathit{V}}\xspace}     %
\newcommand{\nr}{\ensuremath{\mathit{R}}\xspace}     %
\newcommand{\nin}{\ensuremath{\mathit{I}}\xspace}    %
\newcommand{\nleaf}[1]{\ensuremath{\mathit{T}_{#1}}\xspace}  %
\newcommand{\nl}{\ensuremath{\mathit{L}}\xspace}     %
\newcommand{\naT}{\ensuremath{\mathit{A_t}}\xspace}  %
\newcommand{\nbC}{\ensuremath{\mathit{B_c}}\xspace}  %
\newcommand{\cn}[1]{\ensuremath{\mathit{N}_{#1}}\xspace}  %
\newcommand{\susn}[1]{\ensuremath{\mathit{S}_{#1}}\xspace}  %

\newcommand{\sens}[1]{\ensuremath{\mathsf{S}_{#1}}\xspace}  %
\newcommand{\mons}[1]{\ensuremath{\mathsf{M}_{#1}}\xspace}  %

\newcommand{\df}[1]{\ensuremath{\delta \cdot #1}}               %
\newcommand{\rdur}{\ensuremath{d_{rnd}}}                        %
\newcommand{\dm}{\ensuremath{d_m}}                              %

\newcommand{\lm}{\ensuremath{\mathbf{L}}\xspace}             %
\newcommand{\lme}[2]{\ensuremath{\mathbf{L}_{#1,#2}}\xspace} %

\newcommand{\C}[1]{\ensuremath{\Pi_{#1}}\xspace}             %
\newcommand{\Q}{\ensuremath{\mathcal{Q}}\xspace}             %
\newcommand{\T}{\ensuremath{\mathcal{T}}\xspace}             %
\newcommand{\Crash}{\ensuremath{\mathcal{C}}\xspace}         %
\newcommand{\Internal}{\ensuremath{\mathcal{I}}\xspace}      %
\newcommand{\Leaf}{\ensuremath{\mathcal{L}}\xspace}          %
\newcommand{\Intermediate}{\ensuremath{\mathcal{M}}\xspace}  %
\newcommand{\tree}{\ensuremath{\tau}\xspace}                 %
\newcommand{\IS}{\ensuremath{IS}\xspace}                     %
\newcommand{\Ch}[1]{\ensuremath{\textnormal{Ch}(#1)}\xspace} %
\newcommand{\G}{\ensuremath{\mathcal{G}}\xspace}             %
\newcommand{\V}{\ensuremath{\mathcal{V}}\xspace}             %
\newcommand{\E}{\ensuremath{\mathcal{E}}\xspace}             %
\newcommand{\MG}{\ensuremath{\E_d}\xspace}                   %
\newcommand{\Cand}{\ensuremath{\mathcal{K}}\xspace}          %
\newcommand{\Faulty}{\ensuremath{\mathcal{F}}\xspace}        %

\newcommand{\lat}[3][r]{\ensuremath{\mathit{L_{#1}(#2,#3)}}}     %
\newcommand{\agt}[1]{\ensuremath{\mathit{L_{agg}(#1)}}}          %
\newcommand{\sqn}{\ensuremath{\sqrt{\n}}}                        %

\definecolor{dark-blue}{rgb}{0.18, 0.33, 0.59}
\definecolor{light-blue}{rgb}{0.56, 0.67, 0.86}

\definecolor{dark-red}{rgb}{0.49, 0, 0}
\definecolor{light-red}{rgb}{0.75, 0, 0}

\definecolor{dark-orange}{rgb}{0.77, 0.35, 0.07}
\definecolor{light-orange}{rgb}{0.92, 0.49, 0.19}

\definecolor{dark-yellow}{rgb}{1, 0.75, 0}
\definecolor{light-yellow}{rgb}{1, 0.85, 0.4}

\definecolor{dark-green}{rgb}{0.66, 0.81, 0.56}
\definecolor{light-green}{rgb}{0.77, 0.88, 0.71}

\definecolor{db1}{rgb}{0.08,0.14,0.25}
\definecolor{lb1}{rgb}{0.12,0.22,0.39}

\definecolor{db2}{rgb}{0.23,0.3,0.44}
\definecolor{lb2}{rgb}{0.21,0.32,0.56}

\definecolor{db3}{rgb}{0.29,0.48,0.64}
\definecolor{lb3}{rgb}{0.51,0.58,0.73}

\definecolor{db4}{rgb}{0.55,0.67,0.86}
\definecolor{lb4}{rgb}{0.71,0.78,0.91}

\settopmatter{printfolios=true} 

\begin{document}

\title{OptiLog: Assigning Roles in Byzantine Consensus}

\author{Hanish Gogada}
\affiliation{%
    \institution{University of Stavanger}
    \country{Norway}
}
\email{hanish.gogada@uis.no}

\author{Christian Berger}
\affiliation{%
    \institution{Friedrich-Alexander-Universität Erlangen-Nürnberg}
    \country{Germany}
}
\email{christian.g.berger@fau.de}

\author{Leander Jehl}
\affiliation{%
    \institution{University of Stavanger}
    \country{Norway}
}
\email{leander.jehl@uis.no}

\author{Hans P. Reiser}
\affiliation{%
    \institution{Reykjavik University}
    \country{Iceland}
}

\author{Hein Meling}
\affiliation{%
    \institution{University of Stavanger}
    \country{Norway}
}

\begin{abstract}
Byzantine Fault-Tolerant (BFT) protocols play an important role in blockchains.
As the deployment of such systems extends to wide-area networks, the scalability of BFT protocols becomes a critical concern.
Optimizations that assign specific roles to individual replicas can significantly improve the performance of BFT systems.
However, such role assignment is highly sensitive to faults, potentially undermining the optimizations' effectiveness.

To address these challenges, we present \sysname, a logging framework for collecting and analyzing measurements that help to assign roles in globally distributed systems, despite the presence of faults.
\sysname presents local measurements in global data structures, to enable consistent decisions and hold replicas accountable if they do not perform according to their reported measurements.

We demonstrate \sysname's flexibility by applying it to two BFT protocols:
(1)~Aware, a highly optimized PBFT-like protocol, and
(2)~Kauri, a tree-based protocol designed for large-scale deployments.
\sysname detects and excludes replicas that misbehave during consensus and thus enables the system to operate in an optimized, low-latency configuration, even under adverse conditions.
Experiments show that for tree overlays deployed across 73 worldwide cities, trees found by \sysname display 39\% lower latency than Kauri.
\end{abstract}

\begin{CCSXML}
<ccs2012>
   <concept>
       <concept_id>10010520.10010575</concept_id>
       <concept_desc>Computer systems organization~Dependable and fault-tolerant systems and networks</concept_desc>
       <concept_significance>500</concept_significance>
       </concept>
   <concept>
       <concept_id>10002944.10011123.10011674</concept_id>
       <concept_desc>General and reference~Performance</concept_desc>
       <concept_significance>300</concept_significance>
       </concept>
   <concept>
       <concept_id>10010147.10010919.10010172</concept_id>
       <concept_desc>Computing methodologies~Distributed algorithms</concept_desc>
       <concept_significance>100</concept_significance>
       </concept>
 </ccs2012>
\end{CCSXML}

\ccsdesc[500]{Computer systems organization~Dependable and fault-tolerant systems and networks}
\ccsdesc[300]{General and reference~Performance}
\ccsdesc[100]{Computing methodologies~Distributed algorithms}
\keywords{Byzantine fault tolerance, consensus, low latency, adaptiveness, weighted replication, BFT forensics, client speculation, blockchain}

\keywords{Byzantine Fault Tolerance, Consensus, Reliability, Performance Optimization, Adaptive Systems, Reconfiguration Framework, Low Latency, Fault Detection}

\maketitle
\section{Introduction}

\ac{BFT} consensus protocols enable distributed systems that can tolerate arbitrary failures, making them a critical component for blockchain systems that distribute trust among participants that validate transactions and maintain the ledger.
Therefore, to reduce reliance on a few participants and bolster security, scaling BFT protocols beyond their traditional capacity~\cite{pbft, zyzzyva} has become crucial.

Numerous studies have tackled the scalability challenges using various optimization strategies effective under specific favorable conditions~\cite{sbft, algorand, tendermint, kauri, mirbft}.
However, in adverse scenarios, these optimizations become less effective or even defunct, necessitating a switch~\cite{Abstract, adapt} to a more resilient but less effective protocol, like PBFT~\cite{pbft}, to make progress.
It is, however, non-trivial to detect whether or not the operating conditions are favorable or adversarial.
Existing work utilizes inefficient techniques based on randomness~\cite{algorand, kauri, mytumbler} or predefined assignments~\cite{mirbft}, requiring repeated trial-and-error to find a working system configuration.
These systems often ignore the actual operating conditions, such as the latency between replicas and prior misbehavior, resulting in poor performance.
Many systems~\cite{bft-smart,local1,local2,local3,eval} that consider the operating conditions rely only on local measurements to select a system configuration.
This is problematic because local measurements taken at different replicas may be inconsistent, making it difficult to derive a global configuration.
Moreover, delegating configuration decisions to a single replica, e.g., the leader, based on its own measurements, erodes accountability because other replicas cannot verify that the decision was based on actual measurements.

This paper advocates for a holistic, measurement-based approach to accurately identify the current operating conditions, promoting aggressive use of efficient protocols and reducing fallback to less efficient ones.
We accomplish this through a shared append-only log of measurements.

\begin{figure}[t]
  \centering
  \includegraphics[width=\columnwidth]{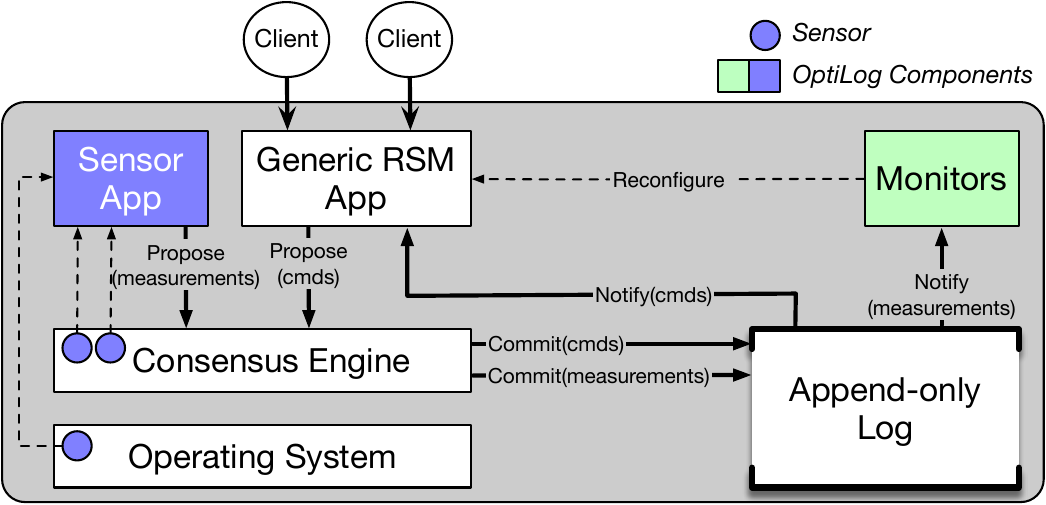}
  \caption{\sysname's component architecture.}
  \Description{Diagram showing the architecture of \sysname, including the sensor application, sensors and monitors.}
  \label{fig:arch}
\end{figure}

We describe \sysname, an integrated shared log that records measurements and derives metrics for system configuration.
\sysname extends a generic \ac{RSM} with sensors to capture \textit{measurements} and monitors for evaluating \textit{metrics}.
Replicas instrumented with sensors append measurements to the log, while corresponding monitors collate these measurements to derive \textit{efficient configurations}.
\sysname enables replicas to make consistent configuration decisions based on the same information.
The log also provides accountability, allowing all replicas to verify decisions and recognize faulty behaviors.
\cref{fig:arch} illustrates \sysname's architecture; see \cref{sec:optiblogarch} for a detailed description.

Some optimization techniques can be costly to (deterministically) evaluate for large configuration sizes.
Thus, \sysname allows heuristic optimization techniques to be used, where the potentially non-deterministic search outcomes are logged, enabling replicas to consistently derive a global ranking of configurations.
\sysname also supports collaborative optimization techniques, where the search space is partitioned and distributed across the \ac{RSM} replicas.

Our second contribution is a pipeline of sensors and monitors for selecting functional, low-latency configurations.
Our pipeline allows replicas to predict a configuration's latency based on individual link latencies, and to use simulated annealing~\cite{simulatedannealing} to search for low-latency configurations.
The main novelty of our pipeline is a suspicion mechanism that detects timing failures and takes suspicions into account during configuration search.
In particular, our suspicion mechanism detects replicas that underperform relative to their self-reported latencies.
This allows \sysname to find efficient configurations even in the presence of performance attacks~\cite{prime,clement2009making}.

To showcase \sysname, we integrate our pipeline into two BFT protocols: Aware~\cite{aware} and Kauri~\cite{kauri}.
Aware and Kauri are based on PBFT~\cite{pbft} and HotStuff~\cite{hotstuff}, respectively, and represent optimizations of different classes of BFT protocols, each posing distinct challenges.
This demonstrates the flexibility and generality of \sysname.

\textbf{\oware.}
Aware is a modern variant of PBFT and assumes a clique topology with multiple all-to-all communication phases, where replicas respond after collecting messages from a quorum.
Each replica assigns weights to other replicas based on the measured latency between them.
The all-to-all message pattern allows any configuration with a functioning leader to remain operational, despite backup failures.
In PBFT however, distinguishing benign from intentional delays is hard, making it tricky to set a detection threshold that avoids false positives but still catches real delays.
While Aware improves on PBFT's latency, it remains vulnerable to performance attacks.
Our third contribution, \oware, applies \sysname's pipeline to Aware, yielding a system that can detect the culprits of such attacks and mitigate their impact by excluding them from consideration when assigning high voting weights and the leader role.

\textbf{\optitree.}
For our fourth contribution, we focus on BFT systems~\cite{kauri, omniledger, byzcoin} that organize their RSM replicas in a tree topology for improved scalability.
A tree topology reduces communication overhead by limiting interactions to parent and child replicas.
However, this structure is vulnerable to failures, which can fragment the network.
To counter this, tree-based protocols require fallback and reconfiguration mechanisms.
Kauri~\cite{kauri}, for example, builds multiple randomized trees to prevent targeted attacks, switching to a new tree when failures occur.
However, this approach can cause considerable delays in reaching consensus due to the number of reconfigurations needed to find a working tree.
Moreover, blindly constructing trees based on randomness can result in latency-imbalanced trees that degrade the RSM's performance.
Ideally, trees should be constructed with reliable replicas at the core and less reliable ones at the leaves.

\optitree applies \sysname's sensor and monitor pipeline to find working and performant trees.
We implemented two variants of \optitree and compared their throughput under various latency conditions.
Our experiments show that trees selected by \optitree outperform Kauri's throughput by up to $2.5\times$ and HotStuff by $2.9\times$.
Further, \optitree can reconfigure the RSM until a working tree is found, while recording the measurements in the log.
Our approach guarantees to find a working tree within at most 2\f reconfigurations for certain tree configurations, where \f is the maximum number of faulty replicas.

In summary, the main contributions of this paper are:
\begin{enumerate}%
  \item \sysname, a logging framework for systematic measurement collection and configuration optimization for RSMs.
  \item \sysname enables RSM optimizations to compute functional low-latency configurations in the presence of faults.
  \item \oware uses \sysname to detect-and-mitigate performance attacks and maintain optimal latency.
  \item \optitree uses \sysname to generate low-latency working trees with linear reconfigurations.
  \item Trees selected by \optitree boost Kauri's throughput by 67.5\% in a simulated Stellar network~\cite{stellar}.
\end{enumerate}

\section{System Model}
\label{sec:prelim}
We consider an \ac{RSM} that accepts client commands as input and produces output to clients.
The RSM may record certain events in an append-only log.
We assume an RSM with $\n\ge3\f+1$ replicas, where up to $\f$ can be Byzantine faulty, meaning they may exhibit arbitrary, potentially adversarial behavior.
Replicas that faithfully follow the protocol and do not crash are considered \textit{correct}, while others are deemed \textit{faulty}.
In every view, at least a quorum of $\q=\n-\f$ correct replicas is assumed to be available.

We assume a \textit{partially synchronous} network~\cite{partialsynchronous}, where periods of instability may cause unpredictable delays.
However, after a \ac{GST}, latencies become bounded.
We also assume there exists a known parameter $\delta$ and an \textit{actual} latency $\lat[a]{\na}{\nb}$ between correct replicas $\na$ and $\nb$, such that after GST, the round-trip time of any message between $\na$ and $\nb$ is in the interval $[\lat[a]{\na}{\nb}, \df{\lat[a]{\na}{\nb}}]$.

Note that the partial synchrony assumption only pertains to liveness; the RSM's safety is always preserved.
A faulty replica may attempt to disrupt the RSM's performance by recording incorrect measurements in the log.
However, an adversary cannot delay messages, including measurements, between correct replicas.
Thus, faulty replicas cannot interfere with measurement exchanges between correct replicas.

Let $\C{}$ denote the set of $\n$ replicas running the RSM. A \textit{configuration} is an assignment of roles to replicas, which may also encode topology information, e.g., a tree configuration.
While describing a replica's position in a tree/graph, it may also be referred to as a \textit{node}.
An RSM may \textit{reconfigure} from one configuration to another, e.g., to activate a better configuration or recover from a faulty configuration.

\section{Motivation}
\label{sec:motivation}

\begin{quote}
\textit{If you can't measure it, you can't improve it.}

\hfill ---Lord Kelvin
\end{quote}

\noindent
This section outlines several challenges with existing systems that motivate \sysname's design.

\noindent\textbf{Performance tuning.}
Optimizing the performance of RSM-based protocols necessitates precise tuning to the conditions of their deployment environment.
Performance parameters---such as the replication degree, batch size, concurrency, shard count, and network timeouts---must be carefully calibrated to balance efficiency and reliability.
For instance, RSMs operating over wide-area networks with inherently higher and less predictable latencies require more conservative timeouts compared to those on local area networks~\cite{quepaxa,atlas}.
For leader-based protocols, choosing a short timeout may lead to consecutive leader changes~\cite{mytumbler}, while a longer timeout results in slower recovery from failures.
A single leader often becomes a performance bottleneck~\cite{hovercraft,scalablesmr}, and rotating the leader-role~\cite{hotstuff} can result in significant performance degradation when some replicas are faulty~\cite{mytumbler,beegees}.
To the best of our knowledge, real-world RSM deployments lack real-time feedback mechanisms to optimize their configuration.

\noindent\textbf{Limitations of local-only measurements.}
Local measurements~\cite{local1,local2,eval}, where individual replicas capture and use measurements without sharing them with other replicas, suffer from inherent limitations.
Such local-only observations can lead to discrepancies, as replicas may draw inconsistent conclusions about their operating environment.
Furthermore, replicas cannot be held accountable for configuration decisions based on local measurements, since such decisions cannot be verified by other replicas.

\noindent\textbf{Measurement-based role assignment.}
A wide range of strategies exists for RSM optimization~\cite{berger2023sok}, including pipelining~\cite{rcc,sbft}, committee selection~\cite{dumbo,proteus,proof-of-qos}, and tree-based communication topologies~\cite{kauri, byzcoin}.
These optimizations rely on special roles, like coordinators~\cite{rcc}, collectors~\cite{sbft}, committee members~\cite{proof-of-qos} and internal nodes~\cite{kauri}.
These roles are assigned in a randomized~\cite{algorand, kauri} or predefined fashion~\cite{zorfu, mirbft, mencius}, requiring trial-and-error to find a working assignment.
This precludes informed decisions and learning from past configuration failures.

Performing end-to-end measurements to evaluate and select configurations~\cite{archer} becomes impractical as the configuration search space grows.
For example, when assigning voting weights~\cite{wheat} or positions in a tree overlay~\cite{kauri}, the number of possible configurations is exponential in the number of replicas, making it infeasible to perform end-to-end measurements for all configurations.

Aware~\cite{aware} selects low-latency configurations based on per-link latency measurements and a scoring function.
While this approach is applicable to large configuration search spaces, Aware lacks a mechanism to hold misbehaving replicas accountable when configurations do not perform as expected.
In particular, if probe messages exhibit different latencies than actual protocol messages, Aware is unable to select a suitable configuration, as we show in \cref{eval:oware:underAttack}.

Some systems~\cite{archer, clement2009making} use measurements to monitor configuration performance, while others rely on observations of past failures~\cite{spinning, prosecutor}.
However, these systems focus on leader selection and hold only the leader accountable for failed or degraded configurations.
This limitation makes them unsuitable for complex configurations, such as tree topologies~\cite{kauri}, which may fail despite a correct leader.

\noindent\textbf{Collaborative optimization.}
Optimizing RSM configurations by exploring the search space on a single replica creates a performance bottleneck.
Throughput can be improved by partitioning the search space and distributing the partitions across replicas via scatter-gather~\cite{mpi} or map-reduce~\cite{map-reduce, spark} techniques.
Yet, these distribution techniques remain surprisingly underutilized for optimizing RSM configurations.

\noindent\textbf{Non-deterministic configuration optimization.}
Finding an effective configuration can be framed as a combinatorial optimization problem.
Due to the vastness of possible configurations, finding an optimal solution is often unfeasible.
This makes heuristic approaches, including machine learning, more viable than deterministic methods.
However, due to the inherent non-determinism of these heuristic solutions, existing systems~\cite{aware} do not support non-deterministic configuration optimization.

\section{\sysname Architecture Overview}
\label{sec:optiblogarch}

This section gives an overview of \sysname's architecture, as shown in \cref{fig:arch}.
\sysname provides measurement and monitoring capabilities to optimize the performance and scalability of generic RSM applications.
The RSM uses a \textit{\consmod} to consistently replicate client commands to an \textit{append-only log}.
Optimizing the RSM requires metrics on both its state and environment.
To achieve this, \sysname augments each replica with \textit{\sensors} and \textit{\monitors}.
These components use the log to share information about the system's configuration and sensor measurements with all replicas.

\Sensors collect various measurements and pass them to the \textit{\sensapp}, which uses the \consmod to disseminate authenticated measurements to all replicas.
These measurements can come from different sources, such as the replica's operating environment or sensors integrated into the \consmod.
The \monitors act as counterparts to \sensors, receiving notifications about specific measurements and use them to compute various metrics.

\cref{fig:sensors-and-monitors} shows how monitor \mons{1} (\mons{2}) collect measurements from sensor \sens{1} (\sens{2}) at replicas \na, \nb, \nc, and \nd.
This allows \sysname replicas to make operational decisions based on deterministic information rather than local-only sensor data.
As shown in \cref{fig:sensors-and-monitors}, monitor \mons{2} makes reconfiguration decisions for the RSM based on globally recorded and agreed-upon measurements.
For example, one of our \monitors generates a latency matrix that provides insights into the latencies between all participating replicas, as discussed in \cref{sec:latency}.

\subsection{\Sensors and \Monapps}
\label{sec:sm}

This section presents \sysname's mechanisms for collecting and processing measurements, namely \sensors and \monitors.
Each sensor-monitor pair has a one-to-one relationship, as shown with \sens{1}--\mons{1} and \sens{2}--\mons{2} in \cref{fig:sensors-and-monitors}.

\begin{figure}[t]
  \centering
  \def\smsize{0.6cm}
\def\smpos{1}
\def\rsize{1cm}
\def\dist{1.5}
\def\logpos{0}
\def\logh{0.5}
\def\logw{8.9}
\def\arrowspacing{4pt}
\def\legendspacing{0.28}

\definecolor{darkcola}{RGB}{140, 140, 180} %
\definecolor{cola}{RGB}{230, 230, 250} %
\definecolor{colb}{RGB}{100, 200, 100} %
\definecolor{colc}{RGB}{200, 100, 100} %
\definecolor{cold}{RGB}{100, 180, 180} %

\tikzset{
  arrow-record-a/.style={->, thick, color=#1},
  arrow-record-b/.style={->, thick, >={latex'}, color=#1},
  arrow-notify-a/.style={<-, thick, color=#1},
  arrow-notify-b/.style={<-, thick, >={latex'}, color=#1},
  arrow-local/.style={->, thick, dashed},
  arrow-sh/.style={xshift=#1*\arrowspacing},
}

\tikzset{
  label/.style={midway, yshift=-2pt, font=\scriptsize, fill=white, fill opacity=1, text opacity=1, inner sep=1.5pt},
  legend/.style={left, font=\scriptsize},
  box/.style={draw, rectangle, minimum width=\smsize, minimum height=\smsize, pattern=crosshatch, pattern color=white},
  replica/.style={draw, rectangle, minimum width=\rsize, minimum height=\smsize},
  sensor-box/.style={box, preaction={fill=#1}},
  monitor-box/.style={box, preaction={fill=#1}},
  replica-box/.style={replica, preaction={fill=#1}},
}

\begin{tikzpicture}

  \draw[thick, fill=cola] (0.5, 1.9) node[anchor=north west] {Replica \na} rectangle (\logw, \logpos+0.5);

  \node[sensor-box=light-yellow] (S1) at (1*\dist, \smpos) {$\sens{1}$};
  \node[monitor-box=light-yellow] (M1) at (2*\dist, \smpos) {$\mons{1}$};
  \node[sensor-box=light-blue] (S2) at (3*\dist, \smpos)   {$\sens{2}$};
  \node[monitor-box=light-blue] (M2) at (4*\dist, \smpos)   {$\mons{2}$};
  \node[replica-box=light-blue] (RSM) at (5.5*\dist, \smpos) {RSM};

  \draw[arrow-local] (M2.east) -- (RSM.west) node[midway, yshift=-3pt, font=\scriptsize, above] {Reconfigure};

  \draw[arrow-local] (M1) -- (S1);
  \draw[arrow-local] (M1) -- (S2);
  \draw[arrow-local] (M1.north) to[out=60, in=120, looseness=0.5] (M2.north);

  \draw[arrow-record-a=darkcola] (S1.south) -- (1*\dist, \logpos) node[label] {Record};
  \draw[arrow-record-b=darkcola] (S2.south) -- (3*\dist, \logpos) node[label, yshift=1.5pt] {Record};

  \draw[arrow-notify-a=colb] ([arrow-sh=-1.5]M1.south) -- ([arrow-sh=-1.5]2*\dist, \logpos);
  \draw[arrow-notify-a=colc] ([arrow-sh=-0.5]M1.south) -- ([arrow-sh=-0.5]2*\dist, \logpos);
  \draw[arrow-notify-a=cold] ([arrow-sh=1.5]M1.south) -- ([arrow-sh=1.5]2*\dist, \logpos);
  \draw[arrow-notify-a=darkcola] ([arrow-sh=0.5]M1.south) -- ([arrow-sh=0.5]2*\dist, \logpos) node[label, xshift=-2pt] {Notify};

  \draw[arrow-notify-b=colb] ([arrow-sh=-1.5]M2.south) -- ([arrow-sh=-1.5]4*\dist, \logpos);
  \draw[arrow-notify-b=colc] ([arrow-sh=-0.5]M2.south) -- ([arrow-sh=-0.5]4*\dist, \logpos);
  \draw[arrow-notify-b=cold] ([arrow-sh=1.5]M2.south) -- ([arrow-sh=1.5]4*\dist, \logpos);
  \draw[arrow-notify-b=darkcola] ([arrow-sh=0.5]M2.south) -- ([arrow-sh=0.5]4*\dist, \logpos) node[label, xshift=-2pt] {Notify};

  \node[replica, fill=colb] (B) at (2*\dist, \logpos-\logh-0.9) {\nb};
  \node[replica, fill=colc] (C) at (3*\dist, \logpos-\logh-0.9) {\nc};
  \node[replica, fill=cold] (D) at (4*\dist, \logpos-\logh-0.9) {\nd};

  \draw[arrow-record-a=colb] ([arrow-sh=0.5]B.north) -- ([arrow-sh=0.5]2*\dist, \logpos-\logh);
  \draw[arrow-record-a=colc] ([arrow-sh=0.5]C.north) -- ([arrow-sh=0.5]3*\dist, \logpos-\logh);
  \draw[arrow-record-a=cold] ([arrow-sh=0.5]D.north) -- ([arrow-sh=0.5]4*\dist, \logpos-\logh);
  \draw[arrow-record-b=colb] ([arrow-sh=-0.5]B.north) -- ([arrow-sh=-0.5]2*\dist, \logpos-\logh) node[label, xshift=2pt] {Record};
  \draw[arrow-record-b=colc] ([arrow-sh=-0.5]C.north) -- ([arrow-sh=-0.5]3*\dist, \logpos-\logh) node[label, xshift=2pt] {Record};
  \draw[arrow-record-b=cold] ([arrow-sh=-0.5]D.north) -- ([arrow-sh=-0.5]4*\dist, \logpos-\logh) node[label, xshift=2pt] {Record};

  \draw[thick] (0.5, \logpos) -- (\logw, \logpos) node[midway, yshift=1pt, below] {Append-only Log};
  \draw[thick] (0.5, \logpos-\logh) -- (\logw, \logpos-\logh);

  \begin{scope}[shift={(8.25, \logpos-1.5*\logh)}]
    \coordinate (legend-pos) at (0, 0);

    \node[legend] at ($(legend-pos) + (0.5, 0)$) {\textbf{Arrow Legend:}};
    \node[legend] at ($(legend-pos) - (0, \legendspacing)$) {Sensor \sens{1}};
    \draw[arrow-record-a=black] ($(legend-pos) - (0, \legendspacing)$) -- ++(0.5, 0);
    \node[legend] at ($(legend-pos) - (0, 2*\legendspacing)$) {Sensor \sens{2}};
    \draw[arrow-record-b=black] ($(legend-pos) - (0, 2*\legendspacing)$) -- ++(0.5, 0);
    \node[legend] at ($(legend-pos) - (0, 3*\legendspacing)$) {Local data};
    \draw[arrow-local] ($(legend-pos) - (0, 3*\legendspacing)$) -- ++(0.5, 0);
  \end{scope}
\end{tikzpicture}
  \caption{Replicas (\na,\nb,\nc,\nd) with sensors and monitors.}
  \Description{Diagram illustrating the relationship between sensors and monitoring applications in \sysname.}
  \label{fig:sensors-and-monitors}
\end{figure}

A \textit{\sensor} abstracts the capture of local-only measurements related to a replica's operational context.
\Sensors can be integrated into system components, such as the \consmod, or can query the operating system for metrics such as CPU load.
A sensor can also use information from local monitors, as exemplified by the dashed arrow from \mons{1} to \sens{1} in \cref{fig:sensors-and-monitors}.
A \sensor's output is assumed to be non-deterministic and may vary across replicas.
\Sensors may also perform non-deterministic computations, e.g., to support heuristic and distributed optimization algorithms.
We refer to both sensor output and compute results as \textit{measurements}.
These measurements are recorded in the log, providing a consistent record of the system's operation.
The recorded measurements are later accessed by \monitors, as discussed below.

In a Byzantine environment, \sensors at up to \f faulty replicas may report incorrect measurements.
Consequently, any \monitor using these measurements must account for potential inaccuracies.
Yet retaining a record of incorrect measurements can be invaluable for forensic analysis.

The \textit{\monitor} is the core abstraction linking sensor measurements from individual replicas with system-wide operations, enabling dynamic adjustments to the RSM.
When a sensor records a measurement, the corresponding \monitor updates its data structures through the following steps:

\begin{enumerate}
  \item \textbf{Data Collection:}
    The monitor collects measurements from its associated sensor via the log; it may also use data from other local monitors.
  \item \textbf{Deterministic Computation:}
    The monitor processes consistent measurements by performing a deterministic computation,
    producing consistent output; referred to as metrics.
\end{enumerate}

\noindent
Since monitors on each replica operate on the same ordered set of measurements, they maintain consistent data structures across the system.
This consistency provides a consolidated, global view of the measurements, allowing replicas to coordinate system-wide operations reliably, such as activating a new tree configuration with \optitree~(\cref{sec:optitree}).
\cref{tab:sensors_monitors} summarizes the key properties of sensors and monitors.

\begin{table}[htb]
  \footnotesize
  \centering
  \caption{Summary of sensor and monitor properties.}
  \label{tab:sensors_monitors}
  \begin{tabular}{@{}p{1.8cm}|p{2.9cm}p{2.9cm}@{}}
    \toprule
    \textbf{Property}        & \textbf{Sensors}          & \textbf{Monitors}          \\ \midrule
    \textbf{Input}           & System \& local monitors  & Log \& local monitors      \\
    \textbf{Computation}     & Non-deterministic         & Deterministic              \\
    \textbf{Output}          & Variable across replicas  & Consistent across replicas \\
    \bottomrule
  \end{tabular}
\end{table}

\begin{figure*}[bt]
  \centering
  \def\smsize{0.6cm}
\def\smpos{1}
\def\rsize{1cm}
\def\dist{1.0}
\def\logpos{0}
\def\logh{0.5}
\def\logw{12.7}

\definecolor{darkcola}{RGB}{140, 140, 180} %
\definecolor{cola}{RGB}{230, 230, 250} %
\definecolor{colb}{RGB}{100, 200, 100} %
\definecolor{colc}{RGB}{200, 100, 100} %
\definecolor{cold}{RGB}{100, 180, 180} %

\tikzset{
  arrow-record-a/.style={->, thick, color=#1},
  arrow-record-b/.style={->, thick, >={latex'}, color=#1},
  arrow-notify-a/.style={<-, thick, color=#1},
  arrow-notify-b/.style={<-, thick, >={latex'}, color=#1},
  arrow-local/.style={->, thick, dashed},
  arrow-sh/.style={xshift=#1*\arrowspacing},
}

\tikzset{
  label/.style={midway, yshift=-2pt, font=\scriptsize, fill=white, fill opacity=1, text opacity=1, inner sep=1.5pt},
  legend/.style={left, font=\scriptsize},
  box/.style={draw, rectangle, minimum width=\smsize, minimum height=\smsize},
  replica/.style={draw, rectangle, minimum width=\rsize, minimum height=\smsize},
  sensor-box/.style={box, text=black, preaction={fill=#1}},
  monitor-box/.style={box, text=black, preaction={fill=#1}},
  replica-box/.style={replica, preaction={fill=#1}},
  legend-box/.style={font=\scriptsize, draw, rectangle, minimum width=4cm, minimum height=0.35cm, text=black, preaction={fill=#1}, anchor=west, inner sep=0pt, align=right, text height=1.5ex, text depth=.25ex},
}

\begin{tikzpicture}

  \draw[thick, fill=cola] (0.5, 2.1) rectangle (\logw, \logpos+0.5) node[anchor=north east] at (\logw, 2.1) {Replica \na};

  \node[sensor-box=light-yellow]  (LS) at (1*\dist, \smpos)   {\textsf{LS}};
  \node[monitor-box=light-yellow] (LM) at (2*\dist, \smpos)   {\textsf{LM}};
  \node[sensor-box=light-blue]    (MS) at (3.5*\dist, \smpos) {\textsf{MS}};
  \node[monitor-box=light-blue]   (MM) at (4.5*\dist, \smpos) {\textsf{MM}};
  \node[sensor-box=light-orange]  (SS) at (6*\dist, \smpos)   {\textsf{SS}};
  \node[monitor-box=light-orange] (SM) at (7*\dist, \smpos)   {\textsf{SM}};
  \node[sensor-box=light-green]   (CS) at (8.5*\dist, \smpos) {\textsf{CS}};
  \node[monitor-box=light-green]  (CM) at (9.5*\dist, \smpos) {\textsf{CM}};
  \node[replica-box=light-blue]  (RSM) at (12*\dist, \smpos) {RSM};

  \draw[arrow-local] (LM.north) to[out=60, in=120, looseness=0.5] (SS.north);
  \draw[arrow-local] (LM.north) node[font=\scriptsize, above] {\lm} to[out=60, in=120, looseness=0.4] (CS.north);
  \draw[arrow-local] (MM.north) node[font=\scriptsize, above] {\Faulty} to[out=60, in=120, looseness=0.5] (SM.north);
  \draw[arrow-local] (SM.east) -- (CS.west) node[midway, yshift=-2pt, font=\scriptsize, above] {\Cand, \un};
  \draw[arrow-local] (CM.east) -- (RSM.west) node[midway, yshift=-3pt, font=\scriptsize, above] {Reconfigure};

  \draw[arrow-record-a=darkcola] (LS.south) -- (1*\dist, \logpos) node[label] {\latm{Prop}{A}{B}};
  \draw[arrow-record-a=darkcola] (MS.south) -- (3.5*\dist, \logpos) node[label] {\compm};
  \draw[arrow-record-a=darkcola] (SS.south) -- (6*\dist, \logpos) node[label] {\suspm{Slow}{A}{B}};
  \draw[arrow-record-a=darkcola] (CS.south) -- (8.5*\dist, \logpos) node[label] {\cfgm};

  \begin{scope}[shift={(\logw+0.5, \logpos+0.5)}]
    \node[legend-box=light-yellow] at (0, 1.5) {\textsf{LS/LM: Latency Sensor/Monitor}};
    \node[legend-box=light-blue]   at (0, 1.0) {\textsf{MS/MM: Misbehavior Sensor/Monitor}};
    \node[legend-box=light-orange] at (0, 0.5) {\textsf{SS/SM: Suspicion Sensor/Monitor}};
    \node[legend-box=light-green]  at (0, 0.0) {\textsf{CS/CM: Config Sensor/Monitor}};
  \end{scope}
\end{tikzpicture}
  \caption{Connections between Replica \na's sensors and monitors.}
  \Description{Diagram illustrating the connections between sensors and monitors for a star topology.}
  \label{fig:star-sens-mons}
\end{figure*}

\noindent
In the following sections we present the sensors and monitors we implemented for \sysname, and \cref{fig:star-sens-mons} shows how these interact to ensure robust configurations.

\subsection{Low-Latency Role Assignment}
\label{sec:optilog-sensors}

In this section, we present a pipeline of four sensors and monitors suitable for assigning roles for low-latency configurations under Byzantine failures.
The pipeline comprises:
a \latsensor to measure link latencies,
a \sussensor to suspect replicas whose protocol messages are significantly delayed compared to their reported latencies,
a \missensor to report on provable misbehavior, and
a \cfgsensor to search for an efficient configuration.

While Aware~\cite{aware} uses latency measurements to optimize configurations, our novelty is in combining latency measurements with suspicions, both to detect protocol latencies that deviate from expectations and to incorporate suspicions into the configuration search.
Additionally, through the use of the generic sensor and monitor abstraction, we can design reusable components, as we show by integrating them into different BFT systems in \cref{sec:oware} and \cref{sec:optitree}.
Applying our sensors to different BFT protocols and topologies requires certain functionality to be provided by the underlying protocol, which we highlight in the description of each sensor.

\subsubsection{Latency Monitoring}
\label{sec:latency}

We now discuss our approach to measuring latencies between replicas.
For protocols with direct message exchanges, like HotStuff~\cite{hotstuff}, the latency sensor measures the \textit{round-trip latency} of the protocol messages themselves.
For protocols where replicas do not issue direct replies, the sensor instead relies on dedicated probe messages.
Each replica compiles its measurements into a \textit{latency vector}, which it submits to the \consmod.
Any replica that fails to reply is marked as $\infty$ in the latency vector.
We assume that the \consmod prevents the censoring of latency vectors---for example, by forcing a leader change upon message omission~\cite{pbft} or rotating leaders frequently~\cite{hotstuff}.

Each replica's \latmonitor maintains a \textit{latency matrix}~\lm, representing inter-replica latencies.
When a measurement is committed, the monitor updates the sender's row with the new latency vector.
Symmetry is preserved by setting $\lme{\na}{\nb}=\lme{\nb}{\na}=\max(\lat[r]{\na}{\nb}, \lat[r]{\nb}{\na})$, where $\lat[r]{\na}{\nb}$ is the \textit{recorded} latency between $\na$ and $\nb$.

While measuring latencies of protocol messages may reduce overhead and yield precise estimates, probe messages can fill the latency matrix faster.
In HotStuff, for instance, a rotating leader needs $2\n$ views to fill the matrix when it relies solely on protocol messages.
While measurements using protocol messages must be implemented within the protocol, the \latsensor, which uses probe messages, does not require any functionality from the underlying protocol.

We measure latency periodically.
When measurement is performed after GST, the recorded latency between correct replicas is within a factor $\delta$ of the actual latency.
No assumption is made about recorded latencies between replicas where at least one replica is faulty.
However, our suspicion sensor, presented below, is designed to detect faulty replicas that under-report their latencies.
Moreover, in the examples we consider, faulty replicas that inflate the latencies of adjacent links are less likely to be assigned critical roles.

\subsubsection{Misbehavior Monitoring}
\label{sec:mms}

A significant challenge for scalable RSMs is to stay optimized despite faulty replicas disrupting the configuration.
\sysname provides two monitors for detecting faulty replicas: one precise (this section) and one for suspicions (\cref{sec:suspicion}).

Precise detection is possible using the \textit{proof-of-misbehavior} technique~\cite{ia-ccf, BFTFD, byzid,zyzzyva, prime}.
The \missensor, integrated into the \consmod, observes the replicas' protocol-specific behavior and raises a \textit{complaint} when detecting a provable violation.
Complaints are signed and proposed via the \sensapp and logged.
Detected misbehaviors include invalid threshold signatures, proposals, votes, complaints, and equivocation.
An equivocation is identified when replicas detect the leader sending different messages despite the protocol dictating it should send identical messages.
What entails a misbehavior depends on the protocol or even the implementation. Therefore, the detection functionality for the \missensor must be provided by the underlying protocol.

When a complaint proposal is received, each replica's \mismonitor verifies it.
If valid, the monitor updates its \textit{list of provably faulty replicas} \Faulty, which is used to exclude replicas from special protocol roles, such as leader.

\subsubsection{Suspicion Monitoring}
\label{sec:suspicion}

Gathering enough evidence for a proof-of-misbehavior complaint is often infeasible.
We therefore add suspicion monitoring to detect timing and omission faults.
These suspicions are then used to build a candidate set that excludes the suspected replicas.

Our \sussensor applies to leader-driven RSM protocols where proposals are committed through repeated message exchanges.
We refer to each proposal-commit cycle as a round.
The \sussensor assumes replicas have access to synchronized clocks.

The \sussensor's main goal is to detect replicas that fail to meet their claimed latencies as reported in \lm.
Detecting these deviations is challenging, since protocol messages often do not follow a request-reply pattern.
For example, consider the message pattern in PBFT~\cite{pbft}, illustrated in \cref{fig:pbft-times}.
For replica $\na$ to determine whether the highlighted message $m$ is delayed, it must know when replica $\nb$ should have sent $m$.

To solve this issue, we let the leader timestamp its proposal when starting a round and use this timestamp as a reference point for timeouts on individual messages.
Thus, in our example, replica $\na$ would expect message $m$ to arrive a certain time after the proposal timestamp.
This timeout is shown as $\dm$ in \cref{fig:pbft-times}.

\begin{figure}[t]
  \centering
\begin{tikzpicture}
    \footnotesize
    \def\A{0}
    \def\B{0.5}
    \def\C{1.0}
    \def\L{1.5}

    \draw (0,\A) node[left]{$A$}[-latex] -- (5,\A);
    \draw (0,\B) node[left]{$B$}[-latex] -- (5,\B);
    \draw (0,\C) node[left]{$C$}[-latex] -- (5,\C);
    \draw (0,\L) node[left]{$L$}[-latex] -- (5,\L);

    \draw (0.25,\L)[thick,-latex,gray] -- (0.75,\C);
    \draw (0.25,\L)[thick,-latex,gray] -- (0.75,\B);
    \draw (0.25,\L)[thick,-latex,gray] -- (0.75,\A);

    \foreach \f/\t in { \C/\A, \B/\A,\C/\L, \C/\B, \B/\L, \A/\C, \A/\L,  \B/\C, \A/\B} {
        \draw (1,\f) [thick,-latex,gray] -- (1.5,\t);
    }
    \foreach \f/\t in { \C/\L, \C/\B, \C/\A, \B/\L, \B/\C, \B/\A, \A/\L, \A/\C, \A/\B} {
        \draw (1.75,\f) [thick,-latex,gray] -- (2.25,\t);
    }

    \draw (0.3, \A-0.1)[|-|] -- node[below]{$\rdur$}  (2.25,\A-0.1);

    \draw (2.5,\L)[thick,-latex,gray] -- (3,\C);
    \draw (2.5,\L)[thick,-latex,gray] -- (3,\B);
    \draw (2.5,\L)[thick,-latex,gray] -- (3,\A);

    \foreach \f/\t in { \C/\A, \B/\A,\C/\L, \C/\B, \B/\L, \A/\C, \A/\L,  \B/\C, \A/\B} {
        \draw (3.25,\f) [thick,-latex,gray] -- (3.75,\t);
    }
    \foreach \f/\t in { \C/\L, \C/\B, \C/\A, \B/\L, \B/\C, \B/\A, \A/\L, \A/\C, \A/\B} {
        \draw (4,\f) [thick,-latex,gray] -- (4.5,\t);
    }
    \draw (3.25,\B) [thick,-latex] -- node[right,xshift=0.1cm]{$m$} (3.75,\A);
    \draw (2.5, \A-0.1)[|-|] -- node[below]{$\dm$}  (3.75,\A-0.1);
\end{tikzpicture}
  \caption{PBFT message pattern showing expected round duration $\rdur$ and message delay $\dm$.}
  \label{fig:pbft-times}
\end{figure}

Additionally, the \sussensor checks that leaders do not delay timestamping.
It does so by verifying that consecutive proposal timestamps fall within the expected round duration, denoted $\rdur$.

The \sussensor requires the underlying protocol to provide functionality to compute the estimated round duration $\rdur$ for a configuration.
The protocol must also provide $\dm$ for each message.

The conditions for raising a suspicion are summarized in the table below.
Here, \susp{\na}{\nb} denotes that $\na$ suspects $\nb$.

\begin{table}[ht]
  \footnotesize
  \centering
  \begin{tabular}{@{}p{0.1cm}p{5.25cm}p{1.78cm}@{}}
    \toprule
    \multicolumn{2}{@{}l}{\textbf{Condition for Suspicion}}       & \textbf{Suspicion} \\ \midrule
    (a) & Proposal timestamps not within $\df{\rdur}$ & \suspm{Slow}{\na}{\nl} \\
    (b) & No message $m$ from $\nb$, $\df{\dm}$ after round start & \suspm{Slow}{\na}{\nb} \\
    (c) & False suspicion from $\nb$ on $\na$ & \suspm{False}{\na}{\nb} \\
    \bottomrule
  \end{tabular}
\end{table}

\noindent
First, \susp{\na}{\nl} if two consecutive proposal timestamps from the leader differ by more than $\df{\rdur}$.
As discussed above, this ensures that rounds are started at the correct interval.
Second, \susp{\na}{\nb} if a message $m$ from $\nb$ to $\na$ does not arrive within $\df{\dm}$ after the leader's proposal timestamp.
This mechanism also raises suspicions against the leader if it creates the proposal on time but delays sending it, thereby delaying round initialization.
The timeouts $\dm$ and $\rdur$ are adjusted by the factor $\delta$ to account for variations in latency.
Finally, if a \suspm{\_}{\nb}{\na} is raised, then $\na$ reciprocates by raising a \suspm{False}{\na}{\nb}.
The last condition helps to distinguish suspicions raised against crashed replicas, which cannot reciprocate, from suspicions against misbehaving replicas.

In complex message patterns, a single delayed message may delay subsequent messages and trigger multiple suspicions.
Such causally related suspicions are filtered in the \susmonitor.
To enable this filtering, a suspicion includes the message type and round that caused it.
For each round, the \susmonitor retains only the first suspicion, discarding suspicions from later protocol phases.
Additionally, if the leader raises suspicion for a message in round $i$, suspicions against a delayed proposal timestamp in round $i+1$ are also filtered.

The \sussensor guarantees that if a protocol does not finish a round within $\df{\rdur}$, a suspicion will be raised.
Additionally, if $\rdur$ and $\dm$ are configured correctly, then after \ac{GST}, any suspicion between correct processes will be filtered out.
Thus, the \sussensor ensures that, after \ac{GST}, a configuration either meets expected latencies or raises a useful suspicion.
Appendix~\ref{sec:suspicion_proof} formalizes the properties that $\rdur$ and $\dm$ must satisfy and provides proofs of these guarantees.

\paragraph{Candidate Selection}
The \susmonitor processes suspicions to produce a \textit{candidate set} \Cand of replicas considered correct, along with an estimate $\un$ of the number of misbehaving replicas.
To this end, the \susmonitor first removes provably faulty replicas \Faulty, reported by the \mismonitor.
Next, the \susmonitor distinguishes between crash suspicions and misbehavior suspicions to produce a good estimate $\un$.

The \susmonitor must ensure that faulty replicas causing suspicions are removed from the candidate set and that the number of misbehaving replicas is not underestimated.
Failure to do so may allow faulty replicas to repeatedly interrupt the system, by rendering a configuration dysfunctional.
On the other hand, the \susmonitor must provide a sufficiently large candidate set to ensure all critical roles can be assigned to candidates.
Further, a larger candidate set and a lower estimate of misbehaving replicas permit more choices when assigning roles and may thus result in a more efficient configuration.
We design the \susmonitor to always provide $n-f$ candidates and to ensure that if repeated suspicions are raised against faulty replicas after GST, eventually none of these replicas are included in the candidate set, as we prove in Appendix~\ref{sec:suspicion_proof}.

Condition (c) allows us to distinguish suspicions against crashed replicas from those caused by misbehaving replicas.
A crashed replica $\nb$ will be suspected by others through $\suspm{Slow}{\na}{\nb}$.
If a faulty replica $\nb$ raises $\suspm{Slow}{\nb}{\na}$ against a correct replica $\na$, then $\na$ will reciprocate with $\suspm{False}{\na}{\nb}$.
Thus, a replica suspected of being slow without raising a counter-suspicion is \textit{considered crashed}.
Conversely, if two replicas suspect each other, it remains unclear which one is faulty.
The \susmonitor keeps separate data structures for these two cases: a set $\Crash$ for crashed replicas and a \textit{suspicion graph}~$\G$ for other suspicions.
Formally, $\G=(\V,\E)$ is an undirected graph where the vertices $\V=\{\Pi\setminus\Faulty\setminus\Crash\}$ represent possible candidate replicas, and an edge $(\na, \nb) \in \E$ indicates a two-way suspicion, $\susptwo{\na}{\nb}$.

When a suspicion $\suspm{Slow}{\na}{\nb}$ is raised against a correct replica $\nb$, it may take several views before a reciprocation $\suspm{False}{\nb}{\na}$ is logged, especially if faulty replicas attempt to censor such suspicions.
We therefore treat every new suspicion between two replicas in $\V$ as a two-way suspicion by adding the corresponding edge to $\G$.
If no reciprocation is received within $\f+1$ leader changes (views), the edge is treated as a one-way suspicion, and $\nb$ is added to $\Crash$.

The \susmonitor computes a maximum independent set from the vertices in $\G$ and uses it as the candidate set $\Cand$.
This computation must be deterministic, to ensure that all correct replicas reach the same conclusion.
Additionally, the monitor outputs the estimated number of misbehaving replicas, $\un=|\V|-|\Cand|$.

Before GST, unstable latencies may cause suspicions, even between correct replicas.
To prevent these suspicions from persisting indefinitely, the \susmonitor employs two mechanisms.
First, if the system remains stable with no new suspicions for \winlen views, the monitor begins removing old suspicions in the order they appear in the log.
Second, if too many suspicions are received, it also starts discarding old suspicions.
Too many suspicions occur when $\G$ no longer contains an independent set of size $\n-\f$.

\subsubsection{Configuration Monitoring}
\label{sec:config_sensor}
\Sysname's monitoring architecture is also used to trigger reconfigurations.
The \cfgsensor searches for a new configuration and proposes it via the log.
The \cfgmonitor then selects among the proposed configurations and initiates reconfiguration.

This separation has several benefits.
First, while the monitor's reconfiguration decision must be deterministic and taken by all replicas, the sensor's search for a better configuration can be performed by a small subset of replicas.
Second, when the search space is large, the sensor can use more efficient non-deterministic search algorithms.
Lastly, the search space can be divided among the replicas.

The \cfgsensor utilizes measurements from the other \monitors to find a new configuration.
We say that a configuration is valid if all special roles are assigned to replicas from the candidate set $\Cand$.
A search can be triggered periodically, or by an updated $\Cand$ that invalidates the current configuration.
The remaining configurations are ranked using a \scorefn function, and the best-scoring configuration, \Cfg, is proposed to the log.
To achieve low-latency configurations, \scorefn predicts the round duration $\rdur$ using the latency matrix $\lm$ from the \latmonitor together with the estimate of misbehaving replicas $\un$ from the \susmonitor.
By incorporating $\un$, \scorefn can account for potential message delays caused by misbehaving replicas.
Thus, \scorefn can assume that $\un$ replicas with normal roles will not respond.
Hence, by using $\un$, \scorefn can adapt to the actual number of faults in the system, yielding significantly faster configurations compared to assuming the worst-case of $\f$ faults.

When assigning multiple roles, it quickly becomes intractable to compute the score for all configurations.
In such cases, we use a heuristic optimization technique based on simulated annealing~\cite{simulatedannealing}.
We initiate the search with a random configuration, ensuring that special roles are only selected from \Cand.
Simulated annealing uses a \textit{mutate} function to randomly swap two replicas in the configuration.
Our mutate function ensures that replicas with special roles are only swapped with other replicas from \Cand.

Simulated annealing then compares the score of the mutated configuration with that of the best configuration found so far and probabilistically selects the new configuration for further mutation.
While simulated annealing allows exploration beyond local minima, it does not guarantee a global minimum.
The search ends when the \textit{search timer} expires or when simulated annealing converges.
At this point, the \cfgsensor logs the best configuration found.

We note that our search procedure is not deterministic. Indeed, when different replicas start the search with different initial configurations, it is more likely that one of them finds the best configuration.

When the \cfgmonitor receives a configuration proposal \Cfg, it first checks that the proposal represents a \textit{valid configuration}.
The monitor also checks if the current configuration has become invalid, e.g., due to an updated \Cand from the \susmonitor.
If the current configuration is invalid, reconfiguration is needed.
However, the monitor waits for at least $\f+1$ replicas to propose new configurations before selecting the one with the best score.
This prevents reconfiguring to a potentially suboptimal configuration proposed by a faulty replica.
The monitor can trigger a reconfiguration even if the current configuration is still valid but has been active for a while.
In this case, the monitor requires the new configuration to have a significantly better score to avoid frequent reconfigurations that could disrupt the system.
We note that a replica can validate the score of configurations proposed by others and independently decide which configuration to accept only because measurements are recorded in the log and metrics are consistent across replicas.

\section{\oware}
\label{sec:oware}
We first study how \sysname{} can be applied to BFT-SMaRt~\cite{bft-smart}, a modern PBFT-like replication library that features many optimizations.
These optimizations are often realized through protocol extensions such as Wheat~\cite{wheat} and Aware~\citep{aware}, which we use for our study.
Both protocols assign higher voting weights to some replicas.
These weighted votes enable faster consensus, especially when $\n>3\f+1$, as fewer replies are needed to make progress in favorable conditions.

Aware extends Wheat with infrastructure for measuring round-trip latencies using probe messages.
It also adds a reconfiguration framework that reassigns the leader role and weights to minimize a configuration's latency.
However, Aware does not include infrastructure to detect misbehaving replicas, for example replicas that respond quickly to probes but handle protocol messages slowly.

\oware applies \sysname to Aware.
\oware's main novelty is the addition of misbehavior and suspicion monitoring.
Further, when searching for a configuration, \oware avoids replicas not in the candidate set.

As mentioned in \cref{sec:suspicion} and \cref{sec:config_sensor}, the main functionality necessary to apply \sysname to a protocol are
\begin{enumerate}
\item provide a \scorefn function for configuration search
\item provide a procedure to estimate \rdur ~and $\dm$.

\end{enumerate}
Aware already provides a \scorefn function that computes the round duration \rdur ~from the latency matrix~\lm.
We use that function in \oware.
The \scorefn function computes the estimated time for the arrival of messages in a round, ignoring the slowest messages, when a quorum is formed.
For the duration estimate $\dm$ we use the same approach as \scorefn, but only consider messages causally preceding the sending of a message $m$ in a round.

By integrating \sysname{} into Aware, we can achieve an adaptive BFT protocol that can optimize its configuration to achieve low latency in a WAN (which is what Aware does) \textit{and} react to malicious delays and omissions carried out by Byzantine replicas that try to slow down the system.

\section{\optitree}
\label{sec:optitree}

This section introduces \optitree, a variant of Kauri~\cite{kauri}, utilizing \sysname to construct \validt, low-latency trees.

\subsection{Preliminaries for \optitree}
\label{sec:optitree-prelim}

\subsubsection{A Brief Primer on Kauri}
\label{sec:kauri}

This section gives an overview of Kauri and its reconfiguration approach.
To improve scalability, Kauri replaces HotStuff's star topology with a tree topology (\cref{fig:plain-tree}), where leader \nr disseminates proposals top-down and aggregates votes bottom-up.
We refer to replicas with a specific role or placement in the tree as nodes.
In this structure, Kauri's \textit{internal nodes}, $\Internal = \{\nr, \nin_1, \ldots, \nin_b\}$, manage only $\bn$ child nodes, where $\bn$ is the tree's branch factor.
As a result, the internal nodes experience a much lower load than HotStuff's leader, which must interact with $\n-1$ replicas.
Let $\Intermediate = \Internal \setminus \{\nr\}$ denote the \textit{intermediate nodes}, and $\Leaf = \C{} \setminus \Internal = \{\nleaf{1}, \ldots, \nleaf{b^2}\}$ the \textit{leaf nodes}.

\begin{figure}[hbt]
  \centering
  \def\levelsep{.7} %
\def\nodesep{.7}  %

\footnotesize
\begin{tikzpicture}
    [vertex/.style={circle, draw=black, fill=black!10!white, thick, inner sep=0pt, minimum size=4mm, align=center},
        new/.style={draw=red, fill=red!30!white},
        exclude/.style={fill=red!40!white},
        tree/.style={draw=blue!60!white, thick},
        label/.style={right=4pt}]

\coordinate (a) at (4 * \nodesep, \levelsep);
\coordinate (b) at (1 * \nodesep, 0);
\coordinate (c) at (4 * \nodesep, 0);
\coordinate (d) at (7 * \nodesep, 0);
\coordinate (e) at (0 * \nodesep, -\levelsep);
\coordinate (f) at (1 * \nodesep, -\levelsep);
\coordinate (g) at (2 * \nodesep, -\levelsep);
\coordinate (h) at (3 * \nodesep, -\levelsep);
\coordinate (i) at (4 * \nodesep, -\levelsep);
\coordinate (j) at (5 * \nodesep, -\levelsep);
\coordinate (k) at (6 * \nodesep, -\levelsep);
\coordinate (l) at (7 * \nodesep, -\levelsep);
\coordinate (m) at (8 * \nodesep, -\levelsep);

\draw[tree] (a)--(b);
\draw[tree] (a)--(d);
\draw[tree] (b)--(e);
\draw[tree] (b)--(f);
\draw[tree] (d)--(k);
\draw[tree] (d)--(m);
\draw[tree] (a)--(c);
\draw[tree] (b)--(g);
\draw[tree] (c)--(h);
\draw[tree] (c)--(j);
\draw[tree] (c)--(i);
\draw[tree] (d)--(l);

\node[vertex] at (a) {$\nr$};
\node[vertex] at (b) {$\nin_1$};
\node[vertex] at (c) {$\nin_2$};
\node[vertex] at (d) {$\nin_3$};
\node[vertex] at (e) {$\nleaf{1}$};
\node[vertex] at (f) {$\nleaf{2}$};
\node[vertex] at (g) {$\nleaf{3}$};
\node[vertex] at (h) {$\nleaf{4}$};
\node[vertex] at (i) {$\nleaf{5}$};
\node[vertex] at (j) {$\nleaf{6}$};
\node[vertex] at (k) {$\nleaf{7}$};
\node[vertex] at (l) {$\nleaf{8}$};
\node[vertex] at (m) {$\nleaf{9}$};

\end{tikzpicture}
  \caption{Tree with $\n=13$ replicas and branch factor $\bn=3$.}
  \Description{Diagram showing a tree with 13 replicas and a branch factor of 3.}
  \label{fig:plain-tree}
\end{figure}

Kauri exploits the parallelism of its tree topology through pipelining.
Specifically, the root can initiate multiple consensus instances concurrently, resulting in higher throughput than a comparable star topology.

However, it is hard to construct trees in the presence of faults.
That is, the internal nodes of a tree can control both the dissemination and aggregation stages of the protocol.
In a star topology, assuming an honest leader, protocol execution may succeed despite some faulty votes.
This may not be the case for a tree topology.
To address this concern, Kauri forms a \validt tree using a reconfiguration algorithm based on \tbc~\cite{kauri}.
This method involves dividing the replicas into $\tn$ disjoint bins, assuming that $\f<\tn$, where $\tn=\n/\inn$ and $\inn$ is the number of internal nodes.
Kauri then constructs trees using these bins, with replicas in each bin serving as internal nodes and the rest as leaf nodes.
If $\f<\tn$, there is always at least one fault-free bin, ensuring at least one tree with correct internal nodes.
If Kauri is unable to establish a \validt tree within $\tn$ reconfiguration trials, it reverts to using a star topology.
Neiheiser et al.~\cite{kauri} show that a three-level tree can guarantee liveness if the internal nodes are correct.
Their reconfiguration algorithm exploits this guarantee by iterating through trees constructed from disjoint sets of internal nodes.

\subsubsection{Kauri's Challenges}

In a wide-area environment, Kauri may face several challenges:
1)~A single faulty replica may exclude many other replicas from being considered as internal nodes.
Moreover, if latencies are not uniform, the performance of successive trees may degrade significantly.
We show this in \cref{sec:security-latency}.
2)~Kauri reconfigures only when a tree fails to collect $\q$ replies.
For example, if a low-latency subtree fails, Kauri may wait for a higher-latency subtree to respond, resulting in degraded performance.
3)~Kauri's reconfiguration approach only supports $\mathcal{O}(\sqn)$ reconfigurations.

To address Challenge~1, \optitree uses the suspicion sensor of \sysname to eliminate faulty replicas from the candidate set individually or paired with at most one correct replica.
For Challenge~2, the estimate on faulty nodes $\un$ provided by the \susmonitor in \sysname allows \optitree to successively adjust the failure threshold of the tree.
Further, we prove that in any scenario with up to $\f$ faults, \optitree uses at most $2\f$ reconfigurations to find a \validt tree, addressing Challenge~3.

\subsection{Overview of \optitree}

\optitree finds \validt, low-latency trees for large-scale RSM deployments.
A \textit{\validt tree} is one where the internal nodes are correct, ensuring that the tree can collect a quorum of votes.
\optitree assigns the roles of root, intermediate node, and leaf node connected to a specific internal node.
As observed in Kauri~\cite{kauri}, a tree remains functional as long as all internal nodes (root and intermediate) are correct.
We therefore only consider the roles of internal nodes as special, ensuring that \sysname assigns these roles to candidate replicas considered correct (\Cand).

However, an arbitrary \validt tree may be suboptimal, and so \optitree additionally aims to minimize the tree's latency based on the replicas' recorded latencies.
We define \textit{tree latency} as the time it takes for the root to disseminate the proposal and aggregate a quorum of votes.
A tree fails when it cannot collect a quorum of votes within a timeout.
In the following, we explain how \sysname's sensors and monitors are adjusted to match the characteristics of a tree topology.

In \cref{sec:ol-trees}, we show how to provide the necessary functionality to apply \sysname to Kauri.
In \cref{sec:tree-suspicion}, we describe a new variant of the \susmonitor that finds a working tree in fewer iterations.
This shows how the \sysname architecture can further be adjusted to a specific use case.

\subsection{\sysname for Tree Topologies}
\label{sec:ol-trees}

To apply \sysname to assign roles for a tree topology, we must implement the functionality highlighted in \cref{sec:optilog-sensors}, namely a \scorefn function for configuration search and a procedure to estimate \rdur~and $\dm$.
Further, for \optitree we also augment the \mismonitor with a special rule, as described below.

Our \scorefn function is defined in \cref{def:score}.
It computes the minimum latency required to collect votes from $\kn=\q+\un$ nodes in a tree, where $\un$ is the estimate of faulty nodes provided by the \susmonitor.
This estimate ensures that the tree can meet the predicted latency despite up to $\un$ unresponsive nodes.

\begin{definition}\label{def:score}
  For a tree $\tree$, the $\score(\kn, \tree)$ is the minimum latency required to collect votes from $\kn=q+\un$ nodes.

  This score is computed using the link latencies in the tree, derived from $\lm$, taking into account that intermediate nodes first collect votes from their children before forwarding them to the root.

  The \textit{aggregation latency} $\agt{\nin}$, for intermediate node $\nin$, is the maximum latency from $\nin$ to any of its children, \Ch{\nin}:
  \[
    \agt{\nin} = \max_{\nv \in \Ch{\nin}} \lme{\nin}{\nv}
  \]
  Let $\Intermediate_k$ represent subsets of all intermediate nodes such that the subtrees rooted at these nodes contain a total of at least $k$ nodes:
  \[
    \Intermediate_k = \{M \subset \Intermediate \mid \sum_{\nin \in M} |\Ch{\nin}| + 1 \geq k \}
  \]
  Thus, the time to collect $\kn$ votes is the minimum time to collect aggregates from subtrees in a set from $\Intermediate_{\kn-1}$, since the root $\nr$'s vote is added separately.
  \[
    \score(\kn, \tree) = \min_{M \in \Intermediate_{\kn-1}} \left( \max_{\nin \in M} \left( \agt{\nin} + \lme{\nin}{\nr} \right) \right)
  \]
\end{definition}

The \scorefn function is also used to estimate the round duration $\rdur$.
For message duration estimate $\dm$ we sum the latency of causally preceding messages in a round.

\optitree also provides an additional rule for misbehavior monitoring of invalid vote aggregation.
In a tree configuration, intermediate nodes are responsible for aggregating votes from their children and forwarding them to the root.
However, if an intermediate node does not receive a vote from a child node, the aggregate vote must include a suspicion for the missing vote.
That is, the aggregate must include $\bn + 1$ votes or suspicions.
Failure to meet this requirement allows the root to use the aggregation vote as \textit{proof-of-misbehavior} against the intermediate node.

We note that, as an additional optimization, the suspicion sensor can be simplified in \optitree.
For example, for suspecting leaf nodes, intermediate nodes can rely on measuring round trip times, rather than relying on the delay relative to proposal timestamps.
Additionally, due to the added misbehavior rule, intermediate nodes need to either include a leaf's vote on time or suspect that leaf.
Therefore, suspicion monitoring for condition (b) in \cref{sec:suspicion} can be omitted at leaf nodes.

\subsection{Candidates for Trees}
\label{sec:tree-suspicion}

\optitree assigns the roles of internal nodes to replicas in the candidate set \Cand.
However, since a configuration has only $b+1\approx \sqrt{n}$ many internal nodes, \optitree does not require that the candidate set contains $\n-\f$ replicas, as is the case in \sysname.
In the following, we describe an updated \susmonitor that computes a smaller candidate set but guarantees that the replicas in that candidate set are correct after fewer reconfigurations (at most $2\f$) compared to the approach in \cref{sec:suspicion}.
This shows the flexibility of \sysname's architecture.

When computing a maximum independent set for candidate selection (\cref{sec:suspicion}), a single intermediate node suspected by a leaf may cause the candidate set to update and require a reconfiguration.
Faulty replicas may force $\Omega(\f^2)$ reconfigurations before they are excluded from the candidate set~\cite{jehl2019quorum}.

We introduce two additional sets, \MG and \T, derived from the suspicion graph $\G$.
\MG is a maximal set of disjoint edges in $\G$.
For every edge in \MG, at least one of the adjacent vertices is a faulty replica.
We therefore exclude both replicas from $\Cand$.
Whenever an edge is added to \G, the \susmonitor checks if \MG is still maximal.
This check includes possibly removing one edge from \MG and adding two new ones.
The cost of maintaining \MG is $O(e^2)$, where $e$ is the number of edges in \G.
We also determine $\T$, the set of vertices that are not adjacent to an edge in \MG but part of a triangle in $\G$ with an edge in \MG.
We also exclude replicas in $\T$ from $\Cand$.

Finally, we return $\Cand$ containing replicas (vertices) in $\G$ that are not in $\T$ or adjacent to an edge in \MG.
We also return the estimated number of faulty replicas $\un=|\MG|+|\T|$.

\begin{figure}
  \centering\
  \footnotesize
  \begin{tikzpicture}
    [vertex/.style={circle, draw=black, fill=black!10!white,thick, inner sep=0pt, minimum size=3.5mm},
      new/.style={draw=red, fill=red!30!white},
      exclude/.style={fill=red!40!white},
      tree/.style={draw=blue!60!white, thick},
    label/.style={right=4pt}]
    \node[vertex, exclude] (A) at (0.5,0){\susn{1}};
    \node[vertex, exclude] (B) at (1.5,0){\susn{2}};
    \node[vertex, exclude] (C) at (2.5,0){\susn{3}};
    \node[vertex, exclude] (D) at (0,1){\susn{4}};
    \node[vertex, exclude] (E) at (1,1){$\naT$};
    \node[vertex] (F) at (2,1){\cn{1}};
    \node[vertex] (G) at (3,1){\cn{2}};
    \node[vertex, exclude] (H) at (0.5,2){$\nbC$};
    \node[vertex] (I) at (1.5,2){\cn{3}};
    \node[vertex] (J) at (2.5,2){$\nr$};

    \foreach \p/\c in {J/I, J/G, J/F, I/E, I/D, F/H, F/C, G/B, G/A} {
      \draw[tree] (\p) -- (\c);
    }

    \draw[thick, new] (A) -- (D);
    \draw[thick, new] (B) -- (C);
    \draw[dashed, thick, <->] (A) -- (D);
    \draw[dashed, thick, <->] (A) -- (E);
    \draw[dashed, thick, <->] (D) -- (E);
    \draw[dashed, thick, <->] (B) -- (C);
    \draw[dashed, thick, <->] (C) -- (G);
    \draw[dotted, thick, ->] (I) -- (H);

    \node[vertex,exclude] (X2) at (3.5,0){};
    \node[label] at (X2) {Excluded from internal nodes};

    \node[vertex] (X) at (3.5,0.5){};
    \node[vertex,exclude] (Y) at (4.5,0.5){};
    \draw[dotted, thick, ->] (X) -- (Y);
    \node[label] at (Y) {One-way suspicion};

    \node[vertex] (XX) at (3.5,1){};
    \node[vertex] (YY) at (4.5,1){};
    \draw[dashed, thick, <->] (XX) -- (YY);
    \node[label] at (YY) {Two-way suspicion};

    \node[vertex,exclude] (X3) at (3.5,1.5){};
    \node[vertex,exclude] (Y3) at (4.5,1.5){};
    \draw[thick, new] (X3) -- (Y3);
    \draw[dashed, thick, <->] (X3) -- (Y3);
    \node[label] at (Y3) {Two-way suspicion in $\MG$};

    \node[vertex] (X4) at (3.5,2){};
    \node[vertex] (Y4) at (4.5,2){};
    \draw[tree] (X4) -- (Y4);
    \node[label] at (Y4) {Edge in tree};

  \end{tikzpicture}

  \caption{Example of suspicions leading to removal of nodes as candidates for internal tree positions.}
  \Description{Graph depicting an example of suspicions leading to removal of nodes as candidates for internal tree positions.}
  \label{fig:suspicionsexample}
\end{figure}

\cref{fig:suspicionsexample} illustrates a suspicion graph $\G$ and the resulting tree configuration with excluded nodes.
In this example, $\MG = \{(\susn{1},\susn{4}), (\susn{2},\susn{3})\}$, as adding any other suspicion would violate the disjoint property of \MG, which is already maximal.
The set $\T = \{\naT\}$ includes nodes that form a triangle with one edge in $\MG$ and two others in $\G$.
As $\nbC$ has a one-way suspicion, it is added to $\Crash$.
Since all replicas from $\T$, $\MG$, and $\Crash$ are excluded from $\C{}$, we get $\Cand = \{\cn{1},\cn{2},\cn{3},\nr\}$.

We provide proofs of \optitree's correctness in Appendix~\ref{sec:security_analysis}.
We prove that after GST, at most $2\f$ reconfigurations are needed to form a correct tree.
Especially, if $\tn<\f$ replicas are faulty, at most $2\tn$ reconfigurations are needed.

\section{Evaluation}
\label{sec:evaluation}

In this section, we demonstrate \sysname's usefulness for BFT protocols by evaluating the use cases: \oware and \optitree.
For \oware, we show that using \sysname, particularly its suspicion monitoring, prevents Aware from performance degradation, for which we evaluate its runtime behavior under attack (§\ref{eval:oware:underAttack}).
We also assess the scalability of handling suspicions in \oware (§\ref{eval:oware:scalability}).

For \optitree, we show that integrating \sysname sensors and monitors yields better tree configurations, improving overall system performance.
After describing \optitree's implementation and experimental setup (§\ref{eval:optitree:setup}), we evaluate its performance against Kauri and HotStuff as baselines (§\ref{eval:optitree:baseline}).
We then examine the robustness of \optitree configurations under malicious replicas (§\ref{sec:security-latency}, §\ref{eval:optitree:tolerance}) and analyze how simulated annealing search time impacts overall tree latency (§\ref{eval:optitree:sa}).
Finally, we assess the overhead introduced by \sysname~(§\ref{sect:overhead}).

Artifacts used in the evaluation of \optitree and \oware, including experimental procedures and execution scripts, are available at \url{https://zenodo.org/records/17169372}.

\begin{figure}
	\centering
	\definecolor{airforceblue}{rgb}{0.36, 0.54, 0.66}

\definecolor{verylightgray}{gray}{0.95}
\pgfplotsset{
	compat=1.11,
	legend image code/.code={
		\draw[very thick, mark repeat=2,mark phase=2]
		plot coordinates {
			(0cm,0cm)
			(0.15cm,0cm)        %
			(0.3cm,0cm)         %
		};%
	}
}

 \begin{tikzpicture}
    \begin{axis}[
width= 0.95\columnwidth,
height=3.5cm,
     tick label style={font=\footnotesize},
    label style={font=\footnotesize},
    legend style={font=\footnotesize},
    ylabel={Latency [ms]},
    xmin=0,
    xmax=200,
    ymin=0,
    ymax=1200,
    xtick={0, 30, 60, 90, 120, 150, 180},
    ytick={0, 200, 400, 600, 800,1000, 1200},
    ymajorgrids=true,
    yminorgrids=true,
    xminorgrids=true,
    minor grid style={dashed,gray!50},
    minor tick num=1,
    minor tick length=0.5ex,
    legend pos=south east,
    legend columns = 2,
    legend style={at={(0.99, 0.73)}},
    legend cell align={left},
    ymajorgrids=true,
    xmajorgrids=true,
    grid style=dashed,
    tick align = inside,
    legend style={font=\footnotesize},
]

\addplot[
    color=airforceblue,
    mark=.,
    ]
    table [x=time, y=latencyOptiLog] {plots/opti-aware/data/exp1.txt};
   \legend{\tiny \textsc{\oware}}
   \filldraw[black] (axis cs:82,50) circle[radius=1.5pt];
   \node[anchor=west, font=\tiny, text=black] at (axis cs:82,50) {delay attack};

   \draw[<->, thick, black] (axis cs:84,910) -- (axis cs:135,910)
   node[midway, below, font=\tiny, text=black] {log suspicions};

   \filldraw[black] (axis cs:137,240) circle[radius=1.5pt];
   \node[anchor=west, font=\tiny, text=black] at (axis cs:137,240) {reconfigure};
\end{axis}

\begin{axis}[
	at={(0.07\columnwidth,1.85cm)}, %
	anchor=north west,
	width=0.22\columnwidth,
	height=1cm,
	scale only axis,
	xmin=20, xmax=90,
	ymin=0, ymax=250,
	xtick={20, 40, 60, 80},
	    ymajorgrids=true,
	yminorgrids=true,
	    yminorgrids=true,
	        minor grid style={dashed,gray!50},
	    minor tick num=1,
	    minor tick length=0.5ex,
	xminorgrids=true,
	ytick={0,100, 200},
	tick label style={font=\tiny},
	label style={font=\tiny},
	title style={font=\tiny},
	axis background/.style={fill=verylightgray},
	clip=true
	]

	\addplot[
	color=airforceblue,
	mark=.,
	]
	table [x=time, y=latencyOptiLog] {plots/opti-aware/data/exp1.txt};
	   \filldraw[black] (axis cs:41,220) circle[radius=1pt];
	\node[anchor=west, font=\tiny, text=black] at (axis cs:42,220) {optimization};
\end{axis}
\end{tikzpicture}

\vskip -0.3cm
	\definecolor{awesome}{rgb}{1.0, 0.13, 0.32}
\definecolor{cadet}{rgb}{0.03, 0.27, 0.49}

\definecolor{verylightgray}{gray}{0.95}
\pgfplotsset{
	compat=1.11,
	legend image code/.code={
		\draw[very thick, mark repeat=2,mark phase=2]
		plot coordinates {
			(0cm,0cm)
			(0.15cm,0cm)        %
			(0.3cm,0cm)         %
		};%
	}
}

 \begin{tikzpicture}
    \begin{axis}[
width= 0.95\columnwidth,
height=3.5cm,
     tick label style={font=\footnotesize},
    label style={font=\footnotesize},
    legend style={font=\footnotesize},
    xlabel={Time [s]},
    ylabel={Latency [ms]},
    xmin=0,
    xmax=200,
    ymin=0,
    ymax=1200,
    xtick={0, 30, 60, 90, 120, 150, 180},
    ytick={0, 200, 400, 600, 800,1000, 1200},
    ymajorgrids=true,
    yminorgrids=true,
    xminorgrids=true,
    minor grid style={dashed,gray!50},
    minor tick num=1,
    minor tick length=0.5ex,
    legend pos=south east,
    legend columns = 1,
    legend style={at={(0.99, 0.08)}},
    legend cell align={left},
    ymajorgrids=true,
    xmajorgrids=true,
    grid style=dashed,
    tick align = inside,
    legend style={font=\footnotesize},
]

\addplot[
    color=awesome,
    mark=.,
    ]
    table [x=time, y=latencyAWARE] {plots/opti-aware/data/exp1aware.txt};

   \addplot[
   color=cadet,
   mark=.,
   ]
   table [x=time, y=latencyBftsmart] {plots/opti-aware/data/exp1bftsmart.txt};

   \legend{\tiny \textsc{Aware}, \tiny \textsc{BFT-SMaRt/Pbft}}
\filldraw[black] (axis cs:82,50) circle[radius=1.5pt];
\node[anchor=west, font=\tiny] at (axis cs:82,50) {delay attack};

\end{axis}
\begin{axis}[
	at={(0.07\columnwidth,1.85cm)}, %
anchor=north west,
	width=0.22\columnwidth,
	height=1cm,
	scale only axis,
	xmin=20, xmax=90,
	ymin=0, ymax=250,
	xtick={20, 40, 60, 80},
	    ymajorgrids=true,
	yminorgrids=true,
	    yminorgrids=true,
	        minor grid style={dashed,gray!50},
	    minor tick num=1,
	    minor tick length=0.5ex,
	xminorgrids=true,
	ytick={0,100, 200},
	tick label style={font=\tiny},
	label style={font=\tiny},
	legend columns = 1,
	title style={font=\tiny},
	axis background/.style={fill=verylightgray},
	clip=true
	]

	\addplot[
	color=awesome,
	mark=.,
	]
	table [x=time, y=latencyOptiLog] {plots/opti-aware/data/exp1.txt};

	   \addplot[
	color=cadet,
	mark=.,
	]
	table [x=time, y=latencyBftsmart] {plots/opti-aware/data/exp1bftsmart.txt};

\end{axis}
\end{tikzpicture}
	\caption{Runtime attack: \textsc{\oware} vs. baselines.}
	\label{fig:optaware:exp1}
\end{figure}

\subsection{\oware Runtime Behavior}
\label{eval:oware:underAttack}
We implemented \oware inside the Aware framework~\cite{aware} by integrating \sysname's monitors. Since we now enriched  Aware with the capability to \textit{detect suspicions} (such as malicious replicas delaying messages), we are  interested in what happens when an attack occurs.
For this purpose, we evaluate \oware's runtime behavior under the known \textit{Pre-Prepare delay attack}~\cite{prime,clement2009making}, in which a Byzantine leader intentionally delays a proposal.
We are interested in its ability to detect delayed protocol messages through suspicions and reassign the leader role to maintain optimal performance.

\paragraph{Setup}
We deploy one client and one replica in each of 21 European cities using the Phantom network simulator~\cite{jansen2022co}.
Clients continuously issue requests to the replicas.
Our experiment compares the protocols BFT-SMaRt, Aware, and \oware.
We measure end-to-end request latency as observed by the clients.
During execution, a Byzantine leader launches a Pre-Prepare delay attack at approximately $t = 80\,$s to degrade system performance.

\paragraph{Observations}
\cref{fig:optaware:exp1} shows latency measurements from a representative client in Nuremberg.
Initially, for $t<40\,$s, all protocols exhibit comparable latency.
At $t = 40\,$s, Aware and \oware{} optimize their configuration (i.e., assigning the leader role and voting weights to replicas) as explained in \cref{sec:oware}.
This optimization yields a significant 35\% reduction in latency compared to BFT-SMaRt, which remains static.
The observed improvement aligns with prior findings on adaptive BFT protocols~\cite{aware,berger2024chasing}, which also demonstrate performance gains under different experimental setups.

At $t = 82\,$s, the Byzantine leader initiates the Pre-Prepare delay attack, delaying proposals to inflate client-observed latency.
All protocols experience inflated latency.
However, \oware using \sysname's SuspicionSensor can detect the delay attack and reassigns the leader role to exclude the malicious replica.
Consequently, at $t=137\,$s, \oware reconfigures to a new optimal configuration, thus restoring low latency, while BFT-SMaRt and Aware remain degraded.

\begin{figure}
	\centering
	 \definecolor{verylightgray}{gray}{0.95}
\pgfplotsset{
	compat=1.11,
	legend image code/.code={
		\draw[very thick, mark repeat=2,mark phase=2]
		plot coordinates {
			(0cm,0cm)
			(0.15cm,0cm)        %
			(0.3cm,0cm)         %
		};%
	}
}

 \begin{tikzpicture}
    \begin{axis}[
width= 0.95\columnwidth,
height=3.5cm,
     tick label style={font=\footnotesize},
    label style={font=\footnotesize},
    legend style={font=\footnotesize},
    xlabel={Configuration size $n$},
    ylabel={Time \textbf{[ms]}},
    ymode=log,
    xmin=0,
    xmax=100,
    ymin=1,
    ymax=1000,
    xtick={0, 25, 50, 75, 100},
    ytick={1, 10, 100, 1000},
    ymajorgrids=true,
    yminorgrids=true,
    xminorgrids=true,
    yminorgrids=true,
	minor y tick num=1,
    minor grid style={dashed,gray!50},
    minor tick num=4,
    minor tick length=0.5ex,
    legend pos=south east,
    legend columns = 2,
    legend style={at={(0.99, 0.73)}},
    legend cell align={left},
    ymajorgrids=true,
    xmajorgrids=true,
    grid style=dashed,
    tick align = inside,
    legend style={font=\footnotesize},
    scaled y ticks=false,
tick scale binop=\times,
tick label style={/pgf/number format/fixed},
]

\addplot[    
    color=black,
    mark=o,
    mark options={fill=black},
	mark size=1pt, 
    error bars/.cd,
        y dir=both,         %
        y explicit, 
         error bar style={red},        %
    ]
table [
    x=n,
    y expr=\thisrow{time}/1000,
    y error expr=\thisrow{std}/1000
] {plots/opti-aware/data/calcTimes_us.txt};

    \end{axis}

\begin{axis}[
	at={(0.02\columnwidth,1.8cm)}, %
	anchor=north west,
	width=3.5cm,
	height=2.5cm,
	   ylabel={Time \textbf{[$\mathbf{\mu}$s]}},
	    ylabel near ticks, yticklabel pos=right,
    ymode=log,
    xmin=4,
    xmax=25,
    ymin=10,
    ymax=1000,
    xtick={4, 10, 16, 22},
    ytick={10, 100, 1000},
    ymajorgrids=true,
    yminorgrids=true,
    xminorgrids=true,
    yminorgrids=true,
	    minor grid style={dashed,gray!50},
	    minor tick num=1,
	    minor tick length=0.5ex,
	xminorgrids=true,
	tick label style={font=\tiny},
	label style={font=\tiny},
	title style={font=\tiny},
	axis background/.style={fill=verylightgray},
	clip=true
	]

\addplot[    
    color=black,
    mark=o,
    mark options={fill=black},
	mark size=1pt, 
    error bars/.cd,
        y dir=both,         %
        y explicit, 
         error bar style={red},        %
    ]
   table [
    x=n,
    y expr=\thisrow{time},
    y error expr=\thisrow{std}
] {plots/opti-aware/data/calcTimes_us.txt};

\end{axis}    
\end{tikzpicture} %
	\caption{Suspicion graph computation time for \textsc{\oware}.}
	\label{fig:optaware:exp2}
\end{figure}

\subsection{\oware Scalability}
\label{eval:oware:scalability}
While the overhead of logging suspicions can be kept low using an efficient encoding scheme (see \cref{sect:overhead}), our primary concern here is evaluating the scalability of \oware's suspicion handling.
To this end, we measure the overhead of computing the maximum independent set in the resulting suspicion graph.
This step is necessary to identify candidate replicas for important consensus roles to be assigned during a reconfiguration.

\paragraph{Setup}
We generate random suspicion graphs for small ($n=4$) to medium-sized ($n=100$) system configurations relevant for PBFT.
Each graph encodes pairwise suspicions between replicas with edges representing mutual distrust.
For each value of $n$, we generate 100 random graphs to account for structural variance.
We compute the maximum independent set of each suspicion graph using a heuristic variant of the Bron-Kerbosch algorithm~\cite{bronkerbosch}, which detects cliques on the inverted graph.

\paragraph{Observations}
\cref{fig:optaware:exp2} plots the average time to compute the candidate set for increasing configuration sizes on a VM with two Intel Xeon Gold 6210U cores.
Execution times below 1~ms for smaller systems ($n<25$) are shown in the plot's inset.
While the computation time grows rapidly with $n$, it remains below 1~s for all configurations up to $n=100$, which we consider acceptable for typical PBFT-scale deployments.
This suggests the candidate selection process described in \cref{sec:mms} can be efficiently used for reconfigurations.

\subsection{\optitree Implementation \& Experimental Setup}
\label{eval:optitree:setup}
We implemented \sysname and \optitree in an open-source BFT framework based on HotStuff~\cite{relab-hotstuff}.
We also implemented Kauri~\cite{kauri} in the same framework for comparison.

Our primary focus is evaluating our optimizations' performance in deployment scenarios where replicas are distributed globally.
To emulate such a distribution of replicas, we integrated a module mimicking network latency between replicas.
Each replica is assigned a \itz{location} from a predefined set of cities, and messages are artificially delayed based on the recipient's location.
Our network emulator includes 220 worldwide locations, obtained from WonderProxy~\cite{wonderproxy}, with intercontinental delays ranging from 150 to 250~ms, in addition to the actual network delay of 1~ms.

Our deployment cluster consists of 30 machines, each with 32GB of RAM and 12-core Intel Xeon processors, connected via a 10Gbps TOR switch.
We deployed 4 replicas per machine for all evaluations.
We used 4 clients to send requests to replicas, and replicas batch these requests into block proposals.
We do not emulate latency between clients and replicas; their communication is instantaneous.
Several existing studies~\cite{fluidity,aware,archer} examine client placement and its impact on BFT protocol performance.
We did not address this aspect, as it is orthogonal to our work.

We use the Chained HotStuff~\cite{hotstuff} protocol for consensus.
In all experiments, a fixed leader sends proposals and collects a quorum of votes to commit the proposal.
We use blocks of 1000 proposals, each without transaction payload.
\textit{Throughput} is the number of requests that replicas commit, while \textit{consensus latency} is the time between sending a block proposal and committing the block.
Unless specified, any reference to latency refers to consensus latency.

In Kauri, the leader (root) sends proposals in a tree structure, and responses are aggregated up the tree.
We implemented pipelining in Kauri by running multiple protocol instances concurrently.
We used 3 instances for all experiments with pipelining.
\optitree's main advantage is that it is able to form trees with lower latency.
Since pipelining has little effect on latency, we run most experiments without pipelining.
In all experiments, trees have a height of 3, and the configuration size \n determines the branching factor $\bn=(\sqrt{4\n-3}-1)/2$.
Taller trees tend to incur overly high latencies.
Trees are ranked using the $\score(\kn,\tree)$ function from \cref{def:score}, with $\kn=2\f+1$ as the default.

\subsection{\optitree Baseline Comparison}
\label{eval:optitree:baseline}
This section evaluates \optitree compared to Kauri and HotStuff by measuring throughput and latency.
This evaluation shows the potential of applying \sysname to Kauri.
We conducted experiments with 21, 43, and 73 replicas, each with 4 clients.
Replicas were distributed across cities based on configuration size:
21 cities in Europe, 43 cities across Europe and North America, and 73 cities worldwide.
We used the same cities as in \cref{fig:optaware:exp1}.

For experiments with \optitree, we initially run simulated annealing for one second to search for a low-latency tree configuration.
For Kauri, tree configurations are randomly selected.
We configure the timers as described in \cref{sec:tree-suspicion} with $\delta=1$.
For comparison, Kauri's intermediate node timers are also set based on the recorded latencies \lm, rather than using a fixed timer~\cite{kauri}.
Each experiment runs for 120~seconds, with throughput and latency recorded every second.
We report the average throughput and latency over the 120~seconds, with error bars representing the 95\% confidence interval.

\begin{figure}[t]
  \begin{subfigure}{\columnwidth}
    \centering
\begin{tikzpicture}
  \begin{axis}[
      width  = \columnwidth,
      height = 3.6cm,
      tick label style={font=\footnotesize},
      label style={font=\footnotesize},
      major x tick style = transparent,
      ymajorgrids=true,
      yminorgrids=true,
      legend cell align={left},
      minor grid style={dashed,gray!50},
      minor tick num=1,
      ybar=0.5pt,
      enlarge x limits=0.17,
      bar width=6pt,
      ymajorgrids = true,
      ylabel = {Throughput [Op/s]},
      symbolic x coords = {Europe21, NA-EU43, Stellar56, Global73},
      xtick = data,
      xticklabels = {Europe21, NA-EU43, Stellar56, Global73},
      minor tick length=0.5ex,
      tick align = inside,
      every node near coord/.append style={
        anchor=west,
        rotate=90
      },
      ymin=0,
      label style={font=\footnotesize},
      legend style={at={(1,1.1)},anchor=north east,font=\scriptsize},
      legend columns=2,
      legend entries = {\optitree, \optitree(no pipeline), Kauri (pipeline), HotStuff-rr, HotStuff-fixed}
    ]

    \addplot[style={draw=dark-blue ,fill=light-blue,mark=none, postaction={pattern=north east lines}}, error bars/.cd, y dir=both, y explicit, error bar style=red]
    coordinates {
      (Europe21, 33654.594566866355) +=(0, 487.74976898752357) -=(0,487.74976898752357)
      (NA-EU43,12330.309820547469) +=(0, 83.98476319981637) -=(0,83.98476319981637)
      (Stellar56, 12133.521940111104) +=(0, 134.95502843058784) -=(0,134.955028430587848)
    (Global73, 9084.235658542237) +=(0, 37.1809124885458) -=(0,37.1809124885458)};

    \addplot[style={draw=db3,fill=lb3,mark=none, postaction={pattern=dots}}, error bars/.cd, y dir=both, y explicit, error bar style=red]
    coordinates {
      (Europe21, 12318.676840425165) +=(0, 161.75216625177927) -=(0,161.75216625177927)
      (NA-EU43, 4198.421391673457)  +=(0, 31.61862583727725) -=(0,31.61862583727725)
      (Stellar56,4143.753772831834 ) +=(0, 53.472586409533) -=(0, 53.472586409533)
    (Global73, 2993.8699831717404)  +=(0,83.44534638635014) -=(0,83.44534638635014)};

    \addplot[style={draw=dark-green,fill=light-green,mark=none, postaction={pattern=horizontal lines}}, error bars/.cd, y dir=both, y explicit, error bar style=red]
    coordinates {
      (Europe21, 21888.662130836597) +=(0, 163.99346681202223) -=(0,163.99346681202223)
      (NA-EU43, 8952.992765711671) +=(0, 66.5381311985584) -=(0, 66.5381311985584)
      (Stellar56, 7244.739147391585)  +=(0, 30.136130265871543) -=(0, 30.136130265871543)
    (Global73, 3509.289268125633)  +=(0, 34.18760334167291) -=(0, 34.18760334167291)};

    \addplot[style={draw=dark-yellow,fill=light-yellow,mark=none}, error bars/.cd, y dir=both, y explicit, error bar style=red]
    coordinates {
      (Europe21, 17375.519643530373) +=(0, 163.99346681202223) -=(0,163.99346681202223)
      (NA-EU43, 7454.008458935862) +=(0, 66.5381311985584) -=(0, 66.5381311985584)
      (Stellar56, 3933.70449727075675)  +=(0, 15.569076278815828) -=(0, 15.569076278815828)
    (Global73, 4921.482831804681) +=(0, 47.38100068871154) -=(0, 47.38100068871154)};

    \addplot[style={draw=dark-red,fill=light-red,mark=none}, error bars/.cd, y dir=both, y explicit, error bar style=red]
    coordinates {
      (Europe21, 12122.015038325724) +=(0, 96.9909100328914) -=(0,96.9909100328914)
      (NA-EU43, 4240.697656835752) +=(0, 31.5381311985584) -=(0, 31.5381311985584)
      (Stellar56, 4171.802440580422) +=(0,49.17431716543979) -=(0, 49.17431716543979)
    (Global73, 3121.7043557058496)  +=(0, 18.794694672252717) -=(0, 18.794694672252717)};

  \end{axis}
\end{tikzpicture}
    \label{fig:basethroughput}
  \end{subfigure}
  \begin{subfigure}{\columnwidth}
    \centering
\begin{tikzpicture}
  \begin{axis}[
      width  = \columnwidth,
      height = 3.6cm,
      tick label style={font=\footnotesize},
      label style={font=\footnotesize},
      major x tick style = transparent,
      ymajorgrids=true,
      yminorgrids=true,
      minor grid style={dashed,gray!50},
      minor tick num=1,
      ybar=0.5pt,
      enlarge x limits=0.17,
      bar width=6pt,
      ymajorgrids = true,
      ylabel = {Latency [s]},
      symbolic x coords = {Europe21, NA-EU43, Stellar56, Global73},
      xtick = data,
      xticklabels = {Europe21, NA-EU43, Stellar56, Global73},
      xlabel = {Configuration size \n},
      minor tick length=0.5ex,
      tick align = inside,
      every node near coord/.append style={
        anchor=west,
        rotate=90
      },
      ymin=0
    ]

    \addplot[style={draw=dark-blue ,fill=light-blue,mark=none, postaction={pattern=north east lines}}, error bars/.cd, y dir=both, y explicit, error bar style=red]
    coordinates {
      (Europe21, 0.31833330354423234) +=(0, 0.022089898238304206) -=(0,0.022089898238304206)
      (NA-EU43, 0.795794080228345) +=(0, 0.0029233321005476265) -=(0, 0.0029233321005476265)
      (Stellar56, 1.050200032083834) +=(0, 0.035720407908001794) -=(0, 0.035720407908001794)
    (Global73, 1.1831857081590877)+=(0,0.003156163926488542) -=(0,0.003156163926488542)};

    \addplot[style={draw=db3,fill=lb3,mark=none, postaction={pattern=dots}}, error bars/.cd, y dir=both, y explicit, error bar style=red]
    coordinates {
      (Europe21, 0.3016107637305245) +=(0, 0.019889642369398997) -=(0, 0.019889642369398997)
      (NA-EU43, 0.793993748463873) +=(0, 0.0029258932232787416) -=(0, 0.0029258932232787416)
      (Stellar56, 1.1208232398595783 ) +=(0, 0.039815158969099596) -=(0, 0.039815158969099596)
    (Global73, 1.1232170583992303)  +=(0,0.003866266736097579) -=(0,0.003866266736097579)};

    \addplot[style={draw=dark-green,fill=light-green,mark=none, postaction={pattern=horizontal lines}}, error bars/.cd, y dir=both, y explicit, error bar style=red]
    coordinates {
      (Europe21, 0.49469647657681703) +=(0, 0.0030159857695174486) -=(0,0.0030159857695174486)
      (NA-EU43, 1.1177166625459105) +=(0, 0.0044560265967075186) -=(0, 0.0044560265967075186)
      (Stellar56, 1.64157649551455)  +=(0, 0.0022713831671439255) -=(0, 0.0022713831671439255)
    (Global73, 1.8689663060949675) +=(0, 0.004315929086346593) -=(0, 0.004315929086346593)};

    \addplot[style={draw=dark-yellow,fill=light-yellow,mark=none}, error bars/.cd, y dir=both, y explicit, error bar style=red]
    coordinates {
      (Europe21, 0.25359201490936845) +=(0, 0.015779575925414097) -=(0, 0.015779575925414097)
      (NA-EU43, 0.5273596506738389) +=(0, 0.010936271420536103) -=(0, 0.010936271420536103)
      (Stellar56, 0.8929171811293428)  +=(0, 0.013212819404136988) -=(0, 0.013212819404136988)
    (Global73, 0.7888694746105988) +=(0, 0.0060266781138687) -=(0, 0.0060266781138687)};

    \addplot[style={draw=dark-red,fill=light-red,mark=none}, error bars/.cd, y dir=both, y explicit, error bar style=red]
    coordinates {
      (Europe21, 0.2727835109346747) +=(0, 0.00137392671073866) -=(0,0.00137392671073866)
      (NA-EU43, 0.7950001423548719) +=(0, 0.0031490298242566883) -=(0, 0.0031490298242566883)
      (Stellar56, 0.8372057640954461) +=(0, 0.004316285352429894) -=(0, 0.004316285352429894)
    (Global73, 1.0935454366557567)  +=(0, 0.003770639732434411) -=(0, 0.003770639732434411)};

  \end{axis}
\end{tikzpicture}

    \label{fig:baselatency}
  \end{subfigure}
  \caption{Throughput and latency of HotStuff, Kauri, and \optitree with different geographic distributions.}
  \Description{Plots showing throughput and latency of HotStuff, Kauri, and \optitree for different geographic distributions.}
  \label{fig:throughputlatency}
\end{figure}

\cref{fig:throughputlatency}  shows the throughput and latency results.
As expected, pipelined Kauri outperformed HotStuff, while \optitree achieved significantly better throughput than both protocols.
HotStuff-rr refers to HotStuff with round-robin leader selection, while HotStuff-fixed uses a random fixed leader.
The tree overlay in Kauri and \optitree trades reduced latency for higher throughput, but \optitree mitigates this to a large extend.
With 73 replicas distributed worldwide, \optitree increased throughput by 159\% and reduced latency by 39\%.
Given that practical deployments~\cite{solana, avalanche-network, cosmos, algorand} often involve hundreds of replicas distributed across diverse locations, \optitree provides a distinct advantage.

We repeated the experiment on a simulated Stellar~\cite{stellar} network.
Stellar provides a list of validators and their location.
At the time of the experiment, the Stellar network had 56 validators distributed worldwide.
We mapped these validator locations to the closest locations in our network emulator.
In this experiment, \optitree increased throughput by 67.5\% and reduced latency by 36\% compared to Kauri.

\subsection{Suspicions' Impact on \optitree's Latency}
\label{sec:security-latency}

In this section we analyze how suspicions and failures affect the latency of \optitree's configurations.
We also examine how faulty replicas may exploit the suspicion mechanism by raising false suspicions.
Faulty replicas can exploit the tree formation process by pre-computing the optimal tree based on the recorded latencies and then raise suspicions against the correct internal nodes.
This strategy increases tree latency by targeted elimination of a few correct replicas, forming a suboptimal tree.
The severity of such an attack depends on the latency distribution of the replicas.

We simulated the attack to assess its impact on \optitree.
Starting with the assumption that all replicas are non-faulty, we first compute a tree using simulated annealing.
We then simulate the attack by randomly selecting an internal node to raise suspicion against the root, removing both from the candidate set.
The next tree is computed with the remaining replicas as candidates and an increased $\un$.
This attack is repeated $\f$ times to analyze how \optitree's performance degrades with increasing faults in the configuration.
We compare these results with the random trees formed in Kauri, which can only reconfigure up to $\sqn$ times before reverting to a star topology.

\begin{figure}[t]
  \centering
  \centering
\begin{tikzpicture}

  \def\datafileA{plots/reconfigdata/kauri-3f-14.csv}
  \pgfplotstableread{\datafileA}\datatable
  \def\datafileB{plots/reconfigdata/kauri-sa-3f-14.csv}
  \pgfplotstableread{\datafileB}\datatable
  \def\datafileC{plots/reconfigdata/opti-reconfig-2f-u.csv}
  \pgfplotstableread{\datafileC}\datatable

  \begin{axis}[
      width  = \columnwidth,
      height = 3.6cm,
      tick label style={font=\footnotesize},
      label style={font=\footnotesize},
      legend style={at={(1,1)},anchor=north east,font=\scriptsize},
      legend columns=2,
      ymajorgrids,
      scaled y ticks=base 10:-5,
      ytick scale label code/.code={},
      xlabel={Reconfigurations},
      ylabel={Latency [s]},
      legend entries={Kauri, Kauri-sa, \optitree},
      legend cell align={left},
      major x tick style = transparent,
      ymajorgrids=true,
      yminorgrids=true,
      minor tick length=0.5ex,
      tick align = inside,
      minor grid style={dashed,gray!50},
      minor tick num=1,
      ymin=120000,
      cycle list name=exotic,
    ]

    \addplot+[error bars/.cd, y fixed, y dir=both, y explicit] table[x=reconfig, y=latency, y error=error, col sep=comma]{\datafileA};
    \addplot+[error bars/.cd, y fixed, y dir=both, y explicit] table[x=reconfig, y=latency, y error=error, col sep=comma]{\datafileB};
    \addplot+[error bars/.cd, y fixed, y dir=both, y explicit] table[x=reconfig, y=latency, y error=error, col sep=comma]{\datafileC};

  \end{axis}
\end{tikzpicture}
  \caption{Tree latency (score) with increasing number of reconfigurations. Replicas are randomly distributed across the world.}
  \Description{Plot showing the tree latency w.r.t. reconfigurations.}
  \label{fig:sa-reconfig}
\end{figure}

\cref{fig:sa-reconfig} shows the results for a configuration with 211 replicas, with each data point representing the average of 1000 simulation runs with a 95\% confidence interval.
\optitree reconfigures if more than $\un$ votes are missing, while Kauri reconfigures if $\f$ votes are missing.
We therefore report the score for collecting $\q +\un$ replicas in \optitree and $\q+\f$ replicas in Kauri.
The increased latency observed in \optitree is due to the successive rise in $\un$ and the reduced availability of candidates.
A separate investigation (see \cref{app-eval:optitree:score}) showed that at 32 suspicions the reduced number of candidates causes 65\% of the latency increase, while the rest is due to a larger factor $\un$ in the score function.
The stepwise pattern reflects the higher latency required to collect votes from additional subtrees.

We observe that Kauri can benefit from simulated annealing to improve tree formation, so we implemented a variant called \textit{Kauri-sa}.
However, this variant does not benefit from the increasing $\un$ and must account for up to $\f$ failures.
In Kauri-sa, all internal nodes are excluded from the candidate set after each reconfiguration and a new tree is formed with the remaining replicas as candidate internal nodes.
The increased latency towards the last reconfiguration in Kauri-sa is due to the reduced number of available candidates.
In contrast, even after 35 suspicions have been recorded, \optitree retains enough candidates to construct a tree with latency comparable to the initial tree formed by Kauri-sa.

To summarize, \optitree's increased resilience compared to Kauri-sa is due to \sysname's suspicion monitoring, which provides both the candidate set and the estimate $\un$.

\begin{figure}[tbp]
  \begin{subfigure}{\columnwidth}
    \centering
\begin{tikzpicture}
  \pgfplotsset{compat=newest}
  \def\datafileA{plots/wriggle/11-wriggle-th.csv}
  \pgfplotstableread{\datafileA}\datatable
  \def\datafileB{plots/wriggle/12-wriggle-th.csv}
  \pgfplotstableread{\datafileB}\datatable
  \def\datafileC{plots/wriggle/14-wriggle-th.csv}
  \pgfplotstableread{\datafileC}\datatable

  \begin{axis}[
      width  = \columnwidth,
      height = 3.5cm,
      tick label style={font=\footnotesize},
      label style={font=\footnotesize},
      legend style={font=\footnotesize},
      major x tick style = transparent,
      ymajorgrids=true,
      yminorgrids=true,
      minor grid style={dashed,gray!50},
      minor tick num=1,
      minor tick length=0.5ex,
      ybar=0.5pt,
      enlarge x limits=0.17,
      bar width=6pt,
      ylabel={Throughput [Op/s]},
      xtick={1,2,3,4},
      tick align = inside,
      every node near coord/.append style={
        anchor=west,
        rotate=90
      },
      ymin=0
    ]

    \addplot[style={draw=dark-green, fill=light-green, mark=none, postaction={pattern=dots}}, error bars/.cd, y dir=both, y explicit, error bar style=red]
    table [x=faults, y=throughput, y error plus=pluserr, y error minus=minuserr, col sep=comma] {\datafileA};

    \addplot[style={draw=dark-yellow, fill=light-yellow, mark=none}, error bars/.cd, y dir=both, y explicit, error bar style=red]
    table [x=faults, y=throughput, y error plus=pluserr, y error minus=minuserr, col sep=comma] {\datafileB};

    \addplot[style={draw=dark-red, fill=light-red, mark=none}, error bars/.cd, y dir=both, y explicit, error bar style=red]
    table [x=faults, y=throughput, y error plus=pluserr, y error minus=minuserr, col sep=comma] {\datafileC};

    \addplot[red,thick,sharp plot,update limits=false,] coordinates { (0,12318.67) (5,12318.67) };

  \end{axis}
\end{tikzpicture}
    \label{fig:wriggle-th}
  \end{subfigure}
  \begin{subfigure}{\columnwidth}
    \centering
\begin{tikzpicture}
  \pgfplotsset{compat=newest}
  \def\datafileA{plots/wriggle/11-wriggle-lt.csv}
  \pgfplotstableread{\datafileA}\datatable
  \def\datafileB{plots/wriggle/12-wriggle-lt.csv}
  \pgfplotstableread{\datafileB}\datatable
  \def\datafileC{plots/wriggle/14-wriggle-lt.csv}
  \pgfplotstableread{\datafileC}\datatable

  \begin{axis}[
      width  = \columnwidth,
      height = 3.5cm,
      tick label style={font=\footnotesize},
      label style={font=\footnotesize},
      major x tick style = transparent,
      ymajorgrids=true,
      yminorgrids=true,
      minor grid style={dashed,gray!50},
      minor tick num=1,
      minor tick length=0.5ex,
      ybar=0.5pt,
      enlarge x limits=0.17,
      bar width=6pt,
      ylabel={Latency [s]},
      xtick={1,2,3,4},
      xlabel={Faulty Internal Replicas},
      tick align = inside,
      every node near coord/.append style={
        anchor=west,
        rotate=90
      },
      ymin=0,
      legend style={at={(0,1)},anchor=north west,font=\scriptsize},
      legend columns=4
    ]

    \addlegendentry{1.1}
    \addlegendimage{ybar, fill=light-green, draw=dark-green}
    \addlegendentry{1.2}
    \addlegendimage{ybar, fill=light-yellow, draw=dark-yellow}
    \addlegendentry{1.4}
    \addlegendimage{ybar, fill=light-red, draw=dark-red}
    \addlegendentry{No Faults}
    \addlegendimage{line legend, thick, red, sharp plot}

    \addplot[red,thick,sharp plot,update limits=false] coordinates { (0,0.2616107) (5,0.2616107) };

    \addplot[style={draw=dark-green, fill=light-green, mark=none, postaction={pattern=dots}}, error bars/.cd, y dir=both, y explicit, error bar style=red]
    table [x=faults, y=latency, y error plus=pluserr, y error minus=minuserr, col sep=comma] {\datafileA};

    \addplot[style={draw=dark-yellow, fill=light-yellow, mark=none}, error bars/.cd, y dir=both, y explicit, error bar style=red]
    table [x=faults, y=latency, y error plus=pluserr, y error minus=minuserr, col sep=comma] {\datafileB};

    \addplot[style={draw=dark-red, fill=light-red, mark=none}, error bars/.cd, y dir=both, y explicit, error bar style=red]
    table [x=faults, y=latency, y error plus=pluserr, y error minus=minuserr, col sep=comma] {\datafileC};

  \end{axis}
\end{tikzpicture}
    \label{fig:wriggle-lt}
  \end{subfigure}
  \caption{Throughput and latency of \optitree in Europe21 with varying delays imposed by faulty internal nodes.}
  \Description{Plots showing throughput and latency of \optitree in Europe21 with varying delays imposed by faulty internal nodes.}
  \label{fig:wiggle-throughputlatency}
\end{figure}

\subsection{Tolerance to Malicious Delays}
\label{eval:optitree:tolerance}
As discussed in \cref{sec:suspicion}, \sysname uses the $\delta$ parameter to adjust timers based on recorded latencies.
While this mechanism is designed to tolerate latency variations, faulty replicas may artificially inflate their latency by a factor $\delta$ without raising suspicion.
We conducted experiments to assess the impact of such an attack on \optitree's performance.

In this experiment, we ran \optitree without pipelining.
The experiment involved 21 replicas forming a tree with branch factor 4.
We varied the number of faulty replicas from 1 to 4, randomly selecting them from the intermediate nodes in the tree.
These faulty replicas delay requests to leaf nodes and aggregation messages to the root by adjusting the $\delta$ parameter.
We tested $\delta$ values of 1.1, 1.2, and 1.4.

\cref{fig:wiggle-throughputlatency} shows the impact of faulty replicas on throughput and latency.
We see that a larger delay allows attackers to reduce performance significantly, reducing throughput by almost 49\%.
We note that, while a large $\delta$ enables this attack, a small $\delta$ can easily trigger false suspicions.

The above experiments use simulated network latencies.
These simulations do not include sudden latency spikes that may occur in real deployments.
While \sysname tolerates spikes within the $\delta$ parameter, our results highlight the trade-off in choosing $\delta$.
\cref{fig:sa-reconfig} illustrates how suspicions, that may be triggered by latency spikes, affect configuration latency.
\cref{fig:wiggle-throughputlatency} shows how faulty replicas can exploit a larger $\delta$ to slow down the system.
A smaller $\delta$ increases sensitivity to latency variations, causing more suspicions and reconfigurations, whereas a larger $\delta$ tolerates benign variations but allows faulty replicas to increase latency.

\sysname enables selecting an optimal $\delta$ through historical analysis of recorded latencies.
By systematically analyzing past latency data, \sysname could determine $\delta$ values best suited for varying network conditions, thereby enhancing system resilience and performance.
A comprehensive evaluation of this approach is left for future work.

\begin{figure}[tbp]
    \centering
    \centering
\begin{tikzpicture}

  \def\datafileA{plots/sa-results/bf-6.csv}
  \pgfplotstableread{\datafileA}\datatable
  \def\datafileB{plots/sa-results/bf-7.csv}
  \pgfplotstableread{\datafileB}\datatable
  \def\datafileC{plots/sa-results/bf-8.csv}
  \pgfplotstableread{\datafileC}\datatable
  \def\datafileD{plots/sa-results/bf-9.csv}
  \pgfplotstableread{\datafileD}\datatable
  \def\datafileE{plots/sa-results/bf-10.csv}
  \pgfplotstableread{\datafileE}\datatable
  \def\datafileF{plots/sa-results/bf-11.csv}
  \pgfplotstableread{\datafileF}\datatable
  \def\datafileG{plots/sa-results/bf-12.csv}
  \pgfplotstableread{\datafileG}\datatable
  \def\datafileH{plots/sa-results/bf-13.csv}
  \pgfplotstableread{\datafileH}\datatable
  \def\datafileI{plots/sa-results/bf-14.csv}
  \pgfplotstableread{\datafileI}\datatable

  \begin{axis}[
      width  = \columnwidth,
      height = 3.6cm,
      tick label style={font=\footnotesize},
      label style={font=\footnotesize},
      legend style={at={(1,1.2)},anchor=north east,font=\scriptsize},
      legend columns=3,
      ymajorgrids,
      scaled y ticks=base 10:-5,
      ytick scale label code/.code={},
      xlabel={Simulated Annealing Search Time [s]},
      ylabel={Latency [s]},
      legend entries={$n=57$,$n=91$,$n=111$,$n=157$,$n=183$,$n=211$},
      legend cell align={left},
      major x tick style = transparent,
      ymajorgrids=true,
      yminorgrids=true,
      minor tick length=0.5ex,
      tick align = inside,
      minor grid style={dashed,gray!50},
      minor tick num=1,
      ymin=90000,
      cycle list name=exotic,
    ]

    \addplot+[error bars/.cd, y fixed, y dir=both, y explicit] table[x=bf, y=latency, y error=err, col sep=comma]{\datafileB};
    \addplot+[error bars/.cd, y fixed, y dir=both, y explicit] table[x=bf, y=latency, y error=err, col sep=comma]{\datafileD};
    \addplot+[error bars/.cd, y fixed, y dir=both, y explicit] table[x=bf, y=latency, y error=err, col sep=comma]{\datafileE};
    \addplot+[error bars/.cd, y fixed, y dir=both, y explicit] table[x=bf, y=latency, y error=err, col sep=comma]{\datafileG};
    \addplot+[error bars/.cd, y fixed, y dir=both, y explicit] table[x=bf, y=latency, y error=err, col sep=comma]{\datafileH};
    \addplot+[error bars/.cd, y fixed, y dir=both, y explicit] table[x=bf, y=latency, y error=err, col sep=comma]{\datafileI};

  \end{axis}
\end{tikzpicture}
    \caption{Tree latency improves with longer search time.}
    \Description{Plot showing the tree latency improvement with increasing simulation time.}
    \label{fig:sa-perf}
  \end{figure}

\subsection{Impact of Simulated Annealing on Tree Latency}
\label{eval:optitree:sa}
In this section, we analyze the impact of \sysname's use of simulated annealing (SA) on tree latency in \optitree.
SA stops when the configured \textit{search time} expires, or the cooling temperature reaches a threshold.
We conducted experiments to evaluate how different search times impact the resulting tree's latency.
We vary the tree size from 21 to 211 replicas, and the search time is varied from 250~ms to 4~seconds, doubling with each step.
Each experiment was run 1000 times, and we report the average tree latency with error bars showing the 95\% confidence interval.

\cref{fig:sa-perf} shows how SA improves latency with increasing search time for different configuration sizes.
For smaller tree sizes (21 to 57 replicas), SA found no improvements beyond one second of search time.
For larger tree sizes, the latency improves with increased search time.
For a tree with 211 replicas, a 4-second search resulted in a tree with 35\% lower latency than a 250~ms search.

Since SA is a non-deterministic optimization technique, the variance in latency provides insights into its consistency.
As the search time increases, this variance decreases, indicating that longer search times improve the likelihood of consistently finding better tree configurations.
In addition, we also evaluated SA's impact on throughput during reconfigurations, as detailed in \cref{eval:optitree:reconfig}.

  \begin{figure}[tbp]
  \vspace{-0.1cm}
    \centering
    \centering
\begin{tikzpicture}
  \begin{axis}[
      width  = \columnwidth,
      height = 3.6cm,
      tick label style={font=\footnotesize},
      label style={font=\footnotesize},
      major x tick style = transparent,
      ymajorgrids=true,
      yminorgrids=true,
      legend cell align={left},
      minor grid style={dashed,gray!50},
      minor tick num=1,
      ybar=0.5pt,
      enlarge x limits=0.17,
      bar width=6pt,
      ymajorgrids = true,
      ylabel = {Proposal size [bytes]},
      symbolic x coords={20,40,60,80},
      xtick = data,
      xticklabels = {20,40,60,80},
      xlabel = {Configuration size \n},
      minor tick length=0.5ex,
      tick align = inside,
      every node near coord/.append style={
        anchor=west,
        rotate=90
      },
      ymin=0,
      legend style={at={(0.75,1.32)},anchor=north east,font=\scriptsize},
      legend columns=2,
      legend entries = {Misbehavior+lv, Suspicion+lv, Latency vector (lv), No \sysname}
    ]

    \addplot[style={draw=dark-green, fill=light-green, mark=none, postaction={pattern=north east lines}}]
    coordinates {(20, 10235.734317343173) (40, 12211.060446780552) (60, 14477.091568449683) (80, 16451.324475524474)};

    \addplot[style={draw=db3, fill=lb3, mark=none, postaction={pattern=dots}}]
    coordinates {(20, 9182.68691189050475) (40, 10122.28143133462) (60, 11267.7134646962242) (80, 12186.331005586591)};

    \addplot[style={draw=dark-yellow, fill=light-yellow, mark=none}]
    coordinates {(20, 9115.662790697674) (40,10099.602) (60, 11224.836467702371) (80, 12138.74720670391)};

    \addplot[style={draw=dark-orange, fill=light-orange, mark=none}]
    coordinates {(20, 9095.813284864507) (40, 10014.957170668396) (60, 11002.0206185567) (80, 11918.053941908714)};

  \end{axis}
\end{tikzpicture}
    \caption{Proposal size including different measurements.}
    \Description{Plot showing the average proposal size with various sensors activated.}
    \label{fig:byteoverhead}
  \end{figure}

\subsection{\sysname Overhead}
\label{sect:overhead}
This section evaluates the increase in proposal size due to \sysname's sensors.
\cref{fig:byteoverhead} shows the average proposal size when measurements from different sensors are included, for 20, 40, 60, and 80 replicas distributed across 10 locations.
With 80 replicas, including both latency measurements and suspicions increases the proposal size by about 270 bytes compared to the baseline.
Adding proofs of misbehavior further increases the proposal size by approximately 4.5~KB.
The \missensor includes proofs of misbehavior such as quorum certificates or signatures.
Unlike latency measurements, reports from the \missensor and \sussensor are infrequent, since complaints are raised at most once per replica.
With all sensors active, throughput and latency measurements showed no significant differences in WAN settings, while in a data center environment throughput was reduced by less than 1\% and CPU usage increased by 20\%.

\section{Related Work}

Several prior works~\cite{p2p-acc, network-acc,bft-forensics} use accountability as a tool to enhance security by supporting forensics to identify misbehaving replicas.
PeerReview~\cite{peerreview} introduced a comprehensive method to establish accountability in distributed systems by auditing messages from other replicas.
However, due to the extensive message exchange among replicas, this approach may be too costly for RSMs.
Polygraph~\cite{polygraph} introduced accountable Byzantine agreement by logging protocol messages to justify the decisions made by replicas.
IA-CCF~\cite{ia-ccf} improves on Polygraph by providing individual accountability through the use of \textit{ledgers} and \textit{receipts}.
It relies on trusted execution environments to secure replicas.
Given a set of conflicting receipts and a ledger, any third party can generate proof-of-misbehavior against at least $\n/3$ replicas.
These systems are designed for forensic analysis and do not utilize detected misbehavior to select a suitable configuration.
\sysname instead focuses on utilizing detected misbehavior or suspicions for informed configuration decisions.
While proof-of-misbehavior is a useful part of \sysname, we show that raising suspicions is sufficient to find performant configurations.

Several systems have optimized configuration latency by selecting leaders based on location~\cite{archer, droopy, mencius} or by ranking configurations to identify the optimal one~\cite{wheat, aware, eval}.
However, these approaches lack mechanisms to suspect or hold misbehaving replicas accountable for failed or degraded configurations.
Systems such as Archer~\cite{archer} and Aardvark~\cite{clement2009making} hold only the leader responsible.
In contrast, \sysname's \sussensor detects underperforming replicas, not only leaders, enabling its use in complex configurations such as tree topologies~\cite{kauri}, which may fail despite a correct leader.
RBFT~\citep{rbft} monitors performance through redundant leader executions, allowing it to detect slow leaders and replace misbehaving ones that deliberately degrade performance.
\sysname achieves similar detection without redundant executions, making it more practical for tree-based BFT protocols where the configuration space is vast.

Abstract~\cite{Abstract} introduced a switching mechanism to enable more efficient protocols under favorable conditions and to revert to more resilient protocols when conditions deteriorate.
Similarly, Chenyuan et al.~\cite{adaptive-bft} propose a system to collect and replicate configuration metrics and use them in a reinforcement learning-based mechanism to switch between protocols.
Adapt~\cite{adapt} presents a quality-control system that assesses the suitability of a configuration and protocol based on the current environment.
These systems treat the underlying protocols as black boxes.
Protocol switching is a useful fallback mechanism that complements intra-protocol optimizations, such as the role-assignment optimizations enabled by \sysname.

There are several techniques~\cite{king, idmaps, gnp} for estimating latency among Internet-deployed nodes.
Some approaches~\cite{idmaps} use landmark nodes to estimate latency, while others~\cite{king} rely on DNS name servers.
GNP~\cite{gnp} estimates unknown path latencies by embedding nodes in a geometric space and using Euclidean distance as the predictor.
Variants of these approaches incorporate machine learning, quality-of-service protocols, and application-specific mechanisms.
However, latency estimation is challenging in the presence of faulty nodes and attacks~\cite{robustvirtualcoordsys}. 
Therefore, \sysname employs direct latency measurements on all links.

\section{Conclusion}
RSM optimizations are difficult to maintain in dynamically changing environments, especially in the presence of faulty replicas and potentially manipulated measurements.

\sysname's sensors and monitors tackle this problem by combining collection of local measurement and the execution of non-deterministic computations, with consistent presentations and processing of this information.
Moreover, \sysname selects candidates for special roles and holds faulty replicas accountable to the reported measurements.
For instance, by using \sysname's sensors and monitors, \oware 
can detect-and-mitigate performance attacks and maintain optimal latency. 
Furthermore, we showcase how \optitree leverages \sysname to find more performant configurations and to stay optimized despite failures. 
In a global deployment, \optitree can select tree configurations with up to 39\% lower latency and achieve $2.5\times$ of Kauri's throughput. 
In particular, the estimate on actual faults, provided by \sysname, enables \optitree to balance performance and robustness. %

\begin{acks}
    We would like to thank our shepherd, Pierre-Louis Aublin, and the anonymous reviewers of OSDI 2024, EuroSys 2025 and 2026, for their insightful comments and suggestions that helped improve our paper.
    We also want to thank Arian Baloochestani for participating in early discussions and providing valuable feedback on the initial drafts of the paper.
    This work is partially funded by the BBChain and Credence projects under grants 274451 and 288126 from the Research Council of Norway.
    This work is partially funded by Deutsche Forschungsgemeinschaft (DFG, German Research Foundation) -- \href{https://gepris.dfg.de/gepris/projekt/446811880}{446811880 (BFT2Chain)}.
\end{acks}

\bibliographystyle{ACM-Reference-Format}
\bibliography{references}

\appendix
\section {Notation}
\begin{table}[ht]
  \footnotesize
  \centering
  \begin{tabular}{@{}ll@{}}
    \toprule
    \textbf{Notation}              & \textbf{Description} \\
    \midrule
    \n & total number of replicas in the configuration\\
    \q & number of replicas in a quorum\\
    \f & number of failures tolerated \\
    \tn & actual faults in the configuration \\
    \un & estimate on number of non-crash faults in the configuration\\
    \bn & branching factor of the tree \\
    \inn & the number of internal nodes \\
    \winlen & window length for sliding window mechanism\\
    \kn & number of nodes from which the tree's root collects votes\\
    \na,\nb,\nc,\nd & replicas \\
    \nr & root node of a tree \\
    \nin & intermediate node of a tree \\
    \nleaf{} & leaf node of a tree \\
    \nl & leader node \\
    \naT & a node in a triangle \\
    \nbC & a crashed node\\
    \sens{1} & sensor type 1 \\
    \mons{1} & monitor type 1 \\
    \C{} & the set of all $n$ replicas \\
    \Cfg & a configuration proposal \\
    \Crash & set of crashed replicas \\
    \Faulty & set of provably faulty replicas \\
    \Cand & set of replicas that are candidates for special roles \\
    \G & suspicion graph\\
    \V & set of vertices in \G\\
    \E & set of edges in \G\\
    \Q & quorum of nodes in a configuration\\
    \tree & a tree\\
    \Intermediate & set of all intermediate nodes in a tree\\
    \Internal & set of all internal nodes in a tree\\
    \Leaf & set of all leaf nodes in a tree\\
    \IS & independent set of nodes\\
    \MG & maximal set of disjoint nodes with two-way suspicions\\
    \T & set of nodes that form a triangle with two nodes in $\MG$  \\
    \susp{\na}{\nb} & replica $\na$ suspects replica $\nb$\\
    $\rdur$ & expected round duration\\
    $\dm$ & expected delay from round start to message arrival\\
    $\lat[r]{\na}{\nb}$ & recorded latency between replicas $\na$ and $\nb$\\
    $\lat[a]{\na}{\nb}$ & actual latency between replicas $\na$ and $\nb$\\
    $\agt{\nin}$ & time to aggregate votes at intermediate node $\nin$\\
    $\scorefn$ & general score function\\
    $\score(\un,\tree)$ & the minimum latency to collect votes from $\un+q$ nodes\\
    $\lm$ & latency matrix \\
    $\lme{\na}{\nb}$ & latency matrix element between replicas $\na$ and $\nb$\\
    $\delta$ & multiplier used in timer calculations\\
    \bottomrule
  \end{tabular}
\end{table}

\section{Additional Results}
\label{sec:additional-results}
\subsection{Tree Robustness and Configuration Latency}
\label{app-eval:optitree:score}

When searching for a low-latency tree, one has to choose between optimizing the latency for fault-free cases, or optimizing the latency in presence of faults, resulting in a lower bound for fault-free cases.
\optitree uses the parameter \un to create increasingly robust trees, optimizing for the latency when \un leafs become unresponsive.
Recall that \un is the estimated number of non-crash faults provided by the \susmonitor.
We achieve this by using $\score(\kn,\tree)$ (see \cref{sec:ol-trees}) in simulated annealing.

\begin{figure}[t]
  \centering
  \centering
\begin{tikzpicture}

    \def\datafileA{plots/score_results/4-score.csv}
    \pgfplotstableread{\datafileA}\datatable
    \def\datafileB{plots/score_results/5-score.csv}
    \pgfplotstableread{\datafileB}\datatable
    \def\datafileD{plots/score_results/7-score.csv}
    \pgfplotstableread{\datafileD}\datatable
    \def\datafileE{plots/score_results/8-score.csv}
    \pgfplotstableread{\datafileE}\datatable
    \def\datafileG{plots/score_results/10-score.csv}
    \pgfplotstableread{\datafileG}\datatable
    \def\datafileH{plots/score_results/11-score.csv}
    \pgfplotstableread{\datafileH}\datatable
    \def\datafileI{plots/score_results/12-score.csv}
    \pgfplotstableread{\datafileI}\datatable
    \def\datafileJ{plots/score_results/13-score.csv}
    \pgfplotstableread{\datafileJ}\datatable
    \def\datafileK{plots/score_results/14-score.csv}
    \pgfplotstableread{\datafileK}\datatable

    \begin{axis}[
        width  = \columnwidth,
        height = 3.6cm,
        tick label style={font=\footnotesize},
        label style={font=\footnotesize},
        legend style={at={(0.48,1.39)},anchor=north,font=\scriptsize},
        legend columns=3,
        ymajorgrids,
        xlabel={Percentage of faulty leafs tolerated (\un)},
        ylabel={Latency [s]},
        legend entries={$n=21$,$n=43$,$n=91$,$n=111$,$n=157$,$n=211$},
        legend cell align={left},
        major x tick style = transparent,
        ytick={100000,150000,200000,250000},
        yticklabels={1,1.5,2,2.5},
        ymajorgrids=true,
        yminorgrids=true,
        minor tick length=0.5ex,
        tick align = inside,
        minor grid style={dashed,gray!50},
        minor tick num=1,
        ymin=90000,
        cycle list name=exotic,
        scaled y ticks=false,
    ]

    \addplot+[error bars/.cd, y fixed, y dir=both, y explicit] table[x=faults, y=latency, y error=err, col sep=comma]{\datafileA};
    \addplot+[error bars/.cd, y fixed, y dir=both, y explicit] table[x=faults, y=latency, y error=err, col sep=comma]{\datafileE};
    \addplot+[error bars/.cd, y fixed, y dir=both, y explicit] table[x=faults, y=latency, y error=err, col sep=comma]{\datafileG};
    \addplot+[error bars/.cd, y fixed, y dir=both, y explicit] table[x=faults, y=latency, y error=err, col sep=comma]{\datafileH};
    \addplot+[error bars/.cd, y fixed, y dir=both, y explicit] table[x=faults, y=latency, y error=err, col sep=comma]{\datafileI};
    \addplot+[error bars/.cd, y fixed, y dir=both, y explicit] table[x=faults, y=latency, y error=err, col sep=comma]{\datafileK};

    \end{axis}
\end{tikzpicture}
  \caption{Tree latency (score) degradation as the number of faulty nodes increase. Replicas are randomly distributed across the world.}
  \Description{Plot showing the tree latency degradation with more faulty nodes.}
  \label{fig:scoring-perf}
\end{figure}

We evaluated the cost of this overprovisioning, by varying \un from 5\% to 30\% of the tree size.
\cref{fig:scoring-perf} shows the best score when applying simulated annealing with increasing \un in the score function.
We use latency data for 220 cities from WonderProxy~\cite{wonderproxy}.
With a tree size of 211, latency increases by 54\% when \un is 30\% of the tree size.
We note that \optitree overprovisions only in the presence of adversarial omission and timing faults, not for pure crash faults.

\subsection{\optitree Reconfiguration}
\label{eval:optitree:reconfig}

\begin{figure}[b]
    \centering
    \centering
\begin{tikzpicture}
    \pgfplotsset{compat=newest}
    \def\datafileA{plots/faultsdata/fault-1-th.csv}
    \pgfplotstableread{\datafileA}\datatable
    \begin{axis}[
                    height=3.6cm,
                    width=\linewidth,
        tick label style={font=\footnotesize},
       label style={font=\footnotesize},
       legend style={font=\footnotesize},
       major x tick style = transparent,
      ymajorgrids=true,
      yminorgrids=true,
        minor tick length=0.5ex,
       tick align = inside,
      minor grid style={dashed,gray!50},
      minor tick num=1,
                    xmin=0,xmax=90,
                    ymin=0,
                    xtick={0,10,20,...,90},
                    xlabel={Time [s]},
                    ylabel={Throughput [Op/s]},
            ]
    \addplot+[color=black, mark=+, mark size=1.5pt] table[x=time, y=commands, col sep=comma]{\datafileA};
    \end{axis}
    \end{tikzpicture}
        \vspace{-0.2cm}
    \caption{Reconfiguration experiment with 21 replicas. The root replica fails every 10 seconds, triggering a simulated annealing search.}
    \Description{Plots showing results of a reconfiguration experiment with 21 replicas forming a tree with branching factor 4. The root replica fails every 10 seconds, triggering a simulated annealing search.}
    \label{fig:fault-1-sa}
\end{figure}

We now evaluate \optitree's reconfiguration mechanism.
We consider a scenario with 21 Europe-based replicas, where the root replica becomes faulty and triggers a reconfiguration every 10 seconds.
To make reconfigurations more visible, we do not process any payload, when recording suspicions and searching for a new tree.
\cref{fig:fault-1-sa} shows the throughput observed by a non-faulty replica.
We see that after one second used for simulated annealing, throughput is recovered.
This experiment shows one of the main tradeoffs in \optitree.
\optitree achieves better configurations but incurs higher reconfiguration latency than Kauri, when recovering from failed configurations.
In scenarios where all replicas are equidistant with each other, alternatives like Kauri may perform comparably or better.

\balance
\section{\sysname Correctness}
\label{sec:suspicion_proof}

In this section, we prove correctness of the suspicions monitoring in \sysname, as described in \cref{sec:suspicion}.
We proof the following guarantees:
\begin{enumerate}[label=C\arabic*]
  \item\label{p:candidates} There are always at least $\n-\f$ candidates available in $\Cand$. (\cref{lem:candidates})
  \item\label{p:suspicions} If some correct replica does not commit a new proposal on average, every $\df{\rdur}$, a suspicion is raised. (\cref{lem:suspicions})
  \item\label{p:no-susp} After GST, any suspicion between two correct replicas is filtered in the \susmonitor. (\cref{lem:c2c})
\end{enumerate}

These guarantees ensure that eventually (after GST) the system performs at a steady pace.
This pace can be optimized by searching for a role assignment minimizing $\rdur$, within the candidate set, as is done by our \configsensor.

\paragraph{\ref{p:candidates} -- Sufficient Candidates}
\cref{lem:candidates} ensures \ref{p:candidates}.
This is necessary to ensure that the system can always form a new configuration.

\begin{lemma}
  \label{lem:candidates}
  There are always at least $\n-\f$ candidates available in $\Cand$.
\end{lemma}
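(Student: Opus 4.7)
The plan is to establish $|\Cand|\geq\n-\f$ as an invariant maintained across all state transitions of the \susmonitor. Recall that $\Cand$ is a maximum independent set of the graph $\G=(\V,\E)$ whose vertex set is $\V=\Pi\setminus\Faulty\setminus\Crash$, so bounding $|\Cand|$ from below requires bounding both $|\V|$ and the size of a maximum independent set of~$\G$. I would proceed by induction over the log-ordered sequence of events that can modify the \susmonitor's state.

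For the base case, before any measurement is committed, $\Faulty=\Crash=\emptyset$ and $\E=\emptyset$, so $\Cand=\Pi$ and the invariant holds trivially. For the inductive step, I would enumerate the four events that can change $\Cand$: (i) a newly committed \textsc{Complaint} moves a replica into $\Faulty$; (ii) the $\f+1$-view timer on an unreciprocated suspicion fires and adds a replica to $\Crash$; (iii) a new entry $\suspm{\_}{\na}{\nb}$ adds an edge to $\E$ (treated as two-way by the rule in \cref{sec:suspicion}); and (iv) the sliding-window or overload mechanism discards an old suspicion, removing an edge from $\E$ and possibly a vertex from $\Crash$.

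Event (i) is handled by observing that complaints require unforgeable proofs of misbehavior, so $|\Faulty|\leq\f$ holds unconditionally and this contribution alone cannot push $|\V|$ below $\n-\f$. The crux of the argument concerns events (ii) and (iii), where I would invoke the explicit safeguard of \cref{sec:suspicion}: whenever $\G$ no longer admits an independent set of size $\n-\f$, the monitor deterministically discards suspicions in log order. Modeling this as a fixed-point loop, I would argue that if the loop failed to restore $|\Cand|\geq\n-\f$ it would eventually empty every recorded suspicion, leaving $\G$ edgeless and $\Crash=\emptyset$; then $|\Cand|=|\V|=\n-|\Faulty|\geq\n-\f$, contradicting non-termination. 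Hence the loop halts with the invariant restored after each update, and event (iv) can only enlarge $\Cand$.

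The step I expect to be most delicate is the bookkeeping for event~(ii). Since entries in $\Crash$ are derived from prior one-way suspicions, the discard mechanism of event~(iv) must be defined so that removing such a suspicion from the log also rolls back the corresponding $\Crash$ membership; otherwise the terminal-state argument above would fail because $\Crash$ could retain stale entries. I would therefore state this rollback as an explicit implementation requirement on the \susmonitor, after which the invariant follows purely from the maximum-independent-set check performed on each monitor update.
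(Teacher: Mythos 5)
Your proof is correct and rests on the same pillar as the paper's own argument: the \susmonitor discards old suspicions whenever $\G$ no longer admits an independent set of size $\n-\f$, so the bound holds essentially by construction, with the observation that an edgeless $\G$ (and empty $\Crash$) trivially yields $|\Cand|=\n-|\Faulty|\geq\n-\f$ guaranteeing that the discard process can always succeed. The paper's proof is a three-line version of this (adding only the remark that, when every suspicion involves a faulty replica, the correct replicas already form an independent set of size $\n-\f$); your explicit termination argument for the discard loop and the requirement that discarding a one-way suspicion rolls back the corresponding $\Crash$ entry are details the paper leaves implicit but does assume (cf.\ \cref{lem:oldremove}).
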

\begin{proof}
    As described in \cref{sec:suspicion}, the \susmonitor selects the candidate set $\Cand$ as a maximum independent set in the suspicion graph $\G$.
    Note that, if all suspicions are caused by faulty replicas, then the set of correct replicas is an independent set of size $\n-\f$.
    If no independent set of size $\n-\f$ can be found, old suspicions are removed from $\G$.
\end{proof}

\paragraph{\ref{p:suspicions} -- Suspicion are raised}
Claims \ref{p:suspicions} (\cref{lem:suspicions}) expresses that suspicions are raised, if the system does not make progress as expected.

\cref{tab:suspicion} repeats the conditions for the \sussensor to raise suspicions, as described in \cref{sec:suspicion}.
We repeat it here for better readability.

\begin{table}[ht]
    \caption{Suspicion conditions in \sysname}
    \label{tab:suspicion}
    \footnotesize
    \centering
    \begin{tabular}{@{}p{0.1cm}p{5.25cm}p{1.78cm}@{}}
      \toprule
      \multicolumn{2}{@{}l}{\textbf{Condition for Suspicion}}       & \textbf{Suspicion} \\ \midrule
      (a) & Proposal timestamps not within $\df{\rdur}$ & \suspm{Slow}{\na}{\nl} \\
      (b) & No message $m$ from $\nb$, $\df{\dm}$ after round start & \suspm{Slow}{\na}{\nb} \\
      (c) & False suspicion from $\nb$ on $\na$ & \suspm{False}{\na}{\nb} \\
      \bottomrule
    \end{tabular}
  \end{table}

\begin{lemma}
  \label{lem:suspicions}
  If some correct replica does not complete a round, every $\df{\rdur}$, a suspicion is raised.
\end{lemma}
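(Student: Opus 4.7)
The plan is to prove the contrapositive: during any interval of length $\df{\rdur}$ throughout which no suspicion is raised at some correct replica $\na$, that replica completes a round. The argument proceeds by tracking what $\na$ observes and invoking the protocol's own definitions of $\rdur$ and $\dm$.

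First I would do a case split on condition~(a). If two consecutive proposal timestamps from the leader differ at $\na$ by more than $\df{\rdur}$, then (a) fires immediately and we are done. Otherwise, the current round begins at a well-defined timestamp $T_0$ observable by $\na$, and the round should complete by $T_0 + \df{\rdur}$. Next I would rule out condition~(b) over the same interval: if no suspicion \suspm{Slow}{\na}{\nb} fires, then for every protocol message $m$ that $\na$ needs to progress, the sender $\nb$ delivers $m$ to $\na$ within $\df{\dm}$ of $T_0$. I would then invoke the interface contract required of the underlying protocol, namely that $\rdur$ dominates the causal composition of the per-message delays $\dm$ along a quorum-forming chain. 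Under that contract, if the timestamp is on time \emph{and} every required message is on time, the round must commit at $\na$ by $T_0 + \df{\rdur}$, contradicting the assumption that $\na$ does not complete the round.

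The main obstacle is making precise the relationship between $\rdur$ and the individual $\dm$ values. This is not a property of \sysname itself but a requirement that any underlying protocol must supply (as stated in \cref{sec:suspicion} and instantiated for Aware in \cref{sec:oware} and for trees in \cref{sec:ol-trees}, where \scorefn simultaneously defines $\rdur$ and the $\dm$ values by summing latencies along causally preceding messages). I would therefore state this dominance property as an explicit assumption on the protocol interface, and then the lemma follows directly from the case split above. Note that condition~(c) plays no role here; it is a reciprocation mechanism needed for the separate guarantee \ref{p:no-susp} that distinguishes crash faults from misbehaving replicas, not for detecting stalled rounds.
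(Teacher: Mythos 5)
Your proposal is correct and follows essentially the same route as the paper's proof: the paper also argues the contrapositive, observing that condition~(b) forces every message needed to commit to arrive within $\df{\dm}$ of the proposal timestamp and condition~(a) forces consecutive timestamps to lie within $\df{\rdur}$, so that absent suspicions the replica completes rounds on pace. The only difference is that you make explicit the dominance of $\rdur$ over the causally composed $\dm$ values as a protocol-interface assumption, whereas the paper leaves this implicit in the phrase ``rounds are finished timely'' (it is justified separately by the timeout requirements \ref{tr:dml}--\ref{tr:rdur}); your version is the more careful statement of the same argument.
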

\begin{proof}
  Let $\na$ be a correct replica, that commits a proposal in round $r_i$ after receiving messages $m_1, ... m_k$ and let $t_i^p$ be the proposal timestamp of round $r_i$.
  If any message $m\in \{m_1, ... m_k\}$ is not received by $\na$ at most $\df{\dm}$ after $t_i^p$, then the \sussensor raises a suspicion following condition (b).
  Additionally, the \sussensor raises a suspicion (b) if consecutive proposal timestamps are not within $\df{\rdur}$, i.e. if $t_{i+1}^p - t_i^p > \df{\rdur}$.
  Thus, if no suspicions are raised by $\na$ proposal timestamps appear at least every $\df{\rdur}$ and rounds are finished timely after the proposal timestamp.
\end{proof}

We note that Claim \ref{p:suspicions} follows from \cref{lem:suspicions} only for protocols that commit a proposal in one round.
This is the case for PBFT based protocols.
In HotStuff and derived protocols, like Kauri and \optitree a proposal is not committed in one round.
Instead in steady operation in every round, the proposal of a previous round is committed. Thus the claim holds also here.

\paragraph{\ref{p:no-susp} -- Filtered false suspicions}
\cref{lem:c2c} proves \ref{p:no-susp}, ensuring that suspicions which are not filtered after GST are useful.

Clearly, to avoid false suspicions between correct replicas requires that timeout durations $\dm$ and $\rdur$ are set correctly.
The timeout requirements listed as \ref{tr:dml}-\ref{tr:rdur} below give precise requirements for these timeouts.
Below we argue how these requirements are satisfied in \oware and \optitree.

\begin{enumerate}[label=TR\arabic*]
    \item\label{tr:dml} Let $m$ be a message sent from the leader $\nl$ to replica $\na$ directly after creating the proposal. Then $\dm$ is configured correctly if: $\dm = \lat[r]{\nl}{\na}$.
    \item\label{tr:dm} Let $m$ be a message sent from replica $\na$ to replica $\nb$.
    Then there exist an earlier message $m'$ sent to $\na$, such that $\dm=d_{m'} + \lat[r]{\na}{\nb}$.

    \item\label{tr:rdur} There exists a message $m$ sent to the leader $\nl$, such that $\rdur=d_m$.
\end{enumerate}

\begin{example} \textbf{Timeouts in \oware:}
In \oware, the timeout $\dm$ and $\rdur$ are set according to the timeout requirements. Below we discuss how to set timeouts for individual messages and the round timeout. We use message types from the Aware (BFT-Smart) protocol, but also mention equivalent types from PBFT.
\begin{itemize}
    \item For a Propose message (Preprepare in PBFT) $m_p$, $d_{m_p} = \lat[r]{\nl}{\na}$ and thus \ref{tr:dml} holds.
    \item Write messages in Aware (Prepare in PBFT) are sent after receiving a Propose message. For a Write message $m_w$ from $\na$ to $\nb$,
    $d_{m_w} = \lat[r]{\nl}{\na} + \lat[r]{\na}{\nb}$.
    Thus, \ref{tr:dm} holds for $m_w$ with $m'=m_p$.
    \item Accept messages in Aware (Commit in PBFT) are sent after receiving a quorum of Write messages. Thus, for an Accept message $m_a$ sent by $\na$ to $\nb$, the following equation holds, where \textit{quorums} is the set of all quorums of Write messages sent to $\na$:
    \[
      \dm = \min_{q \in \textit{quorums}} \left(\max_{m_w \in q}d_{m_w}\right) + \lat[r]{\na}{\nb}
    \]
    Thus, \ref{tr:dm} holds for $m_a$ with $m'$ being the write message with the slowest expected duration, in the fastest quorum of write message sent to $\na$.
    \item Finally, in Aware, the round is concluded when the leader $\nl$ receives a quorum of Accept messages. Thus the round duration $\rdur$ can be computed as follows:, where \textit{quorums} is the set of all quorums of Accept messages sent to $\nl$:
    \[
        \rdur = \min_{q \in \textit{quorums}} \left(\max_{m_a \in q}d_{m_a}\right)
    \]
    Thus, \ref{tr:rdur} holds with $m'$ being the accept message with the slowest expected duration, in the fastest quorum of accept messages sent to $\nl$.
\end{itemize}
We note that the $\rdur$ developed above is the same as the result of the score function defined by Aware.
These timeouts are used in \oware.
\end{example}

\begin{lemma}
  \label{lem:c2c}
  If \ref{tr:dml} - \ref{tr:rdur} hold, then after GST, any suspicion between two correct replicas is filtered in the \susmonitor.
\end{lemma}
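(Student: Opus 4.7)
The plan is to prove by induction on the causal order of messages within a single round that, after GST, none of the suspicion conditions (a), (b), or (c) ever hold between two correct replicas; together with the sliding-window rule in the \susmonitor, which drops old suspicions either after $\winlen$ quiet views or when $\G$ becomes too dense, this handles pre-GST edges and yields the lemma.

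The first ingredient is a simple latency bound: since post-GST measurements lie in $[\lat[a]{\na}{\nb}, \df{\lat[a]{\na}{\nb}}]$, we have $\lat[a]{\na}{\nb} \leq \lme{\na}{\nb}$, so any actual transit between correct $\na,\nb$ takes at most $\df{\lme{\na}{\nb}}$ time. The central claim I would prove by induction on the causal order of messages in a round is: every message $m$ exchanged between correct replicas arrives at its destination by $t_p + \df{\dm}$, where $t_p$ is the leader's proposal timestamp. The base case invokes TR1: the leader's proposal $m_p$ reaches correct $\na$ by $t_p + \df{\lat[r]{\nl}{\na}} = t_p + \df{d_{m_p}}$. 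The inductive step invokes TR2: a correct $\na$, upon receiving $m'$ by $t_p + \df{d_{m'}}$ (by IH), sends $m$ immediately to correct $\nb$, which then arrives by $t_p + \delta \cdot (d_{m'} + \lat[r]{\na}{\nb}) = t_p + \df{\dm}$. Hence condition (b) never triggers between correct replicas. For (a), TR3 supplies a message $m$ reaching the leader with $\dm = \rdur$; by the same induction a correct leader closes the round and issues the next proposal timestamp within $\df{\rdur}$. For (c), a False suspicion is only raised as a reciprocation of a Slow one, so the absence of (a)/(b) Slow suspicions between correct replicas rules it out.

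I expect the main obstacle to be the inductive step when $\dm$ (or $\rdur$) is defined as a \emph{minimum over quorums of a maximum} of predecessor delays, which is exactly the shape of the Accept and round-duration bounds in the \oware instantiation. Applying the IH to a single predecessor is insufficient here; I would strengthen the statement so that, for such $m$, one fixes the specific quorum attaining the minimum in the definition of $\dm$, applies the IH to each message in that quorum, and recovers the $\df{\dm}$ bound by taking the max. Once this quorum-delay lemma is in place, the three condition-specific arguments above preclude any new post-GST suspicion between correct replicas, and any suspicion inherited from before GST is eventually discarded by the sliding-window (or overload) mechanism of the \susmonitor, completing the proof.
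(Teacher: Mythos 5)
There is a genuine gap, and it sits exactly where you flagged the ``main obstacle.'' Your central inductive claim---that after GST every message $m$ between correct replicas arrives by $t_p + \df{\dm}$, so that conditions (a)--(c) never hold between correct replicas---is false in the presence of faulty replicas. For quorum-dependent messages (e.g., Accept messages, and $\rdur$ itself), $\dm$ is a minimum over \emph{all} quorums of predecessor delays, including quorums containing faulty replicas whose recorded latencies may be understated. A correct replica $\na$ only sends such an $m$ after \emph{actually receiving} some quorum, and faulty replicas can withhold or delay their messages so that no quorum arrives at $\na$ by $t_p + \df{d_{m'}}$ for the minimizing $m'$. Then $\na$ sends $m$ late, $\nb$ raises \susp{\nb}{\na}, and both are correct. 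Your proposed repair---fix the quorum attaining the minimum and apply the induction hypothesis to each of its messages---does not close this, because the IH only applies to messages from correct senders, and the minimizing quorum need not consist of correct senders. The same failure propagates to your arguments for (a) and (c), which are conditioned on the absence of (b)-suspicions between correct replicas.

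The paper's proof takes a different route that is forced by this very phenomenon: it does not show that suspicions between correct replicas are never raised, only that they are \emph{filtered}, which is all the lemma claims. The argument is causal rather than temporal: by \ref{tr:dm}, $d_m = d_{m'} + \lat[r]{\na}{\nb}$, and the actual transit of $m$ between correct replicas is at most $\df{\lat[r]{\na}{\nb}}$; hence if $\nb$ suspects correct $\na$ for a late $m$, then $\na$ must itself have been waiting past $\df{d_{m'}}$ for some enabling message $m'$, and so $\na$ already raised \susp{\na}{\nc} against some $\nc$ in an earlier protocol phase. The \susmonitor's rule of retaining only the first suspicion per round (and filtering a round-$(i{+}1)$ timestamp suspicion when the leader suspected a message in round $i$, which handles condition (a) via \ref{tr:rdur}) then discards the later suspicion against $\na$; condition (c) follows because the reciprocated suspicion is only triggered by a suspicion that is itself filtered. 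To fix your proof you would need to abandon the ``never raised'' claim and instead prove this causal-precedence property, at which point you are essentially reconstructing the paper's argument. Your handling of pre-GST suspicions via the sliding-window mechanism is fine but orthogonal to the core of the lemma.
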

\begin{proof}
  The \susmonitor filters suspicions as explained in \cref{sec:suspicion}:
  \textit{For each round, the \susmonitor retains only the first suspicion, discarding suspicions from later protocol phases.
  Additionally, if the leader raises suspicion for a message in round $i$, suspicions against a delayed proposal timestamp in round $i+1$ are also filtered.}\\
  Assume that the latency between correct replicas $\na$ and $\nb$ has been measured and recorded in the latency matrix after GST.
  According to our system model in \cref{sec:prelim}, after GST, the latency between \na and \nb lies in the interval $[\lat[a]{\na}{\nb}, \df{\lat[a]{\na}{\nb}}]$.
  Especially, for the recorded latency $\df{\lat[a]{\na}{\nb}} < \df{\lat[r]{\na}{\nb}}$ holds.
  Thus, any message $m$ sent from \na to \nb after GST, will have a latency, less than $\df{\lat[r]{\na}{\nb}}$.
  It follows from \ref{tr:dml} that if \na is the leader, a message $m$ sent directly after the proposal timestamp, will be received within $\df{\lat[r]{\nl}{\nb}}=\df{\dm}$.
  For other message, due to \ref{tr:dm}, the expected message duration for $d_m$ is set to $\lat[r]{\na}{\nb}$ after another message $m'$.
  Thus, if a suspicion \susp{\nb}{\na} is raised, based on a delayed message $m$, \na will also have raised a suspicion \susp{\na}{\nc} based on some message $m'$ enabling sending $m$.
  Thus, suspicion \susp{\nb}{\na} will be filtered.

  If a correct leader \nl is suspected based on Condition (a) in \cref{tab:suspicion}, the leader did not start a new round on time.
  According to \ref{tr:rdur} this means \nl did not receive a quorum of messages in time.
  Thus, \nl will have raised a suspicion in the previous round and the suspicion against \nl will be filtered.

  Finally, a correct replica \na will only raise \susp{\na}{\nb} based on Condition (c) in \cref{tab:suspicion} if a suspicion \susp{\nb}{\na} was already registered.
  If \nb is correct this suspicion \susp{\nb}{\na} will be filtered and \susp{\na}{\nb} will not be raised.
\end{proof}

\section{\optitree Correctness}
\label{sec:security_analysis}

In this section, we prove correctness properties for \optitree.
We note that \optitree uses a different version of the \susmonitor and computes the candidate set differently, as explained in \cref{sec:tree-suspicion}.
We therefore need to prove \ref{p:candidates} again as \ref{pt:candidates}.
Similarly, to show \ref{pt:no-susp}, we have to show that the timeout conditions \ref{tr:dml}-\ref{tr:rdur} hold for \optitree.
\optitree provides the following guarantees:
\begin{enumerate}[label=CT\arabic*]
  \item\label{pt:candidates} There are always enough candidates available to form a tree. (\cref{thm:treeform})
  \item\label{pt:suspicions} If a tree configuration fails and the system reverts to a star topology, sufficient suspicions will be recorded to update the candidate set and invalidate the failed tree configuration. (\cref{lem:treesuspicions} \&~\ref{lem:incu})
  \item\label{pt:no-susp} After GST, any suspicion between two correct replicas is filtered in the \susmonitor. (\cref{lem:tree-c2c})
  \item\label{pt:limit} After GST, at most $2\f$ reconfigurations are needed to form a correct tree. Specifically, if $\tn < \f$ replicas are faulty, at most $2\tn$ reconfigurations are needed. (\cref{thm:2t})
\end{enumerate}

\paragraph{\ref{pt:candidates} -- Sufficient Candidates}
Faulty replicas can cast suspicions on non-faulty replicas to exclude them from being selected as internal nodes.
However, \optitree can continue to form and reconfigure to new trees if there are enough candidate replicas to select the internal nodes.
The process for selecting candidate internal nodes is detailed in \cref{sec:tree-suspicion}.
It involves constructing $\MG$ from the suspicion graph \G.
\cref{thm:treeform} shows that enough candidate replicas are always available to form a tree.

\begin{theorem}
  \label{thm:treeform}
  There are always enough candidate replicas available for selecting the internal nodes to form a tree.
\end{theorem}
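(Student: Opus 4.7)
The plan is to bound the set of replicas excluded from $\Cand$ and show it is small enough to leave at least $b+1$ candidates, which is the number of internal nodes a tree needs. By construction, $\Cand = V(\G) \setminus (V(\MG) \cup \T)$ with $V(\G) = \C{} \setminus (\Faulty \cup \Crash)$, so it suffices to show $|\Faulty \cup \Crash| + 2|\MG| + |\T| \le n - (b+1)$.

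First I would argue $|\Faulty \cup \Crash| \le f$: replicas in $\Faulty$ carry a proof-of-misbehavior, and replicas in $\Crash$ have remained one-way suspected for $f+1$ leader rotations, so after GST both sets only contain genuinely faulty processes. Next, to bound $|\MG|$, I would invoke \cref{lem:tree-c2c} (the tree-topology counterpart of \cref{lem:c2c}) so that every two-way suspicion between correct replicas is filtered, forcing at least one faulty endpoint on each edge of $\MG$; since $\MG$ is a matching, these faulty endpoints are distinct and $|\Faulty \cup \Crash| + |\MG| \le f$.

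For $|\T|$, I would apply the same filtering principle to each triangle $\{u,v,w\}$ with $(u,v) \in \MG$ and $w \in \T$: all three edges are two-way suspicions, so at least two of $\{u,v,w\}$ are faulty. A charging argument then associates the vertices of $\T$ with faulty replicas, and together with the sliding-window discard policy of \cref{sec:suspicion}, which drops old suspicions once the exclusion invariant is threatened, yields $|\MG| + |\T| \le f$. Summing the three bounds gives $|\Cand| \ge n - 2f$; since $n \ge 3f+1$ and $b+1 = \Theta(\sqrt{n})$ while $f+1 = \Omega(n)$, we obtain $|\Cand| \ge f+1 \ge b+1$.

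The hard part will be tightening the bound on $|\T|$: after GST nothing prevents arbitrarily many correct $w$'s from legitimately forming triangles with a pair $(u,v) \in \MG$ of two faulty endpoints, since the suspicions $(u,w)$ and $(v,w)$ involve faulty endpoints and are therefore not filtered by \cref{lem:tree-c2c}. I therefore expect the decisive ingredient to be the discard-on-overflow rule: whenever the triangle population would push $|\Cand|$ below $b+1$, the oldest suspicions are evicted from $\G$, restoring the invariant while keeping the most recent evidence against actual faulty replicas intact.
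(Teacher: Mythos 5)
Your overall decomposition (bound the excluded set $\Faulty\cup\Crash\cup V(\MG)\cup\T$ and compare it against the $\bn+1$ internal positions) matches the paper's, but there are two genuine gaps. First, the theorem claims the property holds \emph{always}, whereas your accounting is fault-based and leans on \cref{lem:tree-c2c} and after-GST reasoning at every step. The paper's proof never invokes GST: it counts against the independent set \IS of size at least $\n-\f$ that the \susmonitor mechanically maintains in \G (discarding old suspicions whenever that invariant would break), not against the set of genuinely correct replicas. Each edge of \MG has at most one endpoint in \IS, and each triangle formed by an \MG edge and a \T vertex has at most one of its three vertices in \IS; so in every such group the excluded \IS-vertices are outnumbered by the excluded non-\IS-vertices, of which there are at most \f in total. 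This yields at least $\n-2\f\ge\f+1$ candidates unconditionally, and $\f+1>\sqn+1$ for $\n\ge 13$.

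Second---and this is the gap you yourself flag---your bound on $|\T|$ is missing, and your proposed fix (the discard-on-overflow rule) is not what closes it. That rule only restores an independent set of size $\n-\f$ in \G; it says nothing directly about $|\T|$ or about keeping $|\Cand|\ge\bn+1$. The decisive ingredient is the \emph{maximality} of the matching \MG: if two vertices $w_1,w_2\in\T$ both formed triangles with the same edge $(u,v)\in\MG$, then either $(w_1,w_2)\in\E$ and could be added to \MG, or the disjoint edges $(u,w_1)$ and $(v,w_2)$ could replace $(u,v)$, enlarging \MG; either way maximality is contradicted. Hence each \MG edge is charged at most one \T vertex, so $|\T|\le|\MG|$, and the triangle count above goes through. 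Without this observation your worry about arbitrarily many correct $w$'s joining \T is justified and the proof does not close.
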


\begin{proof}
  A tree configuration of size $\n \ge 3\f+1$ with a branch factor $\bn \leq \sqn$ requires at most $\sqn+1$ replicas as internal nodes.
  As explained in \cref{sec:suspicion}, the graph $\G$ always contains an independent set~($\IS$) of at least $\n-\f$ replicas.
  Then, we construct $\MG$ and $\T$ from $\G$, where $\MG$ is a maximum set of disjoint edges and $\T$ is a set of nodes that form a triangle in $\G$ with an edge in $\MG$.
  For any edge in $\MG$, at most one of the two vertices is in $\IS$.
  Since $\MG$ is disjoint, each vertex belongs to at most one edge in $\MG$.
  Moreover, for any edge in $\MG$ that forms a triangle in $\G$, only one of the three vertices forming the triangle can be in $\IS$.

  Due to the maximality of $\MG$, any edge in $\MG$ can be part of at most one triangle involving a replica in $\T$.
  Therefore, if at most $\f$ replicas are outside $\IS$, the number of vertices in the independent set adjacent to edges in $\MG$ or in $\T$ is at most $\f$.
  Consequently, the remaining replicas in $\IS$ (at least $\f+1$) are available as candidates for internal nodes.

  For $\n \ge 13$, we have $\sqn < \f$, implying that for configuration sizes of 13 and above, there are always enough candidate replicas available for selecting internal nodes.
\end{proof}

\paragraph{\ref{pt:suspicions} -- Suspicions are raised}
It follows from \cref{lem:suspicions}, that suspicions are raised if a tree configuration does not collect votes in time.
The following Lemmas show additional properties on which suspicions are raised.
This ensures that the candidate set is updated and the failed tree configuration is no longer valid.
Additionally, we use the properties about what suspicions are raised in the proof of \cref{thm:2t} to show that after enough trees have failed, faulty replicas are excluded from the candidate set.

\begin{lemma}
  \label{lem:treesuspicions}
  When a tree fails, either (1) one suspicion between internal nodes is raised, or (2) $\un+1$ leaves, not in $\Crash$ get suspected.
\end{lemma}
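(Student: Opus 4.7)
The plan is to analyze the ways a tree can fail to collect $\q$ votes within its timeout $\df{\rdur}$, where $\rdur = \score(\q+\un, \tree)$ per \cref{def:score}. The minimizing subset $M^\ast \subseteq \Intermediate$ witnessing $\score(\q+\un,\tree)$ identifies subtrees whose expected contributors, together with the root, number at least $\q+\un$. Since \optitree draws its tree nodes from $\V = \C{}\setminus(\Faulty\cup\Crash)$, every contributor counted inside $M^\ast$ lies outside $\Crash$. Because the tree fails, the root collects at most $\q-1$ votes by the timeout, so at least $\un+1$ of these expected contributors fail to deliver a vote on time.

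Next I would split on where the shortfall appears. In the first case, some intermediate $I$ inside $M^\ast$ does not deliver its aggregate to $\nr$ within $\df{\rdur}$. Then by suspicion condition (b) from \cref{sec:suspicion}, $\nr$ raises \suspm{Slow}{\nr}{I}, a suspicion between two internal nodes, yielding outcome~(1). In the second case, every intermediate in $M^\ast$ delivers its aggregate to $\nr$ on time. The \optitree-specific misbehavior rule from \cref{sec:ol-trees} requires each such aggregate to contain $\bn+1$ entries that are either votes or suspicions; otherwise $\nr$ would hold a proof-of-misbehavior against the offending intermediate. Hence every missing vote corresponds to a suspicion embedded in some aggregate, raised by an intermediate against one of its leaf children. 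Consequently the $\un+1$ (or more) missing contributions translate into suspicions against $\un+1$ distinct leaves, and since $M^\ast$ contains no $\Crash$ replicas, these leaves lie outside $\Crash$, yielding outcome~(2).

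The step I expect to be the main obstacle is tightening the opening accounting: precisely arguing that all $\q+\un$ expected contributors in $M^\ast$ lie outside $\Crash$, and that once intermediate aggregates arrive on time the only way for the root to fall short of a vote from some expected contributor is via an embedded suspicion against one of the intermediate's leaf children. Once that is pinned down, the case split itself is mechanical: suspicion condition (b) handles case~(1), and the \optitree aggregate-completeness rule handles case~(2).
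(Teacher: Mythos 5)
Your case split matches the paper's: if an expected intermediate aggregate is late the root suspects that intermediate (outcome~1), and if all aggregates arrive on time then the aggregate-completeness rule forces the missing votes to appear as suspicions against at least $\un+1$ leaves (outcome~2). However, there is a genuine gap: you analyze the failure entirely from the perspective of a functioning root that is collecting votes and raising suspicions, and you never treat the case where the root itself is the faulty party. A tree also fails when the root crashes or never issues a timely proposal; in that case the root raises nothing, and the suspicion of type~(1) must instead come from the \emph{other} internal nodes suspecting the root (via condition~(a) on proposal timestamps, or condition~(b) on the missing proposal). The paper's proof opens with exactly this case (``if the root fails, it is suspected by other internal nodes''), and without it your dichotomy is not exhaustive.

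A secondary weak point is your justification of the ``not in $\Crash$'' clause. You assert that every expected contributor lies outside $\Crash$ because tree nodes are drawn from $\V=\C{}\setminus(\Faulty\cup\Crash)$, but the paper only restricts \emph{internal} nodes to the candidate set $\Cand$; leaf positions are filled by the remaining replicas, which may include members of $\Crash$. The clause instead has to come from the accounting of how many votes the root waits for ($\q+\un$, counted among replicas it can still expect to respond), so that at least $\un+1$ of the freshly suspected leaves are new, non-$\Crash$ suspicions. The paper is admittedly terse on this point as well, but your stated justification, as written, does not hold.
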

\begin{proof}
  A tree fails, if the replica at the root fails, or cannot collect $\n-\f$ votes in time.
  If the root fails, it is suspected by other internal nodes, matching (1) from the Lemma.
  Otherwise, the root waits for aggregates from internal nodes, which together aggregate more than $\n-\f+\un$ votes.
  Thus, there are two possibilities. Either at least $\un+1$ leaves are missing and suspected by their parents, matching (2), or one internal node is either missing or did send a faulty aggregate, and is suspected by the root, matching (1).
\end{proof}

\begin{lemma}
  \label{lem:incu}
  If a tree failed, either: (i) $|\MG|$ increases, or (ii) $|\T|$ increases and $|\MG^n|$ stays unchanged.
\end{lemma}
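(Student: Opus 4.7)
The plan is to case-split on the form of the suspicion supplied by \cref{lem:treesuspicions} and track how the new $\G$-edges affect the pair $(|\MG|,|\T|)$. Let $\MG_0,\T_0$ and $\MG_1,\T_1$ denote these sets before and after the failure-induced suspicions are logged.

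First I would dispose of Case~(1) of \cref{lem:treesuspicions}, namely a new suspicion between two internal nodes $A$ and $B$. Since $A,B\in\Cand$ by construction, neither is incident to an edge of $\MG_0$, nor a member of $\T_0$. The new edge $(A,B)$ is therefore disjoint from $\MG_0$, so $\MG_0$ ceases to be maximal and the incremental maintenance must extend it with $(A,B)$, giving $|\MG_1|=|\MG_0|+1$ and case~(i).

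Next, for Case~(2), the new $\G$-edges are $(P_i,L_i)$ for $i=1,\dots,\un+1$, where each $P_i\in\Cand$ is an internal parent. If some $L_i$ is not incident to any edge of $\MG_0$ (that is, $L_i\in\Cand\cup\T_0$), then $(P_i,L_i)$ is disjoint from $\MG_0$ and case~(i) follows as above. Otherwise, every $L_i$ lies on some $\MG_0$-edge; since the $L_i$ are distinct and $\un+1=|\MG_0|+|\T_0|+1>|\MG_0|$, while $\MG_0$ supplies only $2|\MG_0|$ endpoint slots, pigeonhole forces some edge $(L_j,L_k)\in\MG_0$ to have both its endpoints suspected. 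I would then split on whether $P_j=P_k$. If $P_j\ne P_k$, the two new edges $(P_j,L_j)$ and $(P_k,L_k)$ are mutually disjoint and, together with $\MG_0\setminus\{(L_j,L_k)\}$, form a disjoint set of size $|\MG_0|+1$; the ``remove one, add two'' step of the $\MG$-maintenance heuristic realises this swap, giving case~(i). If $P_j=P_k=P$, then $P$ is joined to both $L_j$ and $L_k$ in $\G$, so $P$ lies in a triangle with the $\MG_0$-edge $(L_j,L_k)$: either the heuristic augments $\MG$ through $P$ (again case~(i)), or it leaves $\MG_1=\MG_0$, in which case $P$ --- which was in $\Cand$ and hence not in $\T_0$ --- joins $\T_1$, yielding case~(ii).

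The hard part will be justifying the $\MG$-maintenance step rigorously: I must verify that whenever an augmenting ``remove one, add two'' swap exists, the heuristic described in \cref{sec:tree-suspicion} actually performs it, so that cases (i) and (ii) truly exhaust all outcomes. A short invariant on the swap should suffice. As a side check, while remaining in case~(ii) I must ensure that no existing $\T$-vertex silently drops out; this is straightforward because $\G$ is monotone (edges are only added) and $\MG$ is held fixed, so triangles through the unchanged $\MG_0$-edges can only grow in number.
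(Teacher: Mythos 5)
Your proof is correct and follows essentially the same route as the paper's: reduce to the two cases of \cref{lem:treesuspicions}, use the fact that candidates (internal parents) are not adjacent to $\MG$-edges, apply pigeonhole via $|\MG|\leq\un$ to find an $\MG$-edge with both endpoints freshly suspected, and conclude with the ``replace one edge by two, or add a vertex to $\T$'' dichotomy. Your treatment is somewhat more explicit than the paper's (you spell out the shared-parent versus distinct-parent split and check that $\T$ does not silently shrink), but the decomposition and key ideas are the same.
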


\begin{proof}
  All suspicions are added to the graph $\G$.
  In case (1) of Lemma~\ref{lem:suspicions}, the new suspicion can be added to $\MG$ and $\T$ remains unchanged.
  In case (2), $\un+1$ edges are added to $\G$.
  These new edges include $\un+1$ different leaves from the failed tree. These leaves, can be already adjacent to an edge in $\MG$ or part of $\T$.
  There are three cases:
a) one leaf was not part of $\MG$ or $\T$. The new edge can be added to $\MG$ giving case (i) of the Lemma.
b) one leaf was part of $\T$. The new edge can be added to $\MG$ and $|\T|$ is reduced by 1, giving case (i) of the Lemma.
(3) all leaves are part of $\MG$, but since $|\MG|\leq \un$, at least two suspected leaves are already connected in $\MG$. This allows to either replace one edge in $\MG$ by two new ones, or add a new node to $\T$.
\end{proof}

\paragraph{\ref{pt:no-susp} -- Eventually no false suspicions}
\cref{lem:tree-c2c} proves \ref{p:no-susp}, ensuring that suspicions which are not filtered after GST are useful.

\begin{lemma}
  \label{lem:tree-c2c}
  After GST, any suspicion between two correct replicas is filtered in the \susmonitor.
\end{lemma}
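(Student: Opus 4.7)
The plan is to reduce \cref{lem:tree-c2c} to \cref{lem:c2c} by verifying that \optitree's tree-based round satisfies the timeout requirements \ref{tr:dml}--\ref{tr:rdur}. Since \cref{lem:c2c} already establishes the filtering property whenever those requirements hold, the remaining work is to instantiate $\dm$ and $\rdur$ for \optitree's message types and check each condition symbolically.

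First, I would enumerate the four message types in one round of \optitree: (i) the root $\nr$'s proposal to each intermediate child $\nin$, (ii) each intermediate's forwarded proposal to its leaf children $\nleafT$, (iii) each leaf's vote back to its parent intermediate, and (iv) each intermediate's aggregate back to the root. Type~(i) matches \ref{tr:dml} with $\dm = \lat[r]{\nr}{\nin}$. For types (ii)--(iv), I would set $\dm$ as the sum of the causal predecessor's duration and the single link latency traversed, witnessing \ref{tr:dm}: for (ii), the predecessor is the proposal of type~(i); for (iii), the predecessor is the proposal of type~(ii); for (iv), the predecessor is the slowest vote received at $\nin$, i.e., the witness of $\agt{\nin}$. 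Finally, I would establish \ref{tr:rdur} by letting $m$ be the slowest aggregate in the subset $M^* \in \Intermediate_{\kn-1}$ realizing the minimum in \cref{def:score}, so that $\rdur = d_m$ by construction.

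Second, I would handle the leaf-suspicion optimization from \cref{sec:ol-trees}, where intermediate nodes suspect their leaf children via round-trip times rather than proposal-relative deadlines, and condition~(b) of \cref{tab:suspicion} is omitted at leaves. For a correct parent--leaf pair, the post-GST system-model bound keeps the round-trip within $\df{\lat[r]{\nin}{\nleafT}}$, so the round-trip check never fires between correct nodes; reciprocation~(c) and the added misbehavior rule for invalid aggregates are then filtered by the same symmetry argument as in \cref{lem:c2c}.

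The hard part will be handling the $\min$-over-subsets shape of \cref{def:score} when applying \ref{tr:rdur}: the requirement expects a single concrete predecessor $m$, whereas the root legitimately accepts any subset $M$ of aggregates that collectively covers $\kn - 1$ child replicas. I would resolve this by fixing $M^*$ to be the deterministic minimizer induced by the latency matrix $\lm$ and selecting $m$ as the slowest aggregate in $M^*$; this choice is consistent across correct replicas and matches the causal flow of a successful round, mirroring the Accept-message construction used in the \oware example but extended one level deeper through the aggregation at intermediates. With these instantiations fixed, the remainder of the proof is a direct appeal to \cref{lem:c2c}.
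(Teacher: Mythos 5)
Your proposal is correct and follows essentially the same route as the paper: it reduces the claim to \cref{lem:c2c} by instantiating \ref{tr:dml}--\ref{tr:rdur} for the four tree message types with the same causal predecessors (Propose $\to$ Forwarded Propose $\to$ Vote $\to$ Aggregated Vote, with $\rdur$ realized by the minimizing subset in $\Intermediate_{\kn-1}$ from \cref{def:score}), and it also covers the leaf round-trip simplification that the paper handles in its closing remark.
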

\begin{proof}
  We need to show that the timeout requirements \ref{tr:dml} - \ref{tr:rdur} hold for \optitree.
  The Lemma then follows from \cref{lem:c2c}.

  We show that \ref{tr:dml} - \ref{tr:rdur} hold by showing how the timeouts (or durations) for the different messages are set.
  The different messages in the tree are:
  \begin{itemize}
    \item Propose messages from the root \nr to the intermediate nodes.
    \item Forwarded Propose messages from intermediate nodes to the leaf nodes.
    \item Vote messages from the leaf nodes to the intermediate nodes.
    \item Aggregated Vote messages from the intermediate nodes to the root.
  \end{itemize}

  The message timeout  $d_{m_p}$ for a Propose message from the root is set to the latency of the used link, ensuring \ref{tr:dml}.

  Forwarded Propose messages are sent directly after receiving the Propose message and Vote messages are sent directly after receiving a forwarded Propose.
  The message timeout for these messages is set by simply adding the link latency to the timeout of the previous message.
  Thus \ref{tr:dm} holds, since for a Forwarded Propose message, $m'$ is the Propose message, and for a Vote message, $m'$ is the Forwarded Propose message.

  An intermediate nodes $\nin$ sends the Aggregated Vote message after receiving Vote messages from all its children, or after waiting for the timeout for all these Vote messages.
  Thus, the timeout for the Aggregated Vote message is set according to \ref{tr:dm}, where $m'$ is the Vote message with largest timeout.

  Finally, we note that $\rdur$ is set according to \ref{tr:rdur}., where \textit{quorums} is the set $\Intermediate_k$ defined in the score function from \cref{def:score} in \cref{sec:ol-trees}.

  We note that in \optitree, we actually omit raising suspicions on Forwarded Propose messages.
  The reason for this is that, if the Vote message is delayed and the Leaf node is suspected by the intermediate node, then the leaf node will anyway reciprocate the suspicion.
  The intermediate node cannot omit the suspicion, if the Vote is delayed, without itself being suspected by the root.
  Additionally, when calculating $d_m$ for Vote messages, we set the timeout based on when the Forwarded Propose was sent, rather than the Proposal timestamp.
\end{proof}

\paragraph{\ref{pt:limit} -- Limiting the number of reconfigurations}
To prove our main theorem, about the number of reconfigurations needed to find a correct tree,
we consider new subsets $\MG^n\subset\MG$, $\T^n\subset\T$, and $\Crash^n\subset\Crash$ which are added to the respective sets after GST.
Lemma~\ref{lem:found} shows that the size of these data-structures reflects the number of faulty replicas removed from the candidate set.
Lemma~\ref{lem:incu} shows that for any two failed trees, the size of these data-structures increases by at least one.
Lemmas~\ref{lem:nodecu} and~\ref{lem:oldremove} show that other mechanisms do not reduce the size.
We denote the total size of these structures as $\tn^n=|\Crash^n|+|\MG^n|+|\T^n|$.
Remember that $\tn\leq \f$ is the number of actually faulty replicas in the system.

\begin{lemma}
  \label{lem:found}
  If $\tn^n=\tn$, then a working configuration is found.
\end{lemma}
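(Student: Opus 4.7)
The plan is to argue that $\tn^n = |\Crash^n|+|\MG^n|+|\T^n|$ lower-bounds the number of distinct faulty replicas excluded from $\Cand$ after GST; hence the hypothesis $\tn^n=\tn$ forces every faulty replica into the exclusion sets, leaving $\Cand$ composed only of correct replicas and yielding a working tree on the next reconfiguration.

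First, I would invoke \cref{lem:tree-c2c}: after GST, any edge between two correct replicas is filtered from $\G$, so every post-GST edge has at least one faulty endpoint, and every triangle composed of such post-GST edges contains at most one correct vertex. The crash-detection window of $\f+1$ leader changes further ensures that no correct replica enters $\Crash^n$, since a correct replica always reciprocates a Slow suspicion with a False suspicion in time. Hence every $v\in\Crash^n$ is faulty, every $e\in\MG^n$ contributes at least one faulty endpoint to the excluded set, and every $v\in\T^n$ lies in a triangle with at most one correct vertex.

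Next, I would build an injection $\phi$ from $\Crash^n\cup\MG^n\cup\T^n$ into the set of faulty replicas: send each $v\in\Crash^n$ to itself, each $e\in\MG^n$ to a canonical faulty endpoint $\phi(e)$, and each $v\in\T^n$ to $v$ if $v$ is faulty, and otherwise to the faulty vertex of the associated triangle distinct from $\phi(e)$. The main obstacle, and the step that requires the most care, is injectivity when two correct vertices of $\T^n$ form triangles with the same edge of $\MG^n$. I would resolve this by invoking the maintained maximality of $\MG$: in that scenario, the \susmonitor's maintenance step (remove one edge, add two disjoint ones) could strictly increase the matching by using the conflicting $\T^n$-vertices, contradicting maximality. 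Hence, at steady state post-GST, the map $\phi$ is well defined and injective.

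Finally, under the hypothesis $\tn^n=\tn$, the image of $\phi$ exhausts all $\tn$ faulty replicas, so $\Cand$ contains only correct replicas. By \cref{thm:treeform}, $\Cand$ still contains enough replicas to fill every internal-node position, so the \cfgsensor proposes a tree with only correct internal nodes, which, as established in \cref{sec:kauri}, constitutes a working configuration.
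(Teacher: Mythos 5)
Your proof is correct and follows essentially the same counting argument as the paper's own proof, merely making explicit (as an injection $\phi$ into the faulty set) what the paper counts informally, and using the same maximality-of-$\MG$ observation to rule out two $\T^n$-vertices sharing a triangle edge. The one sub-case you gloss over --- which the paper's proof addresses in its final sentence --- is a vertex of $\T^n$ whose triangle uses an edge of $\MG$ that is \emph{not} in $\MG^n$, so that $\phi(e)$ is undefined; there, if $v$ is correct, both post-GST edges $(v,u)$ and $(v,w)$ still force $u$ and $w$ to be faulty, so the map extends without collision and the argument goes through.
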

\begin{proof}
  After GST, no two correct replicas raise suspicions on each other and all suspicions on correct replicas are reciprocated. Therefore, if a replica was added to $\Crash^n$ it is indeed faulty.
  Further, if a new edge was added to $\G$, then at least one of the adjacent replicas is faulty.
  Thus, for every edge in $\MG^n$, at least one of the adjacent nodes is faulty, and for each triangle build from a replica in $\T^n$ and an edge in $\MG^n$, at least two of the adjacent replicas are faulty.
  Finally, for a triangle in $\G$ containing a replica in $\T^n$ and an edge in $\MG$ but not in $\MG^n$, at least one of the adjacent replicas is faulty.
  Thus $\tn^n$ many faulty replicas have been removed from the candidate set for internal nodes.
\end{proof}

\begin{lemma}
  \label{lem:incugst}
  If a tree failed after GST, either: (i) $|\MG^n|$ increases, or (ii) $|\T^n|$ increases and $|\MG^n|$ stays unchanged.
  Also, either $\tn'$ increases, or $\tn'$ stays constant and $\T^n$ decreases.
\end{lemma}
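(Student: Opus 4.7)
The plan is to mirror the case analysis of Lemma~\ref{lem:incu}, but restricted to the post-GST subsets $\MG^n, \T^n, \Crash^n$. The key enabling fact is Lemma~\ref{lem:tree-c2c}: after GST, every suspicion that survives filtering in the \susmonitor involves at least one faulty replica, so every edge added to $\G$ after GST can be placed into $\MG^n$ (or produce a vertex in $\T^n$) without excluding a correct replica from $\Cand$. Combined with Lemma~\ref{lem:treesuspicions}, which lists the two ways a failed tree generates suspicions, this lets me track exactly what happens to the ``new'' sets.

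First, I invoke Lemma~\ref{lem:treesuspicions} to obtain either (1) a single suspicion between two internal nodes, or (2) at least $\un+1$ suspicions against distinct leaves not in $\Crash$. For case (1) the suspected edge is fresh (the current tree was valid, so neither endpoint was excluded), and I consider two sub-cases: if neither endpoint is incident to an edge of $\MG$, the new edge is added to $\MG^n$ directly, giving clause (i); if exactly one endpoint already lies on an $\MG$-edge, then the other endpoint joins $\T^n$, giving clause (ii). The ``swap'' case in the construction of $\MG$ can only apply when at least two edges are added, so for case~(1) it does not arise.

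Case~(2) is the delicate one, and I expect it to be the main obstacle. Because $|\MG|\leq \un$ but we introduce $\un+1$ edges, by pigeonhole at least two of the newly suspected leaves are simultaneously incident to an existing edge of $\MG$. Following the argument of Lemma~\ref{lem:incu}, this forces either (a) a replacement step removing one edge of $\MG$ and inserting two new ones, both in $\MG^n$, or (b) the creation of a triangle that contributes a new vertex to $\T^n$. The bookkeeping challenge is that the removed edge of $\MG$ could be either pre-GST or post-GST; I would split on this and check that in every sub-case at least one of $|\MG^n|$ or $|\T^n|$ strictly increases in the way claimed by clauses (i)/(ii).

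For the second part (reading $\tn'$ as $\tn^n=|\Crash^n|+|\MG^n|+|\T^n|$), I trace each sub-case to show $\tn^n$ never decreases. Pure additions to $\MG^n$, $\T^n$, or $\Crash^n$ strictly increase $\tn^n$. The only scenario in which $\tn^n$ fails to increase is the replacement step when the removed $\MG$-edge lies in $\MG^n$ and one of its endpoints was the support of a vertex in $\T^n$ that now becomes incident to a new $\MG^n$-edge: here $|\T^n|$ drops by one while $|\MG^n|$ rises by one, leaving $\tn^n$ unchanged but \emph{strictly decreasing} $\T^n$, which matches the stated dichotomy. Verifying that no other configuration can leave $\tn^n$ constant without shrinking $\T^n$ is the bookkeeping step that closes the proof.
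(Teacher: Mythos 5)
Your overall strategy is the paper's own: replay the case analysis of \cref{lem:incu} on the post-GST subsets, starting from the two failure modes of \cref{lem:treesuspicions} and using the pigeonhole $|\MG|\leq\un$ against the $\un+1$ new edges. However, two concrete steps in your write-up would fail as stated. First, in case (2) you apply the pigeonhole unconditionally, but the conclusion ``two newly suspected leaves share an edge of $\MG$'' only follows once every one of the $\un+1$ leaves is already incident to an edge of $\MG$. The paper therefore splits case (2) into three sub-cases before pigeonholing: (a) some leaf is incident to no $\MG$-edge and not in $\T$, so its new edge joins $\MG^n$ outright; (b) some leaf lies in $\T$, so its new edge joins $\MG^n$ while $|\T|$ (possibly $|\T^n|$) drops by one; and only in (c), where all leaves are covered by $\MG$, does the pigeonhole/replacement argument apply. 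Your plan skips (a) and (b) entirely, and (b) is not a technicality: it is exactly the sub-case that realizes the ``$\tn^n$ stays constant and $\T^n$ decreases'' branch of the second claim. Consequently your assertion that the replacement step is ``the only scenario'' in which $\tn^n$ fails to increase is wrong; the paper in fact argues that the replacement step \emph{does} increase $\tn'$, and locates the constant-$\tn^n$ scenario in sub-case (b).

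Second, in case (1) your fallback sub-case (``if exactly one endpoint already lies on an $\MG$-edge, the other endpoint joins $\T^n$'') misuses the definition of $\T$: membership in $\T$ requires forming a triangle with \emph{both} endpoints of an edge in $\MG$, not mere adjacency to one of them. This sub-case is vacuous anyway (both internal nodes of the failed tree were in $\Cand$, hence incident to no $\MG$-edge and outside $\T$), so the fix is simply to delete it and add the fresh edge to $\MG^n$ as the paper does. Finally, your appeal to \cref{lem:tree-c2c} is not needed for this lemma; the fact that post-GST edges touch a faulty replica is what \cref{lem:found} uses, whereas \cref{lem:incugst} is purely combinatorial bookkeeping on $\G$, $\MG^n$, $\T^n$, and $\Crash^n$.
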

This Lemma is a variation of \cref{lem:incu} and the proof follows the same schema.

\begin{proof}
  All suspicions are added to the graph $\G$.
  In case (1) of \cref{lem:suspicions}, the new suspicion can be added to $\MG^n$ and $\T$ remains unchanged, increasing $\tn'$.
  In case (2), $\un+1$ edges are added to $\G$.
  These new edges include $\un+1$ different leaves from the failed tree. These leaves, can be already adjacent to an edge in $\MG$ or part of $\T$.
  There are three cases:
a) one leaf was not part of $\MG$ or $\T$. The new edge can be added to $\MG^n$ giving case (i) of the Lemma.
b) one leaf was part of $\T$. The new edge can be added to $\MG^n$ and $|\T|$, possibly also $|\T^n|$ is reduced by 1, giving case (ii) of the Lemma. In this case, $\tn'$ stays constant.
(3) all leaves are part of $\MG$, but since $|\MG|\leq \un$, at least two suspected leaves are already connected in $\MG$. This allows to either replace one edge in $\MG$ by two in $\MG^n$, or add a new node to $\T^n$. Also here $\tn'$ increases.
\end{proof}

If a new suspicion is raised between two nodes in $G$, it is first treated as a two way suspicion and added to $\G$. If no reciprocation happens during the next $\f+1$ views, then the suspected replica is added to $\Crash$.
The next Lemma proves that these changes do not decrease $\tn'$.

\begin{lemma}
\label{lem:nodecu}
If nodes are added to $\Crash$ due to not reciprocated suspicions, this does not change $\tn'$, and does not increase $\T^n$.
\end{lemma}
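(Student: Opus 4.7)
The plan is to decompose $\tn^n = |\Crash^n| + |\MG^n| + |\T^n|$ and track each component as $\nb$ is moved from $\V$ to $\Crash$. Because the event is after GST, $\nb$ joins $\Crash^n$, so $|\Crash^n|$ grows by one; the first claim thus reduces to exhibiting a compensating drop of one in $|\MG^n| + |\T^n|$, and the second claim reduces to ruling out any net gain in $|\T^n|$.

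First I would trace back the specific one-way suspicion $\susp{\na}{\nb}$ responsible for this event. When it was originally raised, its single-edge insertion into $\G$ triggered one of three mutually exclusive updates of $\MG$ and $\T$: (A) $(\na,\nb)$ was inserted directly into $\MG$ and hence into $\MG^n$; (B) $(\na,\nb)$ stayed out of $\MG$ but, together with a preexisting $\MG$-edge incident to $\na$ or $\nb$, formed a new triangle pushing exactly one vertex into $\T^n$; or (C) the reshuffling rule fired, ejecting one $\MG$-edge and inserting two new ones, at least one of which is incident to $\nb$ and therefore lives in $\MG^n$.

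Next I would match each case to the cleanup. In case (A), removing $\nb$ deletes $(\na,\nb)$ from $\MG$, so $|\MG^n|$ drops by one. In case (B), the supporting triangle contains $\nb$ (either as its apex or as an endpoint of the $\MG$-edge), so deleting $\nb$ destroys that triangle; since the $\T^n$-vertex was added precisely because of this triangle and was not in $\T$ before, it leaves $\T^n$ and $|\T^n|$ drops by one. In case (C), at least one of the two new $\MG^n$-edges disappears with $\nb$, and the subsequent re-maximalization either reinserts the previously ejected edge or leaves the second new edge in place, producing a net drop of one in $|\MG^n|$. In all three cases $|\Crash^n|$ rises by one while $|\MG^n| + |\T^n|$ falls by one, so $\tn^n$ is unchanged.

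Finally, I would argue that $|\T^n|$ cannot grow through the re-maximalization of $\MG$ after $\nb$ is removed. Any edge $(u,v) \in \G$ newly promoted into $\MG$ must have had both endpoints previously blocked by $\MG$-edges incident to $\nb$; hence, for every triangle $(u,v,w)$ that now places $w$ into $\T$ via $(u,v)$, the triangle $(u,\nb,w)$ or $(v,\nb,w)$ already existed in $\G$ with an $\MG$-edge incident to $\nb$, so $w$ was already in $\T$ and any contribution it made to $\T^n$ was already counted. The main obstacle I expect is case (C): cleanly enumerating how the two replacement edges interact with existing $\T$-vertices, and ensuring the re-maximalization argument remains tight when several $\MG$-partner slots are freed simultaneously.
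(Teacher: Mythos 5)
Your bookkeeping is keyed to what happened to the edge $(\na,\nb)$ at the moment it was \emph{inserted} into $\G$, but that classification need not survive until $\nb$ is actually moved to $\Crash$: the monitor re-maximalizes $\MG$ (possibly ejecting $(\na,\nb)$ or reshuffling the edges around $\na$ and $\nb$) every time a later suspicion arrives during the up to $\f+1$ views in between. The paper's proof instead reasons only about the state at removal time: whatever $\MG^n$-edge is incident to $\nb$ \emph{then} is the one whose loss is paired with the new entry in $\Crash^n$, and if that loss also destroys the triangle supporting some vertex of $\T^n$, the freed third edge of that triangle is promoted back into $\MG^n$, restoring the balance. Your case (B) misses exactly this second pairing: when $\nb$ is an endpoint of the triangle's $\MG$-edge, crashing $\nb$ removes both the $\MG^n$-edge and the $\T^n$-vertex, so the drop is two unless the re-maximalization step is invoked --- a step you only deploy for the separate claim that $\T^n$ does not grow.

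A second, independent problem is that cases (A)--(C) are not exhaustive. If, at insertion time, an endpoint of $(\na,\nb)$ is already matched, the reshuffling rule does not fire, and no triangle is closed, then the edge enters $\G$ without touching $\MG$ or $\T$ at all; when $\nb$ is later crashed there is no compensating drop in $|\MG^n|+|\T^n|$, so your claimed exact $+1/-1$ balance fails. (The paper's own proof sidesteps this by asserting the pairings only in the direction ``each removed $\MG^n$-edge comes with a new $\Crash^n$ entry,'' which is all that the downstream counting in Theorem~\ref{thm:2t} --- where $\tn'$ must not \emph{decrease} --- actually requires.) To repair your write-up you would need to (i) restate the case split in terms of the removal-time adjacency of $\nb$ to $\MG$ and $\T$, and (ii) weaken the per-case conclusion from ``$|\MG^n|+|\T^n|$ falls by exactly one'' to ``any loss from $\T^n$ is compensated by a promotion into $\MG^n$, and any loss from $\MG^n$ is charged to the new $\Crash^n$ entry.''
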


\begin{proof}
Clearly, if an edge is removed from $\MG^n$, then a replica is added to $\Crash^n$.
In some cases the removal removing an edge from $\MG$ causes that a replica is removed from $\T^n$ and instead, a new edge is added to $\MG^n$.
Thus $\tn'$ does not change.
\end{proof}

All suspicions added after GST include at least one faulty replica. Thus these suspicions leave an independent set of $\n-\f$ correct replicas.
However, suspicions made before GST may be successively removed after GST.

\begin{lemma}
\label{lem:oldremove}
If suspicions made before GST are discarded, removing replicas from $\Crash$, or removing edges from $\G$, $\tn'$ does not decrease, and $\T^n$ does not increase.
\end{lemma}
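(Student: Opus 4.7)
The plan is to proceed by case analysis on the discarding operation, of which there are essentially two kinds: a replica is removed from $\Crash$, or an edge is removed from $\G$ (which in turn may or may not have been in $\MG$). For each case I would verify the two assertions separately: that $\tn'$ does not decrease, and that $|\T^n|$ does not increase. Throughout, the central observation is that any discarded element is pre-GST, so it is never the element itself whose loss could decrease $|\Crash^n|$, $|\MG^n|$ or $|\T^n|$; the only way these post-GST counts can change is indirectly, through a reshape of $\MG$ (and hence $\T$) induced by the removal.

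The $\Crash$ case is the easiest: since the removed replica was added to $\Crash$ pre-GST, it is not in $\Crash^n$, so $|\Crash^n|$ is unchanged. Its suspicion edges in $\G$ are also pre-GST (they were what put it in $\Crash$ in the first place), so their removal does not introduce any post-GST structure. Consequently $\MG$ and $\T$ cannot acquire new post-GST members, and both claims hold.

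For the removal of a pre-GST edge $e$ from $\G$, I would split on whether $e \in \MG$. If $e \notin \MG$, then $\MG$ is unaffected by the removal, but $\T$ may lose a node $v$ whose only supporting triangle used $e$. If $v \in \T^n$, then by definition $v$ entered $\T$ post-GST, so its triangle must have contained a second post-GST edge incident to $v$; once $e$ is removed and the triangle is gone, that post-GST edge is no longer blocked and can be placed into $\MG$, contributing a new element to $|\MG^n|$ that compensates the loss in $|\T^n|$. If $e \in \MG$, then $e \notin \MG^n$ (it is pre-GST), so its removal from $\MG$ does not decrease $|\MG^n|$, and the algorithm rebuilds $\MG$ maximally. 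The same exchange idea applies: each post-GST node $v$ dropped from $\T^n$ because its supporting triangle used $e = (u,w)$ has a distinguished post-GST side edge $(u,v)$ or $(v,w)$, and these edges can be promoted into the rebuilt $\MG$.

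The main obstacle is making the exchange argument rigorous and simultaneous: one must show that a single new maximal $\MG$ exists in which every post-GST node dropped from $\T^n$ is matched by a distinct post-GST edge entering $\MG^n$. My plan is to associate to each such $v$ a witness post-GST edge incident to $v$ and to observe that, since these witnesses are incident to pairwise distinct vertices $v$, they form a set of edges on disjoint endpoints (once $e$ is removed), hence are simultaneously addable to a maximal $\MG$. For the non-increase of $|\T^n|$, the argument is complementary: no new post-GST edges appear anywhere in the graph, so any node that would newly enter $\T^n$ after a reshape must be doing so via already-existing post-GST edges, and the reshape can always be chosen to instead absorb those edges into $\MG^n$ rather than let their endpoints become new $\T^n$ entries. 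I expect the bookkeeping for this simultaneous choice of reshape to be the delicate step, since an arbitrary maximal $\MG$ might produce a worse post-GST balance; the proof should therefore explicitly construct the reshape that prefers promoting post-GST edges into $\MG$ over creating new post-GST entries in $\T$.
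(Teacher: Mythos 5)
Your proposal is correct and rests on the same two observations as the paper's proof: pre-GST elements cannot belong to $\Crash^n$, $\MG^n$, or $\T^n$, so their removal only matters indirectly, and a node dropped from $\T^n$ when its supporting triangle breaks is by construction adjacent to a post-GST edge that can be promoted into $\MG^n$, leaving $\tn'$ unchanged. You are in fact more thorough than the paper, which only treats removal of an edge in $\MG$ and leaves the non-$\MG$ edge case, the $\Crash$ case, and the simultaneity of the exchange implicit.
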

\begin{proof}
Clearly this does not reduce $\Crash^n$ and $\MG^n$.
However, if an edge in $\MG$ is removed, it can be that a replica in $\T^n$ is no longer part of a triangle.
However, the replica in $\T^n$ is adjacent to an edge added after GST. This edge can now be added to $\MG^n$, leaving $\tn'$ unchanged.
\end{proof}

\begin{theorem}
\label{thm:2t}
After GST, \optitree finds a working configuration after at most $2\tn$ reconfigurations.
\end{theorem}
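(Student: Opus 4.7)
The plan is a bookkeeping argument on the potential $\tn^n = |\Crash^n| + |\MG^n| + |\T^n|$, which counts faulty replicas that \optitree has newly excluded from the internal-node candidate set after GST. Since the superscript-$n$ sets start empty at GST we have $\tn^n = 0$ initially, and by \cref{lem:found} a working tree is produced as soon as $\tn^n = \tn$. It therefore suffices to upper bound the number of failed reconfigurations required to drive $\tn^n$ from $0$ up to $\tn$.

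I would then refine \cref{lem:incugst} into the three micro-cases that are visible in its proof. Case~A: an edge enters $\MG^n$ and $\T^n$ is untouched, so $\tn^n$ goes up by one. Case~B: an edge enters $\MG^n$ while a vertex simultaneously leaves $\T^n$, so $\tn^n$ is unchanged. Case~C: a vertex enters $\T^n$ and $\tn^n$ goes up by one. \Cref{lem:nodecu,lem:oldremove} rule out spurious decreases of $\tn^n$ or spurious increases of $\T^n$ from background activity, namely elevating unreciprocated suspicions into $\Crash$ or discarding stale pre-GST suspicions. Together these facts make $\tn^n$ monotonically non-decreasing across reconfigurations, and $\T^n$ grow only through type-C steps.

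The main obstacle to reaching exactly $2\tn$ is handling type-B steps, which make no visible progress on $\tn^n$. The key observation is that each type-B step consumes one vertex of $\T^n$, and new vertices enter $\T^n$ only via type-C steps. Since type-A and type-C steps each add one unit to $\tn^n$, together they occur at most $\tn$ times after GST; in particular, at most $\tn$ vertices ever enter $\T^n$, so type-B steps occur at most $\tn$ times. Summing, at most $2\tn$ failed reconfigurations can happen before \cref{lem:found} forces a working tree. The sharper statement for $\tn<\f$ drops out of the same argument, because the budget is the actual number of faults rather than the worst-case bound~$\f$.
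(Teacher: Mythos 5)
Your proposal is correct and follows essentially the same route as the paper: both arguments use \cref{lem:found} to reduce the claim to driving the potential $\tn^n$ from $0$ to $\tn$, invoke \cref{lem:incugst} (with \cref{lem:nodecu,lem:oldremove} excluding spurious changes) to classify each failed reconfiguration as either incrementing $\tn^n$ or consuming a vertex of $\T^n$, and then observe that the non-progress steps are charged against prior increments, yielding the $2\tn$ bound. Your explicit three-case refinement merely spells out what the paper's proof states more tersely.
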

\begin{proof}
Based on Lemma~\ref{lem:found}, we only have to show that after $2\tn$ reconfigurations, $\tn'=\tn$.
According to Lemma~\ref{lem:incu}, after a reconfiguration, either $\tn'$ increases, or $\tn'$ stays constant and $\T^n$ decreases.
Since $\T^n$ can only decrease by $k$ after $\T^n$ and $\tn'$
have been increased by $k$. Thus at most $\tn'$ reconfigurations can leave $\tn'$ unchanged.
\end{proof}

\end{document}